\definecolor{darkred}{rgb}{0.6, 0.1, 0.1}
\definecolor{darkblue}{rgb}{0.2, 0.2, 0.6}
\definecolor{darkgreen}{rgb}{0.2, 0.4,.1}
\definecolor{mellowyellow}{rgb}{1,.8,.2}
\definecolor{bettercyan}{rgb}{0.1, 0.4, 0.7}
\definecolor{bettermagenta}{rgb}{0.6, 0.2, 0.6}
\pgfplotsset{compat=1.15}
\renewcommand*{\bibnamedash}{%
	\leavevmode\raise +0.6ex\hbox to 5.5ex{\hrulefill}.\space\space}
\newtheorem{propositionx}{Proposition}[section]
\newenvironment{proposition}
{\pushQED{\qed}\propositionx}
{\popQED\endpropositionx}
\newtheorem{theorem}{Theorem}
\newtheorem{corollaryx}[propositionx]{Corollary}
\newenvironment{lemma}
{\pushQED{\qed}\lemmax}
{\popQED\endlemmax}
\theoremstyle{definition}
\newtheorem{definition}[propositionx]{Definition}
\newtheorem{example}[propositionx]{Example}
\theoremstyle{remark}
\newtheorem*{remark*}{Remark}
\newcommand{\bbC}{\mathbb{C}}
\newcommand{\bbN}{\mathbb{N}}
\newcommand{\bbR}{\mathbb{R}}
\newcommand{\bbS}{\mathbb{S}}
\newcommand{\calD}{\mathcal{D}}
\newcommand{\calE}{\mathcal{E}}
\newcommand{\calF}{\mathcal{F}}
\newcommand{\calL}{\mathcal{L}}
\newcommand{\calP}{\mathcal{P}}
\newcommand{\calR}{\mathcal{R}}
\newcommand{\calS}{\mathcal{S}}
\newcommand{\calV}{\mathcal{V}}
\newcommand{\calX}{\mathcal{X}}
\newcommand{\calY}{\mathcal{Y}}
\newcommand{\bfa}{\mathbf{a}}
\newcommand{\bfx}{\mathbf{x}}
\newcommand{\dd}{\,\mathrm{d}}
\newcommand{\supp}{\operatorname{supp}}
\newcommand{\WF}{\operatorname{WF}}
\DeclareMathOperator{\Op}{Op}
\newcommand{\RR}{\mathbb{R}}
\newcommand{\NN}{\mathbb{N}}
\newcommand{\loc}{\text{loc}}
\newcommand{\pa}{\partial}
\newcommand{\scb}{\mathrm{sc-b}}
\renewcommand{\sc}{\text{sc}}
\newcommand{\tF}{\tilde{F}}
\renewcommand{\Im}{\operatorname{Im}}
\renewcommand{\Re}{\operatorname{Re}}
\title[On the convergence of the Born series]{On the convergence of the Born series for Coulomb potentials}
\date{Last updated: August 7th, 2026}
\author{Ethan Sussman}
\author{Jared Wunsch}
\begin{document}
	
\begin{abstract}
	We provide a short proof of the convergence of the Born series on asymptotically conic manifolds, at sufficiently high energy. The potential is allowed to have multiple Coulomb singularities. This is handled using powerful semiclassical estimates recently proven by Hintz for the case of a single dipole (or better) singularity. The potential is also allowed to be long-range, like the actual Coulomb potential $1/r$; long-range potentials are handled using anisotropic semiclassical (sc-) Sobolev spaces. As a consequence of the above estimates, we show the existence of a resonance-free region for Hamiltonians with multiple screened Coulomb singularities.
\end{abstract}

\maketitle

\tableofcontents

\section{Introduction}

Let $X$ be a non-trapping asymptotically conic manifold \cite{Me94,MelroseGeometric}, with dimension $d\geq 3$. 
For example, the manifold-with-boundary $X$ could be the radial compactification of $\bbR^d$ (the ``exactly Euclidean'' case), 
\begin{equation}
    \overline{\bbR^d} = \bbR^d\sqcup \infty \bbS^{d-1},
\end{equation}
in which case $X^\circ = \bbR^d$; the reader may then forget $\partial X=\infty \bbS^{d-1}$, for most purposes. Part of the data of an asymptotically conic manifold is a Riemannian metric on the interior; in the exactly Euclidean case, this is the Euclidean metric.
Suppose we mark some finite number $\#\in \bbN$ of points, $p_1,\dots,p_\# \in X^\circ$.
Consider a possibly long-range potential with Coulomb singularities at those marked points, but smooth otherwise. That is,
\begin{equation}\label{eq:Coulomb}
	V\in r^{-1} C^\infty(X;\bbR), 
\end{equation}
where $r \in C^0(X^\circ ; \bbR^{\geq 0})$ is a function such that
\begin{itemize}
        \item $r^2 \in C^\infty(X^\circ;\bbR)$, so $r^2$ is smooth at each marked point (note that $X^\circ$ includes the marked points), 
        \item $r^2$ is a quadratic-defining-function of the union of the marked points, which means that 
        \begin{enumerate}[label=(\roman*)]
        \item for $x\in X^\circ$, we have $r(x)=0$ if and only if $x$ is one of our marked points,  
        \item the Hessian of $r^2$ is non-degenerate at each marked point,
        \end{enumerate}
	\item $\langle r \rangle^{-1} \in C^\infty(X)$ is a boundary-defining-function of $\partial X$.
\end{itemize}
(When $X=\overline{\bbR^d}$ and we are interested in a single marked point, then $r$ can be the distance to the marked point. However, our main interest is in situations with multiple cone points.)\footnote{Actually, in the body of this paper, we will allow (real-valued) $V \in r^{-1} S^0([X;\text{markings}])$, where, for any compact manifold-with-boundary $Y$, $S^0(Y)$ denotes the set of functions smooth on the interior which are conormal to the boundary. Here, $[X;\text{markings}]$ is the result of blowing up each marked point in $X$, replacing it with a $\bbS^{d-1}$. This effectively means passing to spherical coordinates near each marked point. Being conormal of weight 0 means remaining bounded under powers of the tangential vector fields $r\pa_r$ and $\pa_\theta$.}

In this note, we are interested in the high energy behavior of the
``limiting resolvent'' 
\begin{equation}
	R_V(\sigma \pm i0) = \operatorname{s-lim}_{\varepsilon \to 0^+} (\underbrace{\triangle + V}_P - \sigma^2 \mp i\varepsilon )^{-1}, \quad \sigma>0
\end{equation}
associated to the Schr\"odinger operator 
\begin{equation} 
P=\triangle+V.
\end{equation}
Here, $\triangle$ is the Laplace--Beltrami
operator on $X$ (the metric dependence being suppressed in the
notation) with the sign convention that $\triangle$ is positive semi-definite.  High energy means $\sigma\to\infty$; $\sigma$ is the ``frequency'' and $\sigma^2$ the ``energy.'' The various objects of interest in scattering theory (perturbed plane waves, the scattering matrix, the spectral measure, etc.)\ can be written in terms of the two limiting resolvents.

When the potential is small, one expects the methods of perturbation theory to be available, in particular, the \emph{Born} 
(a.k.a.\ Neumann) \emph{series}
\begin{equation}
R_V(\sigma \pm i0) \overset{?}{=} \sum_{j=0}^\infty R_0(\sigma \pm i0) (-M_V R_0(\sigma \pm i0))^j,
\end{equation}
relating $R_V$ to the free resolvent $R_0$. Here, $M_V:u\mapsto Vu$ is the usual multiplication operator. 
At issue is the convergence of this series --- whether it converges, and if so, whether it converges to $R_V$.
Indeed, when the potential is sufficiently small, 
\begin{equation}
    R_V(\sigma \pm i0) = \lim_{J\to\infty}R_0(\sigma \pm i0)  \sum_{j=0}^J (-M_V R_0(\sigma \pm i0))^j,
\end{equation}
in suitable operator-norm topologies.
In the Euclidean case, this is very classical.
Convergence is \emph{also} true if $\sigma$ is sufficiently large, regardless of whether or not $V$ is small. The purpose of this note is to prove this in a rather general setting.

The most elementary result regards uniform $L^2_{\mathrm{loc}}\to L^2_{\mathrm{loc}}$ bounds, in the 
$\sigma \to \infty$ limit, of the terms in the Born series.  Note that the
$j=0$ term is already well-studied \cite{VaZw00} and enjoys the estimate
\begin{equation} \label{eq:sharp_0}
\lVert M_\chi R_0(\sigma\pm i 0) M_\chi \rVert_{L^2 \to
  L^2}=O\Big(\frac{1}{\sigma}\Big).
\end{equation}
 
For shorthand, let
\begin{equation}
B_j(\sigma \pm i0) = R_0(\sigma\pm i0)  (-M_V R_0(\sigma\pm i0))^j.
\label{eq:Born_series}
\end{equation} 
denote the $j$th term in the Born series, $\mathsf{B}(\sigma \pm i0) = \sum_{j=0}^\infty B_j(\sigma \pm i0)$.\footnote{We use the sans-serif font `$\mathsf{B}$' to denote that the Born series should be considered as a formal series until proven otherwise.}
We prove:
\begin{theorem}\label{thm:main_0}
For all $\chi\in C_{\mathrm{c}}^\infty(X^\circ)$ and $\varepsilon,\Sigma>0$,  
there exists a constant $C>0$ such that the $j$th term $B_j(\sigma \pm i0)$ in the Born series obeys a bound 
\begin{equation}
\lVert M_\chi B_j(\sigma \pm i0) M_\chi \rVert_{L^2\to L^2} \leq \Big(\frac{C}{\sigma^{1-\varepsilon}} \Big)^{j+1}
\label{eq:uniform}
\end{equation}
for all $\sigma>\Sigma$. Consequently, if $\sigma$ is large enough,
the cutoff Born series $M_\chi\mathsf{B}(\sigma \pm i0)M_\chi$ converges geometrically in the $L^2\to L^2$ operator norm topology
to $M_\chi R_V(\sigma \pm i0) M_\chi$, and the estimate 
\begin{equation}\label{eq:sharp_full}
		\lVert M_\chi R_V(\sigma \pm i0) M_\chi \rVert_{L^2\to L^2} = O\Big(\frac{1}{\sigma} \Big)\quad\;\; 
\end{equation}
holds, just as in the free case.\footnote{The $O(1/\sigma)$ estimate \cref{eq:sharp_full} is proven by combining the sharp $j=0$ bound \cref{eq:sharp_0} with the $\varepsilon$-lossy \cref{eq:uniform} for $j\geq 1$. Using the latter for $j=0$ results in an $O(1/\sigma^{1-\varepsilon})$ estimate instead.}
\end{theorem}

\begin{remark*}
	More refined estimates, including control on derivatives, and without the cutoffs $\chi$, are proven below. 
	Although we restrict attention to $L^2$-based Sobolev spaces here, estimates in other $L^p$ spaces follow via Sobolev embedding.
\end{remark*} 

Because $V$ has simple poles at the marked points, it might look like the $j$th term in the Born series has a severe singularity. Part of what needs to be shown is that the free resolvent $R_0(\sigma \pm i0)$ mollifies the singularity by at least one $O(r)$ each time it is applied. Hardy's inequality justifies this easily, but at an $O(\sigma)$ cost that is too high to be used in an argument proving the convergence of the Born series. See the beginning of \S\ref{sec:3}. A sharper analysis of the joint $r\to 0^+,\sigma\to\infty$ limit is therefore required. 

Such an analysis has been supplied recently by Hintz \cite{Hintz2}, in great generality, though the basic ideas go back further --- see the literature review in \cite[\S1]{Hintz2}. In the exact Euclidean case, the homogeneity of the Laplacian suffices to reduce the problem to the limiting absorption principle, if one is willing to forgo the use of variable decay orders. Hintz's tools are thus overkill for this particular, constant order, case. However, we do use variable orders (for reasons mentioned below), so Hintz's tools prove useful even in the Euclidean case.

Blackboxing Hintz's estimates, \Cref{thm:main_0} will follow quickly.

The allowed long-range tail of the potential offers a further complication.  
For proving the convergence of the Born series in the presence of such
a potential, \emph{anisotropic} Sobolev spaces\footnote{Anisotropic Sobolev spaces are by now a standard
tool in microlocal analysis. Examples include the work of Hintz and Vasy on asymptotically hyperbolic and Kerr--de Sitter spaces
\cite{vasy2013asympthyp}\cite{hintz-vasy2015semilinear}, the work of Faure--Sj\"ostrand \cite{faure-sjostrand2011anosov} and Dyatlov--Zworski \cite{dyatlov-zworski2016dynamical} on Anosov flows, work of Vasy and Vasy--Molodyk on the Klein--Gordon equation \cite{VasyGrenoble}\cite{M-V-Feynman}, work of \cite{BaVaWu15} on the wave equation, and work of Hassell et al.\ on Schr\"odinger \cite{Gomes1, Gomes2}.}
turn out to be convenient.
To see the complication posed by long-range potentials, consider the Born approximation when $X=\overline{\bbR^3}$. The Born approximation is just the truncation of the Born series, \cref{eq:Born_series}, to the first two terms:
\begin{equation}
 R_V(\sigma \pm i0)(x,y) \approx \frac{e^{\pm i\sigma |x-y|}}{4 \pi|x-y|} - \int_{\bbR^3} \frac{e^{\pm i \sigma (|x-z|+|z-y|)} V(z)}{16 \pi^2 |x-z| |z-y|}  \dd^3 z .
\end{equation}
If $V(z)$ decays only like $1/z$ as $z\to\infty$, then the integral here is only \emph{conditionally} convergent, since the integrand is $\Omega (1/z^3)$ in this limit. 
So, getting any estimate at all requires taking into account the oscillations of the integrand. Microlocal tools do this in a natural way.

To round out our presentation, we give some quick applications:
\begin{itemize}
	\item  Perturbed plane waves and the S-matrix can both be expressed in terms of the resolvent, so the Born approximation for the latter yields the Born approximation for the former.
	We explain how this works for perturbed plane waves 
	in the asymptotically Euclidean case, in \S\ref{sec:multi-Coulomb}.
	\item Modulo a low energy obstruction, the convergence of the Dyson series for the (wave) Cauchy problem follows. Via the Vainberg method, this yields a result about resonances.  These results are discussed in \S\ref{sec:Dyson}.
\end{itemize}

The behavior and convergence of the Born series is a classic problem in the physics literature, with a long history: see, e.g.\ \cite{Dalitz}, \cite{JoPa51}, \cite{Ko54}, \cite{Davies}, \cite{We63}, \cite{Manning}, \cite{Bushell}, \cite{Gesztesy}.  Convergence for short-range central potentials with mild (e.g.\ Coulomb) singularities is long since established.  For instance Jost--Pais \cite[Appendix (III)]{JoPa51} show convergence of the Born series for perturbed plane waves at any finite frequency for a class of central potentials $V(r)$ going as $r^{-1}$ as $r \to 0$ and satisfying
\begin{equation}
    \int_0^\infty |rV(r)|\dd r < \infty.
    \label{eq:JostPais_cond}
\end{equation}
\cite[Eqs. 46, 47]{JoPa51}; this class thus includes screened Coulomb models such as the Yukawa and Hulth\'en potentials, but excludes long-range potentials such as the actual Coulomb potential or $V(r)=\langle r\rangle^{-2}$ (which fail \cref{eq:JostPais_cond}). 
Work of Kohn \cite{Ko54} and Gesztesy--Thaller \cite{Gesztesy} elucidates the convergence at high-energy of the partial wave expansion for short-range central potentials.
The convergence in $\Im \sigma>0$ was  extended by Weinberg in \cite[Section III]{We63} to what is described as ``almost any interesting two-body interaction;'' this work in particular showed that for $V \in L^2$, not necessarily central, the Born series for the resolvent converges; however the author leaves open the analytically more delicate question of what happens as $\Im \sigma \to 0$ (see the end of Section III).

Besides the fact that we work on general (non-trapping) asymptotically conic manifolds, our treatment here offers greater generality than the works cited above in the following ways:
\begin{itemize}
\item We consider long-range Coulomb-type potentials; this entails describing convergence in somewhat exotic spaces with variable Sobolev orders (``anisotropic'' Sobolev spaces).
\item We consider potentials that are not spherically symmetric, and indeed allow for multiple singularities; correspondingly, we make no use of partial wave expansions.
\item Detailed Sobolev spaces provide refined information on the convergence of the Born series in different operator-norm topologies.
\end{itemize}

\begin{remark*}
	We will exclusively talk about convergence in operator norm, but it is also interesting to \emph{fix} $f(x)$ (independent of $\sigma$) and study the convergence of $\mathsf{B}(\sigma \pm i0) f$ as $\sigma\to\infty$. 
	The convergence should be more rapid, with each successive term suppressed by $O_f(\sigma^{-2})$ relative to the previous, owing to semiclassical elliptic estimates (in which $h=1/\sigma$ is the ``semiclassical parameter'').

	By contrast, if $f=f(-;\sigma)$ is a family of functions which oscillate \emph{like the Liouville--Green ansatz} $e^{i\sigma \bullet }$ -- i.e., if $f$ has semiclassical wavefront set on the semiclassical characteristic set of $\triangle-\sigma^2$ -- then instead of elliptic estimates one is forced to use semiclassical propagation estimates. These lose a factor of $\sigma$ relative to the elliptic estimates. So, for such $f$, the successive terms in the Born series are $O_f(\sigma^{-1})$, saturating the uniform bounds \cref{eq:uniform}.\footnote{Allowing Coulomb singularities might possibly worsen these estimates by $O(\sigma^\varepsilon)$, for arbitrarily small $\varepsilon>0$. We do not rule this out, as is apparent from \cref{eq:uniform}.}
\end{remark*}

\section*{Acknowledgements}
The first author was supported by an NSF Postdoctoral Fellowship; the early stages of this work began with conversations with Andr{\'a}s Vasy and Peter Hintz. 
The second author acknowledges partial support from NSF grants DMS-2054424 and DMS-2452331 and from Simons Foundation grant MPS-TSM-00007464. 

We are grateful for the incisive comments of the anonymous referee.

\section{Definition of the Born series}

Recall that we let
$[X;\text{markings}]$
denote the new manifold-with-boundary obtained by blowing up $X$ at the marked points in the sense of \cite[Section 18]{Me94}; this process replaces each of these points with its spherical normal bundle (i.e., locally passes to spherical coordinates).  For any manifold-with-boundary $Y$, we let $S^0(Y)$ denote the space of conormal distributions of order $0$ on $Y$; letting $\mathcal{V}_{\mathrm{b}}(Y)$ denote the space of vector fields tangent to $\pa Y$, this space is described as follows:
\begin{equation}
f \in S^0(Y) \Longleftrightarrow V_1\dots V_N f \in L^\infty(Y), \text{ for all } N \in \mathbb{N},\ V_1,\dots, V_N \in \mathcal{V}_{\mathrm{b}}(Y).
\end{equation}
Specializing to $Y=[X;\text{markings}]$, the space consists of functions that enjoy Kohn--Nirenberg symbol estimates of order zero near infinity, and have iterated $L^\infty$ regularity under application $r \pa_r$ and $\pa_\theta$ in spherical coordinates near each of the marked points.

The purpose of this section is to show that provided the potential $V$ satisfies 
\begin{equation} 
V \in r^{-1} S^0([X;\text{markings}]),
\end{equation} 
the individual terms in \cref{eq:Born_series} make sense.
For this purpose, we use the ``sc-b'' Sobolev spaces
\begin{equation}
	H_{\mathrm{sc-b}}^{m,\mathsf{s},\ell}([X;\mathrm{markings}]),
\end{equation} 
which by abuse of notation we abbreviate to $H_{\mathrm{sc-b}}^{m,\mathsf{s},\ell}(X).$
 The space locally agrees with the sc-Sobolev space $H^{m,\mathsf{s}}_{\mathrm{sc}}$ \cite{Me94} at large $r$ and a b-Sobolev space \cite{Me93}  at small $r$.  Thus,
\begin{itemize}
    \item near $\partial X$, it is a generalization to non-Euclidean $X$ of
ordinary weighted Sobolev spaces $\langle r \rangle^{-\mathsf{s}} H^m(\bbR^n)$;
    \item at the various marked points, it is a b-Sobolev space, measuring
finite order conormal regularity  with respect to the point in question, with a (local near $0$) weight $r^\ell$ relative to $L^2$. Here $r$ is the distance to a marked point.
\end{itemize}
Thus $\mathsf{s}$ measures the decay as $r\to\infty$ and $\ell$ is the decay as $r\to 0$ (i.e., at marked points). As usual, $m$ is the number of derivatives controlled. By ``derivatives,'' what is meant is a vector field 
\begin{equation*} 
\in \langle r \rangle^{-1} \calV_{\mathrm{b}}([X;\text{markings}]),
\end{equation*} 
i.e.\ is what Melrose calls in \cite{Me94}   a sc-vector field near $\partial X$ and a b-vector field near the marked points. An sc-vector field is just a b-vector field with one extra order of decay; in the Euclidean case, the constant-coefficient vector fields are sc-.
For $u$ to lie in $H_{\mathrm{sc-b}}^{m,\mathsf{s},\ell}$ means that 
\begin{equation} 
    V_1\dots V_m u\in H_{\mathrm{sc-b}}^{0,\mathsf{s},\ell}
\end{equation}
for any such vector fields $V_1,\dots,V_m$.

A subtlety here is that the decay order $\mathsf{s} \in C^\infty({}^{\mathrm{sc}}\overline{T}^* X)$ is allowed to be a function \emph{on the sc-phase space} $\smash{{}^{\mathrm{sc}}\overline{T}^* X}$ \cite{Me94, MelroseGeometric}, the ball-bundle over $X$ whose fibers are parametrized by frequency. Microlocal analysis on the sc-phase space works almost exactly like ordinary microlocal analysis on the cosphere bundle of a closed manifold, using a pseudodifferential calculus $\Psi_{\mathrm{sc}}$ quantizing symbols on the sc-phase space. One has an associated notion of sc-wavefront set, \begin{equation*} 
	\operatorname{WF}_{\mathrm{sc}}^{\bullet}(u)\subset  \partial ({}^{\mathrm{sc}}\overline{T}^* X)
\end{equation*} 
for $u\in \mathcal{S}'(X)$, which measures the obstruction to $u$ lying in the function space labeled by `$\bullet$.' In particular, $\operatorname{WF}_{\mathrm{sc}}(u)$ is the obstruction to $u$ being Schwartz, and, for $m,s\in \bbR$, $\operatorname{WF}_{\mathrm{sc}}^{m,\mathsf{s}}(u)$ denotes the obstruction to $u$ lying in the Sobolev space $H_{\mathrm{sc}}^{m,\mathsf{s}}(X)$.  
This Sobolev space 
agrees with $H_{\mathrm{sc-b}}^{m,\mathsf{s},\ell}$ away from the marked points and is $H^m(X^\circ)$ near the marked points.

The variable order $\mathsf{s}$ is said to satisfy the incoming/outgoing Sommerfeld conditions (which depend on $\sigma$) if it is strictly monotonic under the geodesic flow\footnote{This means that $\sqrt{H_p \mathsf{s}}$ is a well-defined symbol, where $H_p$ is the generator of the geodesic flow. This is the Hamiltonian vector field associated with the symbol $p$. Here $p$ is the sc-principal symbol of the operator $P$, meaning the symbol modulo lower-order sc-symbols.} on the characteristic set of the operator $P$ and satisfies the two threshold conditions 
\begin{itemize}
	\item $\mathsf{s}>-1/2$ on the incoming radial set $\operatorname{WF}_{\mathrm{sc}}(e^{-i\sigma r})$, 
    \item $\mathsf{s}<-1/2$ on the outgoing radial set $\operatorname{WF}_{\mathrm{sc}}(e^{i\sigma r})$,
\end{itemize}
in addition to the technical requirement that $\mathsf{s}$ is constant in neighborhoods of the two radial sets.

\emph{Note that the Sommerfeld conditions depend on $\sigma$.}

\begin{remark*}
    Because this depends on $\sigma$, the $\mathsf{s}$ below will depend implicitly on $\sigma$. Later on, $\mathsf{s}$ can be taken to be a constant function of the ``semiclassical scattering frequency'' variables $h\xi=\xi/\sigma$. 
\end{remark*}

Membership in the Sobolev space $H_{\mathrm{sc}}^{m,\mathsf{s}}(X)$ therefore allows an outgoing spherical wave $r^{-(d-1)/2} e^{i\sigma r}$ but forbids an incoming spherical wave $r^{-(d-1)/2} e^{-i\sigma r}$.
The Sommerfeld conditions characterize in this way the monochromatic radiation emitted by a source $f$.
See \cite{BaVaWu15} for an introduction to variable order (a.k.a.\ anisotropic) Sobolev
spaces. We refer to \cite{Me94, MelroseGeometric} for further discussion of radial point conditions
in asymptotically Euclidean scattering theory.
For more discussion of the sc-b Sobolev space alongside other Sobolev spaces that we employ below,
see \S\ref{sec:Sobolevs}.

\begin{example}
	In the asymptotically Euclidean case $X=\bbR^d_x$, take $\mathsf{s}$ of the form $\mathsf{s}=-1/2- \varepsilon \psi(\theta \cdot \xi/\sigma)$ near $\partial X$, where $\varepsilon >0$ is small, $\theta\in \bbS^{d-1}$ is the spherical angle, $\xi$ is the momentum coordinate dual to $x$, and $\psi\in C^\infty(\bbR)$ is monotonic, identically $-1$ near $-1$, and identically $+1$ near $+1$. This satisfies all of the requirements, except for the technical condition (imposed so as to avoid appealing to the sharp G{\aa}rding inequality in the proof of positive commutator estimates) that $|H_p \mathsf{s}|$ be the square of a symbol. This extra condition is easy to satisfy by building $\psi$ from explicit bump functions; see the proof of positive commutator estimates in \cite[Chp.\ 8]{HintzBook}, and \cite[Eq. 8.43]{HintzBook} specifically.
\end{example}

We now recall that the limiting resolvent $R_V(\sigma\pm i0)=\lim_{\varepsilon \to 0^+} R_V(\sigma \pm i \varepsilon)$ exists as a map $C_{\mathrm{c}}^\infty(X^\circ)\to \calS'(X)$ (``limiting absorption principle'');  for long-range potentials without a singularity, this is in \cite{Me94}, \cite{VasyOverloaded}. The same proof applies in the setting with Coulomb singularities (and even dipole singularities) \cite{Hintz2}, as the proof of the next proposition (though written for $R_0$) shows. We thus take as given the existence of the limiting resolvent; what is at stake here is whether and in what manner the Born series converges to it. However, this paper can also be read as giving an alternative construction of the limiting resolvent as a map between suitable function spaces. One can then show that the object constructed in this manner is indeed the limit of the resolvent as the spectral parameter approaches the real axis.

We now address more refined estimates on the mapping properties of the free resolvent on a scattering manifold with marked points.
\begin{proposition}
	For each $\sigma >0$, the free resolvent $R_0(\sigma \pm i0)$  extends (uniquely) to a bounded linear map
	\begin{equation}
		R_0(\sigma \pm i0) : H_{\mathrm{sc-b}}^{m-2,\mathsf{s}+1,\ell-2}(X) \to H_{\mathrm{sc-b}}^{m,\mathsf{s},\ell}(X)
	\end{equation}
	whenever $m\in \bbN$, $\mathsf{s}$ satisfies the incoming/outgoing (depending on $\pm$) Sommerfeld conditions, and $\ell \in (2-d/2,d/2)$. 
\end{proposition}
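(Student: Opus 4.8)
\emph{Sketch of the intended argument.} Write $P_0 = \triangle - \sigma^2$; we may assume $\sigma \neq 0$, the case $\sigma = 0$ being only simpler. The plan is the familiar ``Fredholm plus uniqueness'' scheme for limiting resolvents: first establish a microlocal a priori estimate for $P_0$ on the scale $H^{m,\mathsf{s},\ell}_{\scb}(X)$; then the same for the formal adjoint $P_0^\ast = P_0$ between the dual spaces, which carry the \emph{opposite} radiation condition; then invoke a Rellich-type uniqueness theorem to see that the kernels of $P_0$ and $P_0^\ast$ on these spaces vanish; and finally identify the resulting inverse with $R_0(\sigma \pm i0)$ by appeal to the limiting absorption principle, whose validity in this setting is taken as given above.

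For the a priori estimate one patches together three microlocal ingredients. Over the compact part of $X^\circ$ away from the marked points, $P_0$ is a classical second-order operator, elliptic off its characteristic set $\Sigma$ and of real principal type on $\Sigma$; elliptic regularity and propagation of singularities along $\mathsf{H}_p$ apply, and since $X$ is \emph{non-trapping} every bicharacteristic in $\Sigma$ runs from $\partial X$ to $\partial X$ in finite time. Near $\partial X$, $P_0$ is a scattering operator of order $(2,0)$, elliptic in the sc-calculus off $\Sigma \subset {}^{\sc}T^\ast X|_{\partial X}$; on $\Sigma$ one propagates estimates along $\mathsf{H}_p$ from the incoming radial set $\WF_{\sc}(e^{-i\sigma r})$ to the outgoing one $\WF_{\sc}(e^{i\sigma r})$, using at the radial sets the Melrose--Vasy radial-point estimates. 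These are exactly the estimates available when $\mathsf{s}$ lies above the threshold $-1/2$ at the incoming and below it at the outgoing radial set (with $\sqrt{|\mathsf{H}_p\mathsf{s}|}$ a symbol, so as to avoid the sharp G{\aa}rding inequality) --- that is, when $\mathsf{s}$ satisfies the Sommerfeld conditions for the relevant sign --- and it is here that the solution comes out one order of decay worse than the source, explaining the shift from $\mathsf{s}+1$ to $\mathsf{s}$. Finally, near each marked point $p_i$ (a boundary hypersurface of $[X,\mathrm{markings}]$) one works in the blown-up picture, where $r$ is a boundary-defining function and, the metric being smooth at $p_i$, $r^2\triangle$ is an \emph{elliptic} $b$-differential operator of order $2$ whose indicial operator coincides with that of the flat Laplacian, with indicial roots $\{\, k,\ 2-d-k : k \in \mathbb{Z}_{\geq 0}\,\}$; the term $-\sigma^2 r^2 = O(r^2)$ is lower order at the front face and does not affect the indicial operator. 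Rewriting the $r^\ell L^2(\mathbb{R}^d)$ weight of $H^{m,\mathsf{s},\ell}_{\scb}$ near $p_i$ as the $b$-weight $r^{\ell-d/2}L^2_b$, the hypothesis $\ell \in (2-d/2, d/2)$ is precisely the statement that $\ell-d/2$ lies in the maximal gap $(2-d,0)$ between consecutive indicial roots; hence the indicial operator is invertible at this weight and one gets the microlocal elliptic $b$-estimate near the front face, with the $b$-Sobolev regularity bootstrapped to any $m\in\mathbb{N}$. A partition of unity then yields $\lVert u \rVert_{H^{m,\mathsf{s},\ell}_{\scb}} \lesssim \lVert P_0 u \rVert_{H^{m-2,\mathsf{s}+1,\ell-2}_{\scb}} + \lVert u \rVert_{H^{m',\mathsf{s}',\ell'}_{\scb}}$ for any $(m',\mathsf{s}',\ell')$ strictly below $(m,\mathsf{s},\ell)$, the error term being relatively compact.

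Running the same argument for $P_0^\ast = P_0$ between the dual spaces (the microlocal estimates holding at arbitrary real orders) gives the corresponding estimate there; one checks that the dual order $-\mathsf{s}-1$ satisfies the \emph{opposite} Sommerfeld conditions --- the $-1$ shift being exactly what converts ``above threshold at the incoming radial set'' into ``below threshold'' there --- and that the dual weight $2-\ell$ again lies in $(2-d/2,d/2)$, this interval being invariant under $\ell \mapsto 2-\ell$. Hence $P_0$ is Fredholm of index zero between the stated spaces. To see that its kernel is trivial, observe that $u \in H^{m,\mathsf{s},\ell}_{\scb}$ with $P_0 u = 0$ is an outgoing (resp.\ incoming) solution of $(\triangle - \sigma^2)u = 0$ which, since $\ell > 2-d/2$ excludes the singular indicial behavior $r^{2-d}$ at each $p_i$, is smooth across the marked points; the Rellich-type uniqueness theorem on non-trapping asymptotically conic manifolds (boundary pairing at $\partial X$ plus unique continuation) then forces $u \equiv 0$, and likewise for $P_0^\ast$. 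Thus $P_0$ is invertible between the stated spaces, and its inverse agrees with $R_0(\sigma \pm i0)$: for $f \in C^\infty_{\mathrm{c}}(X^\circ)$, both $P_0^{-1} f$ and $R_0(\sigma \pm i0) f$ solve $(\triangle - \sigma^2) v = f$ with the same radiation condition and so coincide by the uniqueness just used, while $C^\infty_{\mathrm{c}}(X^\circ)$ is dense in $H^{m-2,\mathsf{s}+1,\ell-2}_{\scb}$.

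The part I expect to require the most care is the local analysis at the marked points: identifying the indicial operator of $r^2\triangle$, verifying that $-\sigma^2 r^2$ is genuinely lower order there, and recognizing $\ell \in (2-d/2, d/2)$ as exactly the weight window in which the $b$-normal operator is invertible --- this is what pins down the hypothesis on $\ell$ --- together with the bookkeeping needed to couple these $b$-estimates cleanly to the sc-estimates at $\partial X$ within a single ``sc-b'' scale. The rest is by-now-standard microlocal technology: the Melrose scattering and $b$-calculi, Vasy's radial-point propagation, the non-trapping hypothesis to close the interior propagation, and the classical Rellich argument.
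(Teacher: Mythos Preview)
Your approach is correct, but structurally different from the paper's. You build a single global Fredholm estimate on $H^{m,\mathsf{s},\ell}_{\scb}$ by simultaneously patching sc-radial-point estimates at $\partial X$, interior propagation, and b-elliptic estimates at \emph{all} marked points, then close via duality and Rellich uniqueness. The paper instead takes the limiting resolvent as already given (by the limiting absorption principle, cited just before the proposition) and proves only the refined mapping property, via a partition of unity $\{\chi_j\}$ subordinate to the marked points: the diagonal pieces $M_{\chi_j} R_0 M_{\chi_j}$ reduce to the single-marked-point case (where the stitching of sc- and b-estimates is cited from \cite{Me94}, \cite{Me93}, \cite{Hintz2}), while the off-diagonal pieces $M_{\chi_k} R_0 M_{\chi_j}$ are handled by first invoking the single-$p_j$ estimate, then noting that the output is smooth near $p_k$ (elliptic regularity), so lies in $r^{d/2-}H^m_{\mathrm{b}}$ there. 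Both routes identify the $\ell$-window $(2-d/2,d/2)$ with the indicial-root gap for $r^2\triangle$. Your argument is more self-contained and yields existence along the way; the paper's is shorter because it leans on the single-point case as a black box and focuses on how the marked points (don't) interact. One small omission in your sketch: since $P_0$ a priori maps $H^{m,\mathsf{s},\ell}_{\scb}\to H^{m-2,\mathsf{s},\ell-2}_{\scb}$ (no gain in the sc-decay order), the Fredholm problem should be posed on the graph-type domain $\{u\in H^{m,\mathsf{s},\ell}_{\scb}: P_0 u\in H^{m-2,\mathsf{s}+1,\ell-2}_{\scb}\}$, though this is routine.
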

\begin{proof}
	Let $\{\chi_j\}_{j=1}^\#\subset C^\infty(X)$ denote a partition of unity such that the support of each $\chi_j$ contains at most one marked point $p_j$. Then, 
	\begin{equation}
		R_0(\sigma\pm i0)  = \sum_{j=1}^\# M_{\chi_j}\circ 	R_0(\sigma\pm i0) \circ M_{\chi_j} + \sum_{j\neq k}  M_{\chi_k}\circ 	R_0(\sigma\pm i0) \circ M_{\chi_j}. 
		\label{eq:decomposition}
	\end{equation}
	We have split the sum into a ``diagonal'' part and an ``off-diagonal'' part. 
	\begin{itemize}
		\item To handle the diagonal part, we can use the case of a single marked point, where we claim that
		\begin{equation} 
			R_0(\sigma \pm i0) : H_{\mathrm{sc-b}}^{m-2,\mathsf{s}+1,\ell-2}([X;p_j]) \to H_{\mathrm{sc-b}}^{m,\mathsf{s},\ell}([X;p_j])
		\end{equation} 
		for all $m\in \bbR$, $\mathsf{s}$ satisfying the relevant Sommerfeld conditions, and $\ell \in (2-d/2,d/2)$ (implicit in \cite{Hintz2}). This amounts to an estimate of $u$ in terms of $\triangle_g u$ in the $\mathrm{sc}-\mathrm{b}$-spaces. These estimates follow straightforwardly from stitching together (via an elliptic estimate in the interior) the standard sc- estimates at large-$r$ \cite[Proposition 10]{Me94} and elliptic b-estimates at small-$r$ \cite[Theorem 5.60]{Me93}. The restriction $2-d/2<\ell<d/2$ on $\ell$ can be understood as arising from separation of variables and analysis of the resulting indicial roots.  Imposing the regular boundary condition at $r=0$ requires that $\ell>2-d/2$. The requirement $\ell< d/2$ is needed because general solutions of the PDE have no better decay as $r\to 0$.
		\item  For the off-diagonal part, use the same estimate, together with 
		\begin{equation}
			M_{\chi_k} : H_{\mathrm{sc-b}}^{m,\mathsf{s},\ell}([X;p_j]) \to H_{\mathrm{sc}}^{m,\mathsf{s}}(X)  . 
		\end{equation}
		The point is that away from $p_j$, being in $H_{\mathrm{sc-b}}^{m,\mathsf{s},\ell}([X;p_j])$ just means being in $H_{\mathrm{sc}}^{m,\mathsf{s}}(X)$.  
		Next, we claim that there is an inclusion (technically
                induced by the pullback under the blowdown map)
		\begin{equation} 
			H_{\mathrm{sc}}^{m,\mathsf{s}}(X)\subseteq  H_{\mathrm{sc-b}}^{m,\mathsf{s},0}([X;p_j]):
		\end{equation} 
		Away from the marked points, this is tautological, so the key point is comparing having $m$ orders of b-regularity at the marked points with $m$ orders of ordinary Sobolev regularity. It suffices to consider the case of the marked point being the origin in $\bbR^d$. 
		Then, $f\in H^m_{\mathrm{b}}([\bbR^d,\{0\}])$ nearby means that $Lf\in L^2$ nearby for $L$ formed out of up to $m$-fold products of 
		\begin{equation}
			r \partial_r = \sum_{j=1}^d x_j \partial_{x_j} , \quad \partial_{\theta_{j,k}}=x_j \partial_{x_k} - x_k \partial_{x_j}. 
		\end{equation}
		This follows from being in the ordinary Sobolev space $H^m$ locally; the extra factors of $x_\bullet$ only help.
		
		Thus,
		\begin{equation} 
			M_{\chi_k}\circ 	R_0(\sigma\pm i0) \circ M_{\chi_j} : H_{\mathrm{sc-b}}^{m-2,\mathsf{s}+1,\ell-2}(X) \to H_{\mathrm{sc-b}}^{m,\mathsf{s},0}(X).
		\end{equation}
		The last thing to do is improve this to $H_{\mathrm{sc-b}}^{m,\mathsf{s},\ell}(X)$. But this follows immediately from elliptic regularity -- indeed, $R_0(\sigma\pm i0) \circ M_{\chi_j}$ outputs things which are smooth near $p_k$, which means being in 
		\begin{equation} 
			r^{d/2-} H^m_{\mathrm{b}} \subset r^{\ell}H^m_{\mathrm{b}}
		\end{equation} 
		locally. 
	\end{itemize}
\end{proof}

Now we let $\calX_\pm$, $\calY_\pm$ denote the union of all of codomains resp.\ domains in the previous proposition:
\begin{definition}
    For each $\sigma>0$, let
    \begin{align}
        \calX_\pm &= \bigcup_{m\in \bbN} \bigcup_{\ell\in (2-d/2,d/2)} \bigcup_{\mathsf{s} \text{ $\pm$-Sommerfeld}} H_{\mathrm{sc-b}}^{m,\mathsf{s},\ell}(X), \\
        \calY_\pm &= \bigcup_{m\in \bbN} \bigcup_{\ell\in (2-d/2,d/2)} \bigcup_{\mathsf{s} \text{ $\pm$-Sommerfeld}} H_{\mathrm{sc-b}}^{m-2,\mathsf{s}+1,\ell-2}(X).
    \end{align}
    \end{definition}
Then $R_0(\sigma \pm i0) : \calY_\pm \to \calX_\pm$. (The extension of $R_0(\sigma\pm i0)$ to two different $\mathrm{sc}-\mathrm{b}$ Sobolev spaces, having the same marked points but different orders $m,\mathsf{s},\ell$, has to agree on their overlap, since we can find a bigger $\mathrm{sc}-\mathrm{b}$ Sobolev space containing both and satisfying the hypotheses of the previous proposition.)

Note that the multiplication operator $M_V$ satisfies $M_V: \calX_\pm
\to \calY_\pm$. Indeed, for each $m,\mathsf{s},\ell$ as above,
\begin{equation}
	M_V :  H_{\mathrm{sc-b}}^{m,\mathsf{s},\ell}(X)\to  H_{\mathrm{sc-b}}^{m,\mathsf{s}+1,\ell-1}(X) \subseteq  H_{\mathrm{sc-b}}^{m,\mathsf{s}+1,\ell-2}(X) \subseteq \calY_\pm . 
\end{equation}

We are thus in a position to define the Born series.
    \begin{definition}
	For each $\sigma>0$, the Born sequence $\{B_j(\sigma\pm i0)\}_{j=0}^\infty \subset \calL(\calY_\pm,\calX_\pm) $ is defined recursively by: 
	\begin{itemize}
		\item $B_0(\sigma\pm i0) = R_0(\sigma \pm i0)$,
		\item for $j\geq 0$,  assuming that we have already defined $B_j(\sigma \pm i0) :\calY_\pm \to \calX_\pm$, then 
		\begin{equation}
			B_{j+1}(\sigma\pm i0) = -R_0(\sigma \pm i0) M_V B_j(\sigma \pm i0),
			\label{eq:Bj_def} 
		\end{equation}
		which also lies in $\calL(\calY_\pm ,\calX_\pm)$. 
	\end{itemize}
\end{definition} 

Here, by $\calL(\calY_\pm ,\calX_\pm)$, we just mean the set of linear maps $\calY_\pm \to \calX_\pm$. 

\section{Hintz's semiclassical cone estimates}
\label{sec:3}
In order to estimate the terms in the Born series, we will need to control the limiting resolvent $R_0(\sigma \pm i0)$ on functions of the form $f(x)/r$ for $f$ in semiclassical Sobolev spaces. 
To begin, we recall Hardy's inequality:
\begin{equation}\label{eq:Hardy}
\Big\lVert \frac{f(x)}{r} \Big\rVert_{L^2} \lesssim \lVert f \rVert_{H^1}.
\end{equation}
Recall also that in the classic cutoff free resolvent estimate $\sigma R_0(\sigma \pm i0) : L^2_{\mathrm{c}}\to L_{\mathrm{loc}}^2$ (where the constant is uniform as $\sigma\to\infty$, see, e.g., \cite[Prop.\ 2.1]{Bu02}, \cite[Thm.\ 3.1]{DyZw19}), we can trade the $\sigma^{-1}$ for a derivative, by making use of a semiclassical elliptic estimate, or more simply by estimating $\lVert u \rVert_{H^1}$ in terms of $\langle u,P_0 u \rangle$ by integrating by parts. This yields that $R_0(\sigma\pm i0):L^2_{\mathrm{c}} \to H^1_\loc$ uniformly in $\sigma$.
Combining this with \cref{eq:Hardy} yields the uniform estimate
\begin{equation}
\Big\lVert \chi R_0(\sigma\pm i0)\frac{f(x)}{r} \Big\rVert_{H^1} \lesssim \lVert f \rVert_{H^1}.
\end{equation}
Note the lack of any decay as $\sigma\to\infty$.
Consequently, Hardy's inequality does not suffice to prove that successive terms in the Born series are suppressed as $\sigma\to\infty$. Something more sophisticated is required.

At this point, let us change notation slightly so as to match that used by Hintz; define 
\begin{equation}
h = 1/\sigma. 
\end{equation}
Then, the $\sigma\to \infty$ limit is the same as the $h\to 0^+$ limit. Hintz's estimates are phrased in terms of $h$, which is referred to as the ``semiclassical parameter.''

It turns out that, for this purpose, one wants to allow different
orders as $r\to 0^+$ for $h>0$ fixed and as $r,h\to
0^+$ together, proportionally. One natural scale of Sobolev spaces
accomplishing this is that of the \emph{semiclassical cone} Sobolev
spaces of Hintz \cite{Hintz1, Hintz2}. Hintz was only concerned with
the situation in compact subsets of $X^\circ$, so no decay order
$\mathsf{s}$ was present. Consequently, instead of Hintz's semiclassical cone
spaces, we work with a four-index family 
\begin{equation}
H_{\vee}(X) = \{H_{\vee}^{m,\mathsf{s},\ell,\alpha}(X) \}_{m,\ell,\alpha\in \bbR,\, \mathsf{s} \in C^\infty({}^{\mathrm{sc},\hbar}\overline{T}^* X ;\bbR ) }, 
\end{equation}which agree with Hintz's semiclassical cone spaces in compact subsets of $X^\circ$ and which agree with the ordinary semiclassical scattering Sobolev spaces $H_{\mathrm{sc},\hbar}^{m,\mathsf{s}}$ near $\partial X$. We note in particular the relationship to Hintz's semiclassical cone spaces for functions localized near marked points: fixing $K \subset X^\circ$ compact,
\begin{align} 
\begin{split} 
H_{\vee}^{m,*,\ell,\alpha}(K) &= H_{\mathrm{c},h}^{m,\ell,\alpha,0}(K) \\ 
h^b H_{\vee}^{m,*,\ell,\alpha-b}(K) &= H_{\mathrm{c},h}^{m,\ell,\alpha,b}(K)
\label{eq:Sob_abs}
\end{split} 
\end{align}
in the notation of \cite{Hintz2}.
(Again, see \S\ref{sec:Sobolevs} for the precise details.)
The $H_\vee$-spaces are indexed such that 
\begin{equation}
H_{\vee}^{0,0,0,0}(X) = L^2(X), 
\end{equation}
and increasing $m,\mathsf{s},\ell,\alpha$ results in more regular functions, i.e., smaller spaces.
The order $m$ is the number of derivatives controlled, which is mostly inessential here since we deal with the solution to an elliptic equation.  The other three orders $\mathsf{s},\ell,\alpha$, are of greater importance: these serve respectively as the number of orders of decay at spatial infinity, the number of orders of decay at the marked points (for $r\ll h$), and the number of orders of decay as $r,h\to 0$ together. Hintz's ch-spaces have one other index, the `$b$' in \cref{eq:Sob_abs}, standing for the number of orders of decay as $h\to 0$ for $r\gg 0$. As \cref{eq:Sob_abs} indicates, incrementing that index is the same as multiplying by an overall factor of $h$ while decrementing the $\alpha$ index. Consequently, that index is optional, and we choose not to use it, instead keeping explicit factors of $h$.

\begin{figure}
	\begin{tikzpicture}[scale=.85]
		\fill[gray!5] (-5,2.5) -- (0,2.5) -- (0,0)  arc(90:180:1.5) -- (-5,-1.5) -- cycle;
		\draw[dashed] (-5,2.5) -- (-5,-1.5);
		\node (zfp) at (.35,1) {$b$};
		\node (flr) at (-3,-1.7) {$\ell$};
		\node (mf) at (-.8,-.6) {$\alpha$};
		\node (sf) at (-2.5,2.7) {$s$};
		\draw (0,2.5) -- (0,0)  arc(90:180:1.5)  -- (-5,-1.5);
		\draw (-5,2.5) -- (0,2.5);
		\draw[->, darkred] (-.1,.1) -- (-.1,.75) node[left] {$r$};
		\draw[->, darkred] (-.1,2.4) -- (-.1,1.5) node[left] {$1/r$};
        \draw[->, darkred] (-.1,2.4) -- (-1.1,2.4) node[below left] {$h=1/\sigma$};
		\draw[->, darkred] (-.1,.1) to[out=180, in=30] (-.8,-.1) node[above left] {$\hat{r}^{-1}$};
		\draw[->, darkred] (-1.6,-1.4) -- (-2.4,-1.4) node[above] {$h$};
		\draw[->, darkred] (-1.6,-1.4) to[out=90, in=235] (-1.34,-.6) node[left] {$\hat{r}=r\sigma$};
	\end{tikzpicture}
	\caption{The manifold-with-corners $[X\times (0,\infty]_\sigma;\{r=0,\sigma=\infty\}]$ on which the polynomial weights defining the $\vee$-Sobolev spaces are defined. The various boundary hypersurfaces are associated with different Sobolev orders.}
	\label{fig}
\end{figure}
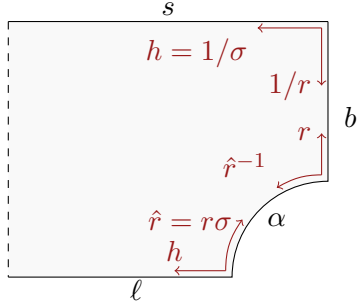

\begin{example}
	Take $m=0$ and $\mathsf{s}$ equal to some constant $s\in \bbR$. Then, the $\vee$-Sobolev space is 
	\begin{equation} 
	H_{\vee}^{0,s,\ell,\alpha}(X)(\sigma) = r^{\ell}(r+h)^{\alpha-\ell} \langle r \rangle^{-(s+\alpha)} L^2(X),
	\end{equation}
	where $h=1/\sigma$, where we have added `$(\sigma)$' on the left-hand side to emphasize that $H_{\vee}^{0,s,\ell,\alpha}(X)$ is a \emph{family} of $\sigma$-dependent function spaces, each of which is $\smash{H_{\mathrm{sc-b}}^{0,s,\ell}}$ at the level of topological vector spaces, but endowed with a $\sigma$-dependent norm. 
\end{example}
\noindent Thus, ignoring the subtlety that we will need the decay order $\mathsf{s}$ to be different in different regions of phase space (hence a \emph{variable} order), and ignoring the question of which weighted derivatives will be controlled, the $H_\vee$ spaces are just the ordinary $L^2$-based Sobolev spaces but allowing some simple weights, $r,\langle r\rangle, (r+h)$. 
The key weight that allows for the analysis of the joint $h,r\to 0^+$ limit is $r+h$. Because we also have $r$ as a weight, the $\vee$-Sobolev spaces can capture distinct decay rates as $r\to 0^+$ for $h>0$ fixed versus in the regime where $h\to 0^+$ too.

One has in this setting an analogous set of Sommerfeld criteria for
$\mathsf{s}$. Because only the index $\mathsf{s}$ is relevant to the
$r\to\infty$ regime, the definition in the setting with marked points is identical to that  without marked points.

\subsection{The basic estimate}

The next proposition bounds the \emph{free} resolvent with respect to the $\vee$-Sobolev spaces in the case where there is one marked point. Note that the metric -- and therefore the free Helmholtz operator itself -- has no singularity at the marked point. The point of this estimate is to control $u=R_0(\sigma \pm i0) f$ for $f$ which may have a singularity at the marked point, either for individual $\sigma<\infty$ or in the $\sigma\to\infty$ limit, and to understand the singularity or smoothness of $u$ there.
\begin{proposition}
	\label{prop:Hintz}
	Suppose that $X$ has only a single marked point. Then, for each $m\in\bbN$, $\ell \in (2-d/2,d/2)$, $\mathsf{s}$ satisfying the incoming/outgoing Sommerfeld conditions,\footnote{Note that $\mathsf{s}$ is a function of $\sigma$. So the proposition should be read as: fix $\mathsf{s}$, including its $\sigma$-dependence, once and for all; then, there exists a constant $C>0$ such that...} $\varepsilon>0$, and $\Sigma>0$, there exists a constant $C=C(X,m,\ell,\mathsf{s},\varepsilon,\Sigma)>0$ such that 
	\begin{equation}
	\lVert R_0(\sigma\pm i0) \rVert_{ H_{\vee}^{m-2,\mathsf{s}+1,\ell-2,-1/2}(X)\to   H_{\vee}^{m,\mathsf{s},\ell,1/2}(X) } \leq \frac{C}{\sigma^{1-\varepsilon}}
	\label{eq:Hintz}
	\end{equation}
	for all $\sigma\geq \Sigma$. 
\end{proposition}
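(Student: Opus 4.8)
\emph{Strategy.} The plan is to split $X$ into a neighborhood of the single marked point $p$, where Hintz's semiclassical cone resolvent estimate applies directly, and its complement — a neighborhood of $\partial X$ together with the non-trapping interior — where $\triangle-\sigma^2$ is a smooth non-trapping semiclassical scattering operator governed by the classical semiclassical limiting absorption estimate. Fix $\chi\in C_{\mathrm c}^\infty(X^\circ)$ equal to $1$ near $p$, supported in a small ball around $p$ on which the metric-cone model is valid, and write $u=R_0(\sigma\pm i0)f=\chi u+(1-\chi)u$.

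Near $p$ one has $(\triangle-\sigma^2)(\chi u)=\chi f+[\triangle,\chi]u$, with the commutator supported in $\supp d\chi\Subset X^\circ\setminus\{p\}$. On $\supp\chi$ the geometry is that of the model cone, so Hintz's semiclassical cone resolvent estimate \cite{Hintz2}, transported into the $\vee$-spaces via the dictionary \cref{eq:Sob_abs}, yields
\[
\lVert \chi u\rVert_{H_\vee^{m,\mathsf{s},\ell,1/2}}\lesssim \tfrac1{\sigma^{1-\varepsilon}}\bigl(\lVert \chi f\rVert_{H_\vee^{m-2,\mathsf{s}+1,\ell-2,-1/2}}+\lVert [\triangle,\chi]u\rVert_{H_\vee^{m-2,\mathsf{s}+1,\ell-2,-1/2}}\bigr).
\]
The range $\ell\in(-2-d/2,d/2)$ is precisely where Hintz's estimate holds, dictated by the indicial roots of the cone Laplacian (regular boundary condition at $r=0$ on one end, non-improvability of the decay of general solutions on the other); the threshold weights $\pm1/2$ on the $\alpha$-index are the limiting-absorption thresholds of the fixed-energy Helmholtz operator on the exact cone, seen at the blown-up scale $\hat r=r\sigma$; and $\sigma^{-1+\varepsilon}$, $\varepsilon$-loss included (whose sharpness is not asserted), is exactly the gain Hintz's theorem provides.

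Away from $p$ one has $(\triangle-\sigma^2)(1-\chi)u=(1-\chi)f-[\triangle,\chi]u$, and on $\supp(1-\chi)$ the operator is a smooth non-trapping semiclassical scattering operator. The standard package — semiclassical elliptic regularity off the characteristic set; propagation of semiclassical singularities along the null-bicharacteristics of $\lvert\xi\rvert_g^2-1$, where non-trapping guarantees each bicharacteristic runs from the incoming to the outgoing scattering radial set in finite time; and radial-point estimates at the radial sets $\WF_{\mathrm{sc}}(e^{\mp i\sigma r})$, where the Sommerfeld conditions furnish exactly the required threshold inequalities $\mathsf{s}\gtrless-1/2$ — combines, after unwinding $R_0=\sigma^{-2}(\sigma^{-2}\triangle-1)^{-1}$ and the $O(\sigma)$ semiclassical bound on $(\sigma^{-2}\triangle-1)^{-1}$, into the clean estimate
\[
\lVert(1-\chi)u\rVert_{H_{\mathrm{sc},\hbar}^{m,\mathsf{s}}}\lesssim \tfrac1\sigma\bigl(\lVert(1-\chi)f\rVert_{H_{\mathrm{sc},\hbar}^{m-2,\mathsf{s}+1}}+\lVert[\triangle,\chi]u\rVert_{H_{\mathrm{sc},\hbar}^{m-2,\mathsf{s}+1}}\bigr),
\]
and away from $p$ the spaces $H_{\mathrm{sc},\hbar}^{m,\mathsf{s}}$ and $H_\vee^{m,\mathsf{s},\ell,\alpha}$ agree.

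\emph{Gluing (the hard part).} It remains to combine the two estimates and control $[\triangle,\chi]u$, which lives in the fixed compact set $\supp d\chi\Subset X^\circ\setminus\{p\}$ — on which every $\vee$-weight is bounded above and below uniformly in $\sigma$ and the space is an ordinary semiclassical Sobolev space — and which, $[\triangle,\chi]$ being one order below $\triangle$, is dominated by $\lVert\tilde\chi u\rVert$ there for a slightly larger cutoff $\tilde\chi\in C_{\mathrm c}^\infty(X^\circ)$. Since $u=R_0(\sigma\pm i0)f$ is, for each fixed $\sigma$, a genuine element of the codomain (by the preceding Proposition when $\ell\in(2-d/2,d/2)$, and in general by the conic limiting-absorption statements in \cite{Hintz2} glued to the scattering-end limiting absorption principle), this is an \emph{a priori} term, bounded uniformly as $\sigma\to\infty$ by the classical cut-off non-trapping resolvent estimate $\lVert\tilde\chi R_0(\sigma\pm i0)\tilde\chi\rVert_{L^2\to L^2}=O(1/\sigma)$ (with the scattering-end bound handling the part of $f$ away from $\supp\tilde\chi$), and then absorbed into the left-hand side; for $\sigma$ in a compact subset of $(0,\infty)$ the asserted bound is immediate from the limiting absorption principle. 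The main obstacle is exactly this last step — making the transition-region error genuinely lower order and absorbable \emph{uniformly in $\sigma$}, while ensuring the cone-adapted and scattering-adapted spaces match on the overlap. The cleanest implementation is not to patch the two estimates but to run one global positive-commutator/propagation estimate with a commutant adapted simultaneously to the cone point, the two scattering radial sets, and the intervening flow, as Hintz does near the cone; the residual error is then of the schematic form $\lVert u\rVert_{L^2(K)}$ on a compact $K$, absorbed as above.
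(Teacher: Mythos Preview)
Your overall strategy---localize near $p$ with a spatial cutoff $\chi$, invoke Hintz near $p$, invoke the standard semiclassical scattering package on $\supp(1-\chi)$, and glue via the commutator $[\triangle,\chi]u$---has the right ingredients, but the displayed cone-point estimate
\[
\lVert \chi u\rVert_{H_\vee^{m,\mathsf{s},\ell,1/2}}\lesssim \sigma^{-(1-\varepsilon)}\bigl(\lVert \chi f\rVert_{H_\vee^{m-2,\mathsf{s}+1,\ell-2,-1/2}}+\lVert [\triangle,\chi]u\rVert_{H_\vee^{m-2,\mathsf{s}+1,\ell-2,-1/2}}\bigr)
\]
is not what Hintz proves, and this is where the argument breaks. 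Hintz's local propagation-through-the-cone estimate (his Thm.~4.10/Prop.~5.2) always carries an extra term on the right: either a microlocal control term $\lVert Eu\rVert$, with $E$ elliptic on the geodesics \emph{incoming toward} $p$, or equivalently a complex-absorption contribution $iQ$. That term encodes the fact that the cone estimate cannot be closed in isolation---waves arriving at $p$ from the rest of $X$ must be accounted for. Dropping it amounts to asserting a pure a priori bound $\lVert v\rVert_\vee\lesssim h\lVert(\triangle-\sigma^2)v\rVert_\vee$ for all compactly supported $v$, which is not available (and attempting to derive it from Hintz's $\bbR^d$ result by perturbing $\triangle_g$ off $\triangle_{\mathrm{Eucl}}$ fails uniformly in $\sigma$, since the perturbation is second order). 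Your proposed absorption of the transition-region error via the classical cutoff bound $\lVert\tilde\chi R_0\tilde\chi\rVert_{L^2\to L^2}=O(1/\sigma)$ also runs into trouble for $\ell\in(-2-d/2,\,2-d/2)$, where $f$ need not lie in $L^2_{\mathrm{loc}}$ near $p$.

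The paper's proof resolves this not by spatial patching but by \emph{microlocal ordering}: (i) an above-threshold radial-point estimate at the incoming set furnishes $\lVert Eu\rVert$ for $E$ microsupported on an incoming annulus around $p$; (ii) this is fed as the required control term into Hintz's cone estimate, yielding $\lVert\chi u\rVert_\vee$; (iii) that in turn supplies the interior control needed for the below-threshold radial-point estimate at the outgoing set. Each step feeds the next along the bicharacteristic flow (incoming $\to$ cone $\to$ outgoing), and the residual errors are genuinely $h^\delta$-small and absorbable. Your final sentence (``run one global positive-commutator/propagation estimate\dots'') correctly anticipates exactly this, but the details---in particular, producing and then consuming the $\lVert Eu\rVert$ term---are the substance of the proof, not an afterthought to the cutoff argument.
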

\begin{proof}
        This is essentially a special case of \cite[Thm. 5.7]{Hintz2} (the only reason that it's not is that Hintz stated the case $X=\overline{\bbR^d}$); we present a variant of Hintz's proof. This involves sandwiching the propagation estimates in \cite[\S4]{Hintz2} with standard semiclassical scattering radial point estimates. Hintz's estimates are used to estimate the resolvent output away from the boundary at infinity $\partial X$ (and in particular they apply at and near the marked point), while scattering radial point estimates apply to the resolvent output near $\partial X$.

        The strategy of proof is as follows: first, we obtain an estimate at the incoming radial set, based only on background regularity/decay assumptions on the solution (``above threshold regularity'').  By standard methods, we can then propagate this estimate to the whole of phase space \emph{except} for those rays that emanate from the marked point, i.e., we have control on those rays that asymptote to the incoming radial set through $X\backslash\{p\}$ in backward time.  The essential technical novelty is thus to use an estimate of Hintz \cite{Hintz2} to propagate regularity through the marked point, whence we may continue to obtain an estimate on all of phase space in the customary manner.
	\begin{enumerate}[label=(\roman*)]
		\item Fix a microlocalizer $E \in \Psi_{\mathrm{sc},\hbar}^{0,0}$ such that the backwards geodesic flowout of the essential support 
        \begin{equation} 
            \operatorname{WF}'_{\mathrm{sc},\hbar}(E) \subset {}^{\mathrm{sc},\hbar}T^* X
        \end{equation}
        of $E$ does not pass over the sole marked point.  
        Let $\chi \in C^\infty(X)$ equal $1$ identically on the projection down to the base $X$ of the backwards flowout of the microsupport of $E$.  
        The standard\footnote{These estimates are folklore in the literature on radial points.
        In \cite{VasyGrenoble}, Vasy refers to them as the $\mathrm{sc,}\hbar$ analogue of radial point estimates and states that they hold, but they are not written out explicitly.
        The proof is standard; equation \eqref{eq:supposedly_standard} follows by taking apart the global commutator argument employed in the proof of the main theorem of \cite{VaZw00}.  In particular, we may take the commutator argument of Section 3 of that paper but instead of using the full symbol $q$ constructed in Section 4, we just use the part of it denoted $q_-$, microlocalized near the incoming set.  It follows from (4.16) of \cite{VaZw00} that the commutator with the quantization of $q_-$ produces top order terms of only one sign, and we have a radial point estimate near the incoming set, predicated on the function in question lying in an ``above-threshold'' Sobolev space. In \cite{VaZw00} this background regularity is guaranteed by the presence of an imaginary part to the spectral parameter (as the goal there is to prove the limiting absorption principle), while here we enforce it by including the \[h^{N}\lVert E_0 u\rVert_{H_{\mathrm{sc},\hbar}^{m,-1/2+\delta}}\] term on the RHS of the estimate.  Finiteness of this term guarantees the desired above-threshold regularity.  
        
        Having established the desired estimate for $E$ microsupported near the incoming radial set, the estimate can be extended by standard propagation arguments, as the Hamilton vector field is nonradial elsewhere. (Note that the outgoing radial set is all visible from the Coulomb singularity, i.e., is connected to the marked point by geodesics, so we are not trying to extend the estimate to the outgoing set at this stage.) In the notation of \cite{VaZw00}, we can accomplish this next step by repeating the commutator argument with commutants given by quantization of $q_\xi$ and $q_\partial$: this gives positive commutator estimates with error terms in the ``control region'' $\supp q_-$ from the previous estimate. A version of the microlocalized argument described here is well-exposited in \cite[Sec.~E.4.3]{DyZw19}, but is not in the context of the scattering calculus; the results described here can thus be derived from those by conic localization and Fourier transform (cf.\ \cite[Sec.~4]{Me94}).}
        propagation and radial point estimates in the setting without any cone/marked points yield
		\begin{equation}
			\lVert E u \rVert_{H_{\mathrm{sc},\hbar}^{m,\mathsf{s}} } \lesssim h \lVert \chi P u \rVert_{H_{\mathrm{sc},\hbar}^{m-2,\mathsf{s}+1} }  + h^{N}\lVert E_0 u\rVert_{H_{\mathrm{sc},\hbar }^{m,-1/2+\delta}}  + h^N \lVert u\rVert_{H_{\mathrm{sc},\hbar}^{-N,-N} } 
            \label{eq:supposedly_standard}
		\end{equation}
	   for any $\delta>0$, $N\in \bbR$, $E_0\in \Psi^{-\infty,0}_{\mathrm{sc},\hbar}(X)$ elliptic  on the incoming radial set.\footnote{\Cref{eq:supposedly_standard} also holds with $\chi$ replaced by an appropriate microlocalizer, but we will not make use of this.} 
       The estimate above holds for all $u\in \calS'(X)$, but the right-hand side may not be finite. (The same applies to the other formulas below.) For example, the right-hand side is infinite if $u$ does not have $-1/2+\delta$ orders of decay on the incoming radial set. Hence, this estimate is only useful when $u$ is ``above threshold'' in terms of its amount of incoming decay.
        
        For the purpose at hand, it will suffice to take $E$ elliptic on a small neighborhood of an annular region around the backwards flowout of the conormal/cosphere bundle of the marked point.
       See \Cref{fig:est_part1}.
\begin{figure}[h!]
    \centering
   \begin{tikzpicture}
    \filldraw[fill=lightgray!20] (0,0) circle (2);
        \begin{scope}[decoration={
				markings,
				mark=at position 0.62 with {\arrow[scale=1.5,>=latex, color=orange]{<}}}]
            \clip  (0,0) circle (2);
            \fill[lightgray!50] (0,0) circle (2);
            \filldraw[darkred, fill opacity=.5, pattern = north west lines, pattern color =darkred] (0,0) circle (1.4); 
            \filldraw[darkred, fill=lightgray!50] (0,0) circle (1);
            \fill[lightgray!20] (0,0) circle (.75);
            \draw[postaction={decorate}, black] (0,0) -- (0,2);
            \draw[postaction={decorate}, black] (0,0) -- (1.41,1.41);
            \draw[postaction={decorate}, black] (0,0) -- (0,-2);
            \draw[postaction={decorate}, black] (0,0) -- (2,0);
            \draw[postaction={decorate}, black] (0,0) -- (-2,0);
            \draw[postaction={decorate}, black] (0,0) -- (-1.41,1.41);
            \draw[postaction={decorate}, black] (0,0) -- (1.41,-1.41);
            \draw[postaction={decorate}, black] (0,0) -- (-1.41,-1.41);
            \clip (0,0) circle (1.4);
            \draw[postaction={decorate}, orange] (0,0) -- (0,2);
            \draw[postaction={decorate}, orange] (0,0) -- (1.41,1.41);
            \draw[postaction={decorate}, orange] (0,0) -- (0,-2);
            \draw[postaction={decorate}, orange] (0,0) -- (2,0);
            \draw[postaction={decorate}, orange] (0,0) -- (-2,0);
            \draw[postaction={decorate}, orange] (0,0) -- (-1.41,1.41);
            \draw[postaction={decorate}, orange] (0,0) -- (1.41,-1.41);
            \draw[postaction={decorate}, orange] (0,0) -- (-1.41,-1.41);
            \clip (0,0) circle (1);
            \draw[postaction={decorate}, black] (0,0) -- (0,2);
            \draw[postaction={decorate}, black] (0,0) -- (1.41,1.41);
            \draw[postaction={decorate}, black] (0,0) -- (0,-2);
            \draw[postaction={decorate}, black] (0,0) -- (2,0);
            \draw[postaction={decorate}, black] (0,0) -- (-2,0);
            \draw[postaction={decorate}, black] (0,0) -- (-1.41,1.41);
            \draw[postaction={decorate}, black] (0,0) -- (1.41,-1.41);
            \draw[postaction={decorate}, black] (0,0) -- (-1.41,-1.41);
        \end{scope}\fill (0,0) circle (1.5pt);
        \draw (0,0) circle (2);
        \node () at (0,2.3) {$X$};
        \node () at (-2,2.3) {(a)};
        \node[darkred] () at (-1.55,-.6) {$\calE$}; 
    \end{tikzpicture}
    \hspace{1em}
    \begin{tikzpicture}
    \filldraw[fill=lightgray!20] (0,0) circle (2);
        \begin{scope}[decoration={
				markings,
				mark=at position .23 with {\arrow[scale=1.5,>=latex, color=black]{>}}}]
            \clip  (0,0) circle (2);
            \fill[lightgray!50] (0,0) circle (2);
            \filldraw[darkred, fill opacity=.5, pattern = north west lines, pattern color =darkred] (0,0) circle (1.4); 
            \filldraw[darkred, fill=lightgray!50] (0,0) circle (1);
            \fill[lightgray!20] (0,0) circle (.75);
            \fill (0,0) circle (1.5pt);
            \draw[postaction={decorate}] (-2,0) -- (2,0);
        \draw[postaction={decorate}] (-2,0) to[out=40, in=140] (2,0);\draw[postaction={decorate}] (-2,0) to[out=-40, in=-140] (2,0);\draw[postaction={decorate}] (-2,0) to[out=-20, in=-160] (2,0);\draw[postaction={decorate}] (-2,0) to[out=20, in=160] (2,0);
        \draw[postaction={decorate}] (-2,0) to[out=70, in=110] (2,0);
        \draw[postaction={decorate}] (-2,0) to[out=-70, in=-110] (2,0);
        \end{scope}
        \begin{scope}[decoration={
				markings,
				mark=at position .23 with {\arrow[scale=1.5,>=latex, color=orange]{>}}}]
            \clip (0,0) circle (1.4);
            \clip (-2,-2) rectangle (0,2);
            \draw[postaction={decorate}, orange] (-2,0) -- (2,0);
        \draw[postaction={decorate}, orange] (-2,0) to[out=-20, in=-160] (2,0);
        \draw[postaction={decorate}, orange] (-2,0) to[out=20, in=160] (2,0);
        \clip (0,0) circle (1);
        \draw[postaction={decorate}] (-2,0) -- (2,0);
        \draw[postaction={decorate}] (-2,0) to[out=-20, in=-160] (2,0);
        \draw[postaction={decorate}] (-2,0) to[out=20, in=160] (2,0);
        \end{scope}
        \begin{scope}[decoration={
				markings,
				mark=at position .8 with {\arrow[scale=1.5,>=latex, color=black]{>}}}]
            \clip  (0,0) circle (2);
            \clip (0,-2) rectangle (2,2);
            \draw[postaction={decorate}] (-2,0) -- (2,0);
        \draw[postaction={decorate}] (-2,0) to[out=40, in=140] (2,0);\draw[postaction={decorate}] (-2,0) to[out=-40, in=-140] (2,0);\draw[postaction={decorate}] (-2,0) to[out=-20, in=-160] (2,0);\draw[postaction={decorate}] (-2,0) to[out=20, in=160] (2,0);
        \draw[postaction={decorate}] (-2,0) to[out=70, in=110] (2,0);
        \draw[postaction={decorate}] (-2,0) to[out=-70, in=-110] (2,0);
        \end{scope}
        \draw (0,0) circle (2);
        \node[gray] () at (0,-1.7) {$\operatorname{supp} \chi$};
        \node () at (-2,2.3) {(b)};
    \end{tikzpicture}
    \caption{The projection down to the base, $\calE$, of the essential support of $E$ (in {\color{darkred}red}), and $\supp \chi$. The marked point is in the middle ($\bullet$). Depicted in (a), (b)  are different sets of geodesics; (a) depicts the geodesics in the backwards flowout (flow\emph{in}) of the conormal bundle of the marked point. The essential support of $E$ is contained near the portions ({\color{orange}orange}) over $\calE$ of the lifts of these geodesics. In (b) are shown the incoming geodesics emanating from the left. Note that (lifted) geodesics which stay far from the marked point never pass through the essential support of $E$, even if they pass over $\calE$, because they never stray close enough to the flowin of the marked point.}
    \label{fig:est_part1}
\end{figure}
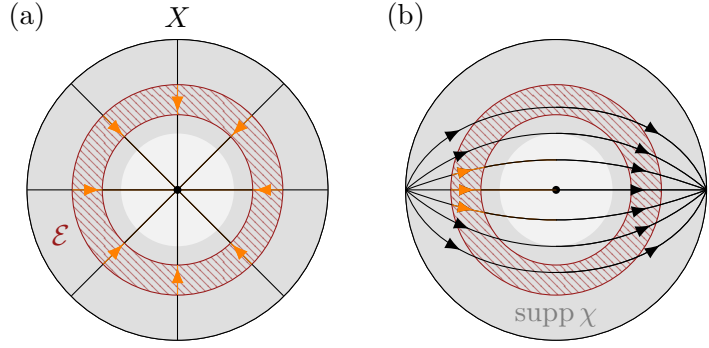
		\item  
		For $E$ as above, with Schwartz kernel compactly supported in $(X^\circ \backslash\{p\})^2$, and such that every geodesic which is incoming to the marked point passes through the elliptic set of $E$, 
		\begin{equation}
			\lVert \chi u \rVert_{H_\vee^{m,*,\ell,\frac{1+\varepsilon}{2}}} \lesssim h^{1-\varepsilon} \lVert  \tilde{\chi} P u \rVert_{H_\vee^{m-2,*,\ell-2,-\frac{1+\varepsilon}{2}} } + \lVert E u \rVert_{H_\hbar^{-M} } + h^\delta \lVert \tilde{\chi} u \rVert_{H_\vee^{-N,*,\ell,\frac{1+\varepsilon}{2}}},
            \label{eq:our_own_Hintz}
		\end{equation}
		for any $M,N\in \bbR$, $\varepsilon>0$, some $\delta>0$ depending on $\varepsilon$, and any
                $\chi,\tilde{\chi}\in C_{\mathrm{c}}^\infty(X^\circ)$ such
                that $\chi$ is supported in a small neighborhood of
                the marked point and $\tilde{\chi}=1$ identically on a
                geodesically convex set containing the support of $\chi$, the
                marked point, and the projection down to the base of the essential support of $E$.
                The conclusion and proof are essentially those of \cite[Prop.\ 5.2]{Hintz2}; the hypotheses are slightly weaker, but the Sobolev spaces appearing in the final estimate are the same, except we now have an extra 
                \begin{equation}\label{eq:new_error}
                     h^\delta \lVert \tilde{\chi} u \rVert_{H_\vee^{-N,*,\ell,\frac{1+\varepsilon}{2}}}
                \end{equation}
                term on the right-hand side to make up for the weaker hypotheses.\footnote{The main \emph{notational} difference between \cref{eq:our_own_Hintz} and \cite[Prop.\ 5.2]{Hintz2} is that what Hintz calls $P_{h,z}$ is what we call $h^2 P = \sigma^{-2} P$. One of these factors of $h$ cancels with the $h^{-1}$ on the right-hand side of the estimates in \cite[Prop.\ 5.2]{Hintz2}, and the other $h$ is the source of the $h^{1-\varepsilon}$ in front of $\lVert \tilde{\chi} Pu \rVert$ in \cref{eq:our_own_Hintz}. 
                Moreover, \cite[Thm.\ 5.7]{Hintz2} is stated in terms of $\calP_{h,1} = P_{h,1}+ i Q$, with $\lVert \calP_{h,1}u \rVert$ on the  right-hand side. Here the $Q$ term is a \emph{complex absorbing potential} which is elliptic on the whole characteristic set of $P_{h,1}$ in some region; moving this to the other side of the equation and using the triangle inequality gives an estimate of the form \eqref{eq:our_own_Hintz}, but with $Q$ replacing $E$. Since our estimate in \eqref{eq:our_own_Hintz} replaces this term with a microlocalized term $E$, however, we are in principle forced to revisit the proof of \cite[Thm.\ 5.7]{Hintz2}, which nevertheless goes through essentially verbatim to yield the result stated here. The main difference is that the additional semi-global error term \eqref{eq:new_error} cannot be absorbed in the LHS since the LHS of the estimate is now local.
                } In order to prove \cref{eq:our_own_Hintz}, we go back to \cite[Thm.\ 4.10.(1)]{Hintz2}.\footnote{That theorem is for $X$ a compact manifold-without-boundary, but this does not matter, because if $\chi_0\in C_{\mathrm{c}}^\infty(X^\circ)$ is identically $1$ on a large enough set, then $\chi \chi_0 u = \chi u$, $\tilde{\chi} Pu=\tilde{\chi} P (\chi_0 u)$, $Eu=E(\chi_0 u)$, and $\tilde{\chi} u = \tilde{\chi}(\chi_0 u)$. So, we can apply \cite[Thm.\ 4.10]{Hintz2} to $\chi_0 u$ to get \cref{eq:our_own_Hintz}. This is why we have left the sc-decay order unspecified in that equation.}
                The range of $\ell$ allowed in that theorem is given in \cite[Lem. 5.1]{Hintz2}; it is what is stated in the proposition statement here, $\ell\in (2-d/2,d/2)$. Also, our theorem here is stated without variable semiclassical order, so is lossy compared to \cite[Thm. 4.10]{Hintz2}. The deduction of the constant-order statement from the variable-order statement is exactly the same as in the proof of \cite[Prop.\ 5.2]{Hintz2}.
		\item  For any  
        \begin{itemize}
            \item $\chi,\chi_0 \in C^\infty(X)$ such that $\chi_0=1$ identically on the support of $\chi$, and
            \item $Q \in \Psi_{\mathrm{sc},\hbar}^{0,0}$ whose Schwartz kernel is supported away from the marked point (on projection to either factor) and such that every geodesic over $\operatorname{supp}\chi$ eventually passes through
                  the elliptic set of $Q$ while remaining where $\chi_0$ does not vanish,  
        \end{itemize}
        the estimate
		\begin{equation}
		\lVert \chi u \rVert_{H_{\mathrm{sc},\hbar}^{m,\mathsf{s}}} \lesssim h \lVert \chi_0 Pu \rVert_{H_{\mathrm{sc},\hbar}^{m-2,\mathsf{s}+1} } + \lVert Q u \rVert_{H_{\mathrm{sc},\hbar}^{m,\mathsf{s}} }+ h^N \lVert u\rVert_{H_{\mathrm{sc},\hbar}^{-N,-N} } 
		\end{equation}
        holds. 
		This involves the standard below-threshold radial point estimate in the setting without any cone/marked points.\footnote{Once again we refer the reader to \cite{VaZw00} for details: this estimate employs the commutant obtained from quantizing $q_+$ in the notation of that paper, with error terms now estimated by the output of the estimate in our previous step.}
	\end{enumerate}
        Owing to the non-trapping hypothesis, every backwards geodesic asymptotes either to the incoming radial set or the marked point, where we have control according to (i) and (ii). Consequently, control can be propagated throughout, including the outgoing radial set, according to (iii). Stitching together the estimates above yields
	\begin{equation}
	\lVert u \rVert_{H_\vee^{m,\mathsf{s},\ell,\frac{1+\varepsilon}{2}}} \lesssim h^{1-\varepsilon} \lVert P u \rVert_{H_\vee^{m-2,\mathsf{s}+1,\ell-2,-\frac{1+\varepsilon}{2}} } + h^N \lVert E_0 u \rVert_{H_{\mathrm{sc},\hbar}^{m,-1/2+\delta} } + h^{\delta} \lVert u \rVert_{H_\vee^{m,\mathsf{s},\ell,\frac{1+\varepsilon}{2}}}.
	\end{equation}
	The last term can be absorbed into the left-hand side, as can the penultimate term under the assumption that it is finite. 
Hence we obtain
	\begin{equation}
	 \lVert E_0 u \rVert_{H_{\mathrm{sc}}^{m,-1/2+\delta} }<\infty  \Longrightarrow 
	\lVert u \rVert_{H_\vee^{m,\mathsf{s},\ell,\frac{1+\varepsilon}{2}}} \lesssim h^{1-\varepsilon}  \lVert P u \rVert_{H_\vee^{m-2,\mathsf{s}+1,\ell-2,-\frac{1+\varepsilon}{2}} }.\label{foo.12.5}
	\end{equation}
	For any $u\in \calS'(X)$, if $\sigma$ is large
        enough and such that
        $\operatorname{WF}_{\mathrm{sc}}^{*,-1/2+\varepsilon}( u )$ is disjoint
        from the incoming radial set \emph{for that value of
        $\sigma$}, then we may choose $E_0$ as in \cref{eq:supposedly_standard} so that 
          \begin{equation} 
          \lVert E_0 u \rVert_{H_{\mathrm{sc}}^{m,-1/2+\delta} }<\infty,
            \end{equation} 
        hence
        the inequality on the right-hand side of \cref{foo.12.5} 
        holds. The key property of the limiting resolvent $R_0(\sigma
        \pm i0)$ is that its output $R_0(\sigma \pm i0)f$ has this
        property whenever $f \in \calY_\pm$ \cite{Me94}.\footnote{This is also true when there is a $1/\langle r\rangle$ potential present, as long as it's real-valued. See \cite[\S16]{Me93}\cite{VasyOverloaded}.} We thus conclude 
	\begin{equation}
	\lVert R_0(\sigma \pm i0) f \rVert_{H_\vee^{m,\mathsf{s},\ell,\frac{1+\varepsilon}{2}}} \lesssim h^{1-\varepsilon}  \lVert f \rVert_{H_\vee^{m-2,\mathsf{s}+1,\ell-2,-\frac{1+\varepsilon}{2}} }.
	\end{equation}
	We can drop the $\varepsilon$'s in the Sobolev orders, as doing this weakens the estimate.
\end{proof}

An especially important case of \Cref{prop:Hintz} is when $X$ is exact Euclidean space, with only a single marked point, in which case $R_0$ is the free Euclidean resolvent. Then a direct proof is straightforward if $\mathsf{s}$ is replaced by a constant $s_-<-1/2$ on the left-hand side and $s_+>-1/2$ on the right-hand side. One way of doing this is by using the homogeneity of the Euclidean resolvent. Rescaling $r$ to $\hat{r}=r\sigma$ reduces the energy $\sigma^2$-resolvent to the energy-one resolvent.

\begin{remark*}
	Hintz's argument works when the metric has a conic singularity at the marked point (as clearly stated before \cite[Thm. 5.7]{Hintz2}), or if a dipole potential is present (this being part of \cite[Thm. 5.7]{Hintz2}).  The argument breaks if more than one instance of these is present. The reason is that in order to control the resolvent output satisfactorily, one needs a priori control on all geodesics incoming towards the cone points/point dipoles. With only one, all incoming geodesics originate at spatial infinity, so the required control is provided by the limiting absorption principle. With \emph{multiple} cone points/point dipoles, 
	geodesics can originate at one marked point and terminate at another. 
	This is the unstable trapping explored e.g.\ in \cite{BaWu13}, \cite{HiWu20}.
\end{remark*}

\subsection{The estimate for multiple marked points}
 For any $p\in X^\circ$, we can apply the preceding proposition with $p$ being the marked point (being careful to note that the $\vee$-Sobolev spaces involved depend on $p$). This yields the following:
\begin{theorem}
	\label{thm:main_estimate}
	For each $m\in\bbN$, $\ell \in (2-d/2,d/2)$, $\mathsf{s}$ satisfying the incoming/outgoing Sommerfeld conditions, $\varepsilon>0$, and $\Sigma>0$, there exists a constant 
    \begin{equation} 
        C=C(X,m,\ell,\mathsf{s},\varepsilon,\Sigma)>0
    \end{equation}
    such that 
	\begin{equation}
	\lVert R_0(\sigma\pm i0) \rVert_{ H_{\vee}^{m-2,\mathsf{s}+1,\ell-2,-1/2}\to   H_{\vee}^{m,\mathsf{s},\ell,1/2} } \leq \frac{C}{\sigma^{1-\varepsilon}}
	\label{eq:misc_000}
	\end{equation}
	for all $\sigma\geq \Sigma$. 
\end{theorem}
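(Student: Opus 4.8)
The plan is to reduce to the single–marked–point estimate of \Cref{prop:Hintz} by a finite decomposition, following the same localization strategy used above for the sc-b resolvent mapping bound, but now keeping track of the power $\sigma^{-(1-\varepsilon)}$. First I would fix a partition of unity $1=\chi_0+\sum_{j=1}^{\#}\chi_j$ on $X$, with each $\chi_j\in C_{\mathrm{c}}^\infty(X^\circ)$ (for $1\le j\le\#$) supported in a geodesically convex neighborhood $U_j$ of $p_j$ containing no other marked point, and with $\chi_0$ supported away from every marked point (and allowed to reach $\pa X$). Because the $\vee$-norm of the codomain localizes, $\lVert u\rVert_{H_\vee^{m,\mathsf{s},\ell,1/2}(X)}\asymp\sum_k\lVert\psi_k u\rVert_{H_\vee^{m,\mathsf{s},\ell,1/2}(X)}$ for a partition $\{\psi_k\}$ subordinate to the same cover, and because near $p_k$ the $\vee$-space built from all the markings coincides with the one built from $p_k$ alone (while away from all markings it coincides with $H_{\mathrm{sc},\hbar}^{m,\mathsf{s}}$), writing $R_0(\sigma\pm i0)=\sum_{j,k}M_{\chi_k}R_0(\sigma\pm i0)M_{\chi_j}$ reduces the theorem to a uniform $O(\sigma^{-(1-\varepsilon)})$ bound for each of the finitely many pieces $M_{\chi_k}R_0(\sigma\pm i0)M_{\chi_j}$, mapping into the appropriate single-point or scattering space.

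The diagonal pieces $M_{\chi_j}R_0(\sigma\pm i0)M_{\chi_j}$ with $1\le j\le\#$ are then immediate: $M_{\chi_j}$ carries $H_\vee^{m-2,\mathsf{s}+1,\ell-2,-1/2}(X)$ boundedly into $H_\vee^{m-2,\mathsf{s}+1,\ell-2,-1/2}([X,p_j])$ — tautologically, since the two spaces agree on $U_j$ — and is bounded on the codomain, so \Cref{prop:Hintz} with $p_j$ the sole marked point gives the claim with the required constant. The piece $M_{\chi_0}R_0(\sigma\pm i0)M_{\chi_0}$ involves no marked point and is just the standard semiclassical scattering resolvent estimate with variable Sommerfeld order (\cite{Me94,VaZw00,VasyOverloaded}), which yields the bound (indeed $O(\sigma^{-1})$).

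The off-diagonal pieces $M_{\chi_k}R_0(\sigma\pm i0)M_{\chi_j}$ with $j\ne k$ are the substantive case. Here $M_{\chi_j}f$ is supported away from $p_k$, so $v:=R_0(\sigma\pm i0)M_{\chi_j}f$ solves $(\triangle-\sigma^2)v=0$ near $p_k$ and is smooth there; as in the off-diagonal part of that argument, smoothness at $p_k$ places $v$ locally in $r^{d/2-}H^m_{\mathrm{b}}$ (with $r$ the distance to $p_k$), which for $\ell<d/2$ lies in the $\ell$-weighted part of the $\vee$-space at $p_k$, and the extra weight $(r+h)^{1/2-\ell}$ of the $\alpha=+1/2$ slot costs nothing, since $\lVert r^{-\ell}(r+h)^{\ell-1/2}\rVert_{L^2(\{r<\rho\})}=O(1)$ uniformly in $h$ when $\ell<d/2$. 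What remains is to bound the size of $v$ near $p_k$ by $\sigma^{-(1-\varepsilon)}\lVert f\rVert$. For this I would invoke \Cref{prop:Hintz} once more — noting that $M_{\chi_j}f$, regarded on $[X,p_j]$, lies in its domain — to control $v$ in $H_\vee^{M,\ast,\ell',1/2}([X,p_j])$ for every $M$ and every admissible $\ell'$, with constant $O(\sigma^{-(1-\varepsilon)})$; near $p_k$ this is a plain semiclassical Sobolev bound at every order. One then converts this into the required bound on the $\alpha=+1/2$ weighted $\vee$-norm at $p_k$ using that $v$ is, microlocally near $p_k$, a Lagrangian distribution carried by the geodesic flowout of $\supp\chi_j$ — hence concentrates at worst like a fold caustic — losing at most a factor $\sqrt{\log\sigma}$, harmlessly absorbed into $\varepsilon$. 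Alternatively, one can bypass the decomposition and re-run the proof of \Cref{prop:Hintz}, applying Hintz's through-the-marked-point estimate at each $p_j$ in turn; this is legitimate here because $\triangle$ has no singularity at the $p_j$, so the geodesic flow is smooth and non-trapping through them and the obstruction described in the remark following \Cref{prop:Hintz} — geodesics running from one genuinely singular cone point to another — is absent.

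I expect this last point to be the main obstacle: reconciling the $h$-uniform semiclassical regularity of the off-diagonal resolvent output at a distant marked point with the weighted $\vee$-norm there, without losing a power of $\sigma$. The $\alpha=+1/2$ weight in the codomain is precisely what makes a fold caustic cost only a logarithm; the nontrivial input is the Lagrangian structure of the free resolvent on a non-trapping manifold, which excludes worse concentration. The remaining ingredients — the partition of unity, the diagonal terms, the $\chi_0$ term, and the elementary weight estimates — are routine.
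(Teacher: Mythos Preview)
Your decomposition into diagonal and off-diagonal pieces via a partition of unity is exactly what the paper does, and your handling of the diagonal pieces (including the $\chi_0$ piece) is correct and matches the paper.

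The off-diagonal pieces are where your argument diverges, and where there is a genuine gap. Your plan is to apply \Cref{prop:Hintz} with $p_j$ as the marked point to control $v = R_0 M_{\chi_j} f$, obtaining near $p_k$ only a semiclassical Sobolev bound $\lVert v\rVert_{H_\hbar^M} \lesssim \sigma^{-(1-\varepsilon)}\lVert f\rVert$; you then attempt to upgrade this to the $H_\vee^{m,*,\ell,1/2}([X,p_k])$ norm by invoking Lagrangian/fold-caustic structure of the free resolvent. This last step is not rigorous as stated: a bare semiclassical Sobolev bound does not in general imply the $\alpha = +1/2$ weighted $\vee$-bound at $p_k$ without losing a power of $h$ (mass can concentrate in the $r\sim h$ region), and invoking fold caustics to limit the loss to $\sqrt{\log\sigma}$ would require substantial FIO machinery that you have not supplied --- and is not obviously correct, since caustics worse than folds can occur for generic non-trapping metrics. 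Your alternative suggestion (re-running the proof of \Cref{prop:Hintz} with Hintz's estimate applied at each marked point in turn) would in fact work, but you do not develop it.

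The paper avoids the conversion problem entirely by a commutator trick. For $U = M_{\chi_j} R_0 M_{\chi_k}$ one has $P_0 U = F$ with
\[
F = M_{\chi_j\chi_k} + [\triangle, M_{\chi_j}]\, R_0 M_{\chi_k}.
\]
Applying \Cref{prop:Hintz} with $p_k$ as the sole marked point bounds $R_0 M_{\chi_k}$; since $[\triangle, M_{\chi_j}] \in \sigma\,\operatorname{Diff}^1_\hbar$ has Schwartz kernel supported away from every marked point, one gets $\lVert F\rVert \lesssim \sigma^{\varepsilon}$ between spaces carrying \emph{any} marked-point weights. Now $U = R_0 F$, and since the output of $F$ is supported away from every marked point (and from $\partial X$), one may apply \Cref{prop:Hintz} a \emph{second} time, now with $p_j$ as the sole marked point, to land directly in $H_\vee^{m,*,\ell,1/2}([X,p_j])$ --- which on $\operatorname{supp}\chi_j$ is the desired codomain. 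The second application thus produces the $\alpha = +1/2$ weight at the output marked point for free, with no conversion step and no appeal to Lagrangian structure.
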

\begin{proof}
	Let $\{\chi_j\}_{j=0}^\#\subset C^\infty(X)$ denote a partition of unity such that 
	\begin{itemize}
		\item the support of $\chi_0$ is disjoint from every marked point, 
		\item for $j\geq 1$, the support of $\chi_j$ contains at most one marked point and is disjoint from $\partial X$. 
	\end{itemize}
	The supports are as depicted in \Cref{fig:supports}. 
	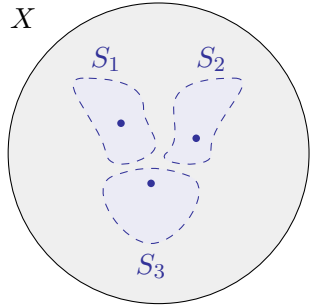
\begin{figure}[h!]
		\begin{tikzpicture}
			\begin{scope}
				\clip  (0,0) circle (2);
				\fill[lightgray!25] (-2,-2) rectangle (2,2);
				\filldraw[fill=darkblue!10, draw=darkblue, dashed] (-1,1) to[out=180, in=100] (-.7,0) to[in=220, out=-80] (-.1,0) to[out=40, in=-90] (-.2,.6) to[out=90,in=0] cycle;
				\filldraw[fill=darkblue!10, draw=darkblue, dashed] (1,1) to[out=0, in=100] (.7,0) to[in=220, out=-80] (.1,0) to[out=40, in=-90] (.2,.6) to[out=90,in=180] cycle;
				\filldraw[fill=darkblue!10, draw=darkblue, dashed] (-.1,-1.2) to[out=0, in=-20] (.4,-.3) to[out=160, in=20] (-.6,-.3) to[out=200,in=180] cycle;
				\node[darkblue] () at (-.7,1.25) {$S_1$};
				\node[darkblue] () at (.7,1.25) {$S_2$};
				\node[darkblue] () at (-.1,-1.5) {$S_3$};
			\end{scope}
			\fill[darkblue] (.5,.2) circle (.05);
			\fill[darkblue] (-.5,.4) circle (.05);
			\fill[darkblue] (-.1,-.4) circle (.05);
			\draw (0,0) circle (2);
			\node () at (-1.8,1.8) {$X$};
		\end{tikzpicture}
		\caption{The sets $S_j=\operatorname{supp}\chi_j$ in the proof of \Cref{thm:main_estimate} are chosen to only contain at most one marked point $p_j$ (blue) each. (They can even be chosen to be disjoint.) In this case, there are $\#=3$ marked points. The set $S_0$ containing $\partial X$ is not shown.}
		\label{fig:supports}
	\end{figure}
	Then, $R_0(\sigma \pm i0) = \mathtt{D} + \mathtt{O} + \mathtt{R} $
    for 
	\begin{equation}
			\mathtt{D} = \sum_{j=0}^\# M_{\chi_j}\circ 	R_0(\sigma\pm i0) \circ M_{\chi_j}, \qquad 
			\mathtt{O} = \sum_{\substack{j\neq k\\ j,k\neq 0}}  M_{\chi_k}\circ 	R_0(\sigma\pm i0) \circ M_{\chi_j} 
	\end{equation}
	and
	\begin{equation}
		\mathtt{R} = \sum_{j=1}^\# ( M_{\chi_0} \circ R_0(\sigma \pm i0) \circ M_{\chi_j}+ M_{\chi_j} \circ R_0(\sigma \pm i0) \circ M_{\chi_0} ). 
	\end{equation}
	The estimate for the ``diagonal part,'' $\mathtt{D}$  of $R_0(\sigma \pm i0)$, is immediate from \Cref{prop:Hintz}, as is the estimate for $\mathtt{R}$, which involves all of the other terms whose Schwartz kernels may have support at $\partial (X\times X)$. It is only the ``off-diagonal'' terms 
	$M_{\chi_j}\circ 	R_0(\sigma\pm i0) \circ M_{\chi_k} \in
        \mathtt{O}$, for nonzero $j\neq k$, that need to be controlled.
        These contain the information about how the various marked points
        communicate. 
	
	Let $U = M_{\chi_j}\circ 	R_0(\sigma\pm i0) \circ M_{\chi_k}$. This satisfies the PDE $P_0\circ U = F$ for 
	\begin{equation}
		F = M_{\chi_j \chi_k} + [P_0,M_{\chi_j}] \circ  R_0(\sigma \pm i0) \circ M_{\chi_k}.  
	\end{equation} 
	It follows from \Cref{prop:Hintz}, applied using $p_k$ as the marked point, that the operator 
	\begin{equation}
		[P_0,M_{\chi_j}]\circ  R_0(\sigma\pm i0) \circ M_{\chi_k}  = \underbrace{[\triangle,M_{\chi_j}]}_{\in \sigma \operatorname{Diff}^1_\hbar } \circ R_0(\sigma\pm i0) \circ M_{\chi_k}  
	\end{equation}
	satisfies 
	\begin{equation} 
		\lVert [P_0,M_{\chi_j}]\circ  R_0(\sigma\pm i0) \circ M_{\chi_k}  \rVert_{ H_{\vee}^{m-2,*,\ell-2,-1/2}\to   H_{\vee}^{m-1,*,*,*} } \lesssim \sigma^\varepsilon.
	\end{equation} 
	So, the ``forcing'' $F$ satisfies 
	\begin{equation}
		\lVert F \rVert_{ H_{\vee}^{m-2,*,\ell-2,-1/2}\to   H_{\vee}^{m-2,*,*,*} } \lesssim \sigma^\varepsilon.
        \label{eq:misc_kkk}
	\end{equation}

	The support of the output of $F$ does not intersect $\partial
        X$, so the fact that $P_0\circ U=F$ implies $U=R_0(\sigma \pm i0)\circ F$. Nor does the support of the output of $F$ contain any marked point. 
	So, noting that the output of $U= M_{\chi_j}\circ R_0 \circ M_{\chi_k}$ is supported away from every other marked point besides $p_j$, we can apply \Cref{prop:Hintz} 
	with $p_j$ as the marked point to conclude, from \cref{eq:misc_kkk}, that 
	\begin{equation}
		\lVert U \rVert_{ H_{\vee}^{m-2,*,\ell-2,-1/2}\to   H_{\vee}^{m,*,\ell,1/2} } \lesssim \frac{1}{\sigma^{1-2\varepsilon}}.
	\end{equation}
	Renaming $\varepsilon \to \varepsilon/2$, we are done.
\end{proof}

\begin{remark*}
	The crux of the argument above, controlling $M_{\chi_j}\circ R_0(\sigma\pm i0) \circ M_{\chi_k}$, is based on the following  algebraic manipulation:
	\begin{align*}
		M_{\chi_j} R_0 M_{\chi_k} = R_0 P_0 M_{\chi_j} R_0 M_{\chi_k}&= R_0 (  M_{\chi_j}P_0 R_0 M_{\chi_k} + [P_0, M_{\chi_j}]R_0 M_{\chi_k}  )   \\
		&= R_0 (  M_{\chi_j}M_{\chi_k} + [P_0, M_{\chi_j}]R_0 M_{\chi_k}  ) = R_0 F,
	\end{align*}
	where for brevity we abbreviated $R_0=R_0(\sigma \pm i0)$. 
	
	Now $F$ was controlled via \Cref{prop:Hintz} with $p_k$ as the marked point, whereas $R_0$, when applied to $F$, was controlled via the same proposition with $p_j$ as the marked point instead. Key therefore was our freedom to choose the marked point with respect to which the proposition is stated. 
	It is therefore essential that we are working with the \emph{free} Helmholtz operator on $X$.
	If the coefficients of $P_0$ had a singularity somewhere, for instance if the metric had a cone point, or if we tried to group some singular part of the potential with $P_0$ 
	then we could not apply \Cref{prop:Hintz} with any marked point \emph{other than} the cone point. This is the essential point where our setup differs from that in \cite{BaWu13}.
\end{remark*}

\section{Convergence of the Born series}
Using \Cref{thm:main_estimate}, it is easy to deduce the convergence of the Born series. First:
\begin{proposition}
	\label{prop:q_sep}
     For each $m\in\bbN$, $\mathsf{s}$ satisfying the incoming/outgoing Sommerfeld conditions, $\ell\in (2-d/2,d/2)$, $\varepsilon>0$, and $j_0\in \bbN$,  
	there exists a constant $C>0$ such that, whenever $j\geq j_0$, 
	\begin{equation}
	\lVert B_j(\sigma \pm i0) \rVert_{H_{\vee}^{m-2,\mathsf{s}+1,\ell-2,-1/2}\to H_{\vee}^{m+2j_0,\mathsf{s},\ell,1/2}(X) } \leq \Big( \frac{C}{\sigma^{1-\varepsilon}} \Big)^{j+1}
	\end{equation}
	for $\sigma\geq 1$. 
\end{proposition}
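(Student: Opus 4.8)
The plan is a direct induction on $j$, with all the work in tracking the four Sobolev indices so that \Cref{thm:main_estimate} can be reapplied at each stage. The base case $j=0$ is \Cref{thm:main_estimate} itself, applied with the given $m,\mathsf{s},\ell,\varepsilon$ and with $\Sigma=1$. For the inductive step I would use the recursion $B_{j+1}(\sigma\pm i0)=-R_0(\sigma\pm i0)\,M_V\,B_j(\sigma\pm i0)$ and estimate the three factors in turn. By the inductive hypothesis $B_j$ maps $H_{\vee}^{m-2,\mathsf{s}+1,\ell-2,-1/2}$ into $H_{\vee}^{m+2j,\mathsf{s},\ell,1/2}$ with norm $\le C^{j}\sigma^{-(1-\varepsilon)(j+1)}$. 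Next, $M_V$ maps $H_{\vee}^{m+2j,\mathsf{s},\ell,1/2}$ into $H_{\vee}^{m+2j,\mathsf{s}+1,\ell-1,-1/2}$, which in turn includes continuously into $H_{\vee}^{m+2j,\mathsf{s}+1,\ell-2,-1/2}$. Finally, $R_0(\sigma\pm i0)$ maps the latter into $H_{\vee}^{m+2(j+1),\mathsf{s},\ell,1/2}$ with norm $\lesssim\sigma^{-(1-\varepsilon)}$ by \Cref{thm:main_estimate} applied at regularity order $m+2j+2$ — legitimately, since the decay order $\mathsf{s}$ is unchanged (hence still satisfies the Sommerfeld conditions), $\ell\in(2-d/2,d/2)$ lies in the admissible range $(-2-d/2,d/2)$, and the $\alpha$-indices $-1/2$ (source) and $1/2$ (target) are exactly those appearing there. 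Multiplying the three operator norms yields the claim at level $j+1$, after enlarging $C$ to absorb the bounded contributions from $M_V$ and from the base case.

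For the $M_V$ step I would write $V=r^{-1}\cdot(rV)$ with $rV\in S^0([X;\text{markings}])$: multiplication by $rV$ is bounded on every $\vee$-space, uniformly in $\sigma$, because $rV$ does not depend on $\sigma$; multiplication by $r^{-1}$ implements the weight shift $H_{\vee}^{m,\mathsf{s},\ell,\alpha}\to H_{\vee}^{m,\mathsf{s}+1,\ell-1,\alpha-1}$, as one sees from the model identity
\[
r^{-1}\cdot r^{\ell}(r+h)^{\alpha-\ell}\langle r\rangle^{-(\mathsf{s}+\alpha)}=r^{\ell-1}(r+h)^{(\alpha-1)-(\ell-1)}\langle r\rangle^{-((\mathsf{s}+1)+(\alpha-1))},
\]
and with $\alpha=1/2$ this is the asserted target space. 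The continuous inclusion used immediately afterward holds because raising the marked-point decay index shrinks the space: since $r\le r+h$ one has $r^{\ell-1}(r+h)^{-1/2-(\ell-1)}\le r^{\ell-2}(r+h)^{-1/2-(\ell-2)}$ pointwise, so the inclusion has norm $\le 1$, uniformly in $h=1/\sigma$.

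The conceptual crux — and the reason the induction closes with the \emph{same} $\ell$ at every stage — is the arithmetic of the marked-point decay index: each $M_V$ costs exactly one order of decay there because $V$ has a simple pole, but \Cref{thm:main_estimate} (Hintz's estimate, repackaged) returns two orders of decay while simultaneously gaining $\sigma^{-(1-\varepsilon)}$, so the single surplus order is spent ``for free'' via the inclusion above. The one point demanding care — and the step I expect to be the main, if routine, obstacle — is the dependence of the implicit constants on the regularity order: the $(j+1)$-st step invokes \Cref{thm:main_estimate} at order $m+2j+2$, and the $H_{\vee}^{m+2j}$-operator norm of $M_V$ also a priori depends on $j$. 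One checks, by the usual elliptic-regularity bookkeeping for a fixed operator, that both grow at most geometrically in $j$, which is harmless since a geometric rate is absorbed into $C$. For $\sigma$ in any bounded subinterval of $[1,\infty)$ the asserted bound is in any case immediate once $C$ is large, so the content lies in the regime $\sigma\to\infty$ treated above.
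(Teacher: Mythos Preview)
Your approach is the same as the paper's: combine the mapping $M_V:H_\vee^{m',\mathsf{s},\ell,1/2}\to H_\vee^{m',\mathsf{s}+1,\ell-1,-1/2}\hookrightarrow H_\vee^{m',\mathsf{s}+1,\ell-2,-1/2}$ with the $R_0$ bound from \Cref{thm:main_estimate} to obtain $\sigma^{1-\varepsilon}M_VR_0:H_\vee^{m'-2,\mathsf{s}+1,\ell-2,-1/2}\to H_\vee^{m',\mathsf{s}+1,\ell-2,-1/2}$ uniformly, then iterate $j$ times and cap with one more $R_0$. You spell out more than the paper does (the weight identity for $r^{-1}$, the pointwise inequality justifying the inclusion $\ell-1\mapsto\ell-2$); the paper dispatches the whole argument in three lines.

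One caution on the point you single out as the ``main, if routine, obstacle.'' You correctly flag that the constants from \Cref{thm:main_estimate} and from $M_V$ are invoked at regularity order $m+2k$, varying with the step $k$. But your remedy is mis-stated: if the constant at step $k$ grows like $B^k$, the product over $k\le j$ is of order $B^{j(j+1)/2}$, which is super-exponential and cannot be absorbed into $C^j$. What is actually needed for the bound as stated is that the constants are \emph{uniformly bounded} in the regularity index---plausible via semiclassical order-reduction (conjugation by an elliptic $\langle hD\rangle^k$ leaves the semiclassical principal symbol of $h^2\triangle-1$ unchanged), but that is not the claim you made. The paper's own proof simply says ``the claimed mapping property of $B_j$ follows immediately'' and does not address this point at all.
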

\begin{proof}
	The definition, \cref{eq:Bj_def}, of $B_j$ was $B_j(\sigma \pm i0) = R_0(\sigma \pm i0) (-M_V R_0(\sigma \pm i0))^{j}$.
	Combining 
	\begin{itemize}
		\item $\lVert M_V \rVert_{H_{\vee}^{m,\mathsf{s},\ell,1/2}\to H_{\vee}^{m,\mathsf{s}+1,\ell-1,-1/2}  } \lesssim 1$
        with the conclusion
		\item  $\lVert R_0(\sigma\pm i0) \rVert_{ H_{\vee}^{m-2,\mathsf{s}+1,\ell-2,-1/2}\to   H_{\vee}^{m,\mathsf{s},\ell,1/2} } \lesssim \sigma^{-(1-\varepsilon)}$
		of \Cref{thm:main_estimate},
	\end{itemize}
	we have $\sigma^{1-\varepsilon} M_V R_0(\sigma \pm i0) :
        H_{\vee}^{m-2,\mathsf{s}+1,\ell-2,-1/2} \to
        H_{\vee}^{m,\mathsf{s}+1,\ell-1,-1/2} $, uniformly. 
     We only need that 
     \begin{equation*}
	     \sigma^{1-\varepsilon} M_V R_0(\sigma \pm i0) :
	     H_{\vee}^{m-2,\mathsf{s}+1,\ell-2,-1/2} \to
	     H_{\vee}^{m,\mathsf{s}+1,\ell-2,-1/2} 
     \end{equation*}
     uniformly.    
     In other words, there exists some constant $C_1>0$ such that 
     \begin{equation*} 
     	\lVert M_V R_0(\sigma \pm i0) \rVert_{H_{\vee}^{m-2,\mathsf{s}+1,\ell-2,-1/2} \to
     	H_{\vee}^{m,\mathsf{s}+1,\ell-2,-1/2}} \leq \frac{C_1}{\sigma^{1-\varepsilon}} .
    \end{equation*}     
    Combining this with the uniform bound on $R_0$, it follows that 
    \begin{equation*}
    	\lVert B_{j_0}(\sigma \pm i0) \rVert_{H_{\vee}^{m-2,\mathsf{s}+1,\ell-2,-1/2}\to H_{\vee}^{m+2j_0,\mathsf{s},\ell,1/2}(X) } \leq C_0 \sigma^{-(1-\varepsilon)(j_0+1) }, 
    \end{equation*}
    for some constant $C_0>0$. Since $B_j(\sigma\pm i0) = B_{j_0}(\sigma \pm i0) (-M_V R_0(\sigma \pm i0))^{j-j_0}$, 
    \begin{multline*}
    \lVert B_j(\sigma \pm i0) \rVert_{H_{\vee}^{m-2,\mathsf{s}+1,\ell-2,-1/2}\to H_{\vee}^{m+2j_0,\mathsf{s},\ell,1/2}(X) } \leq \lVert B_{j_0}(\sigma \pm i0) \rVert_{H_{\vee}^{m-2,\mathsf{s}+1,\ell-2,-1/2}\to H_{\vee}^{m+2j_0,\mathsf{s},\ell,1/2}(X) } \\ \times  \lVert M_V R_0(\sigma \pm i0) \rVert_{H_{\vee}^{m-2,\mathsf{s}+1,\ell-2,-1/2}\to H_{\vee}^{m,\mathsf{s}+1,\ell-2,-1/2}(X) }^{j-j_0} \leq C_0 C_1^{j-j_0} \sigma^{-(1-\varepsilon)(j+1)}.  
    \end{multline*}
    Choosing $C>0$ large enough such that $C^{j+1}\geq C_0 C_1^{j-j_0}$ for all $j\geq j_0$, we have arrived at the claimed mapping property of $B_j$.
\end{proof}

\begin{theorem}[Convergence of Born series]
	\label{thm:main}
    For $m\in\bbN$, $\ell \in (2-d/2,d/2)$, and $\mathsf{s}$ as above, there exists a $\Sigma>0$ such that if $\sigma>\Sigma$, then 
		the Born series $\sum_{j=0}^\infty B_j(\sigma \pm i0)$
                converges in 
		\begin{equation}
		\calL(H_{\mathrm{sc-b}}^{m-2,\mathsf{s}+1,\ell-2 }(X), H_{\mathrm{sc-b}}^{m,\mathsf{s},\ell}(X) )
		\end{equation}
		(in the operator-norm topology) to the resolvent $R_V(\sigma \pm i0)$, with, for any $\varepsilon>0$, the bound 
        \begin{equation}
            \lVert R_V(\sigma \pm i0) \rVert_{H_{\vee}^{m-2,\mathsf{s}+1,\ell-2,-1/2 }(X)\to  H_\vee^{m,\mathsf{s},\ell,1/2}(X) } = O \Big(\frac{1}{\sigma^{1-\varepsilon}}\Big)  
        \end{equation}
        as $\sigma\to\infty$.

		Moreover, the tail of the Born series is increasingly regularizing: for any $\epsilon,\delta>0$,  for $\sigma>\Sigma$, for $\Sigma>1$ large enough, the tail of the series converges to zero in the norm topology on 
        \begin{equation} 
        \calL( H_{\mathrm{sc-b}}^{m-2,1/2+\epsilon,\ell-2 }(X), H^{M,-1/2-\epsilon,d/2-\delta}_{\mathrm{sc-b}}(X) ),
        \end{equation} 
        for any $M\in \bbR$. Moreover, for any $K\in \bbN$, if $N\gg 1$ is large enough, then the tail
        \begin{equation}
            \sum_{j=N}^\infty B_j(\sigma \pm i0)  
            \label{eq:Born_tail}
        \end{equation}
        is, as $\sigma\to\infty$, $O(1/\sigma^K)$ as an operator $H_{\mathrm{sc-b}}^{m-2,1/2+\epsilon,\ell-2 }(X)\to H^{M,-1/2-\epsilon,d/2-\delta}_{\mathrm{sc-b}}(X) )$. 
\end{theorem}
\begin{proof}By \Cref{prop:q_sep}, the Born series and its tail are geometrically convergent in the stated topologies if $\sigma$ is large enough. In fact, it converges in
\begin{equation}
		\calL(H_{\vee}^{m-2,\mathsf{s}+1,\ell-2, -1/2 }(X), H_\vee^{m,\mathsf{s},\ell, 1/2}(X) ),  
	\end{equation} 
	to an $O(1/\sigma^{1-\varepsilon})$ operator.
	Applying $P_V=\triangle+V-\sigma^2$ to the Born series, we can do so term-by-term, so whatever the Born series converges to, it produces a right-inverse to $P_V$. Because the limiting resolvent is uniquely characterized by producing a solution in $H^{*,\mathsf{s},*}_{\mathrm{sc}}$, the result of summing the Born series must be the actual resolvent. 

    When proving that the tails of the Born series are regularizing, one has to convert $\vee$-regularity (which involves iterated regularity under the application of vector fields with an $h$ out front) to ordinary $\mathrm{sc-b}$-regularity (which is $h$-independent); $\vee$-regularity is weaker than $\mathrm{sc-b}$-regularity. However, any finite number of orders of $\vee$-regularity can be upgraded to the same number of orders of $\mathrm{sc-b}$-regularity \emph{at a cost of some power $\sigma^k$ of $\sigma$} -- see \eqref{eq:sobolevinclusion1}, \eqref{eq:sobolevinclusion2} below. Fortunately, if we go far enough out in the Born series -- that is, if $N\gg 1$ in \cref{eq:Born_tail} is large enough -- then all of the terms in the Born series are $O(1/\sigma^k)$ in operator norm between the relevant $\vee$-Sobolev spaces.
\end{proof}

\begin{remark*}
	The exact same argument shows that if  $V\in \langle r \rangle^{-1} S^0([X;\text{markings}]) +  r^{-1}L^\infty_{\mathrm{c}}(X^\circ)$, then the Born series converges in 
	\begin{equation} 
	\calL(H_{\mathrm{sc-b}}^{0,1/2+\varepsilon,\ell-2 }(X), H_{\mathrm{sc-b}}^{0,-1/2-\varepsilon,\ell}(X) ) = \calL(\langle r \rangle^{-\ell+3/2-\varepsilon} r^{\ell-2} L^2(X), \langle r \rangle^{-\ell+1/2+\varepsilon} r^{\ell} L^2(X) )
	\end{equation} 
	to the actual resolvent, with a corresponding uniform bound in the $H_\vee$-spaces. The point is that we can handle arbitrary $L^\infty_{\mathrm{c}}(X^\circ)$ potentials, at the cost of giving up on the semiclassical smoothing of $B_j$.
\end{remark*}

\section{Application: series formula for Coulomb plane waves}
\label{sec:multi-Coulomb}

The purpose of this section is to illustrate how the Born series, in the form presented above, can be deployed to understand perturbed plane waves.\footnote{A \emph{perturbed plane wave} is a solution to $Pu=0$ that, for $r\gg 1$, looks like a plane wave plus an outgoing scattered wave. Physicists use perturbed plane waves to model what happens when a wave deflects off of a potential. A perturbed plane wave is the correct model (as opposed to, say, a perturbed spherical wave) when the incoming wave is much wider than whatever the wave is deflecting off of. Perturbed plane waves can also be used to construct the S-matrix \cite{ZwMe}.

What this means precisely depends on whether the potential is short-range or Coulomb. If the potential is short-range, then to top order it is still the case that $u(\bfx)\approx e^{i \sigma x}$, where the difference lies in the anisotropic sc-Sobolev space describing outgoing waves; in other words, the scattered wave may be viewed as a perturbation when viewed far away from the potential poles.  (For simplicity, we only consider plane waves moving in the $x$-direction.)

By contrast, in the presence of a (long-range!) Coulomb potential, we cannot find $u\approx e^{i\sigma x}$ such that $Pu= 0$. Instead, $u\approx e^{i\sigma x} (r-x)^{-i\mathsf{Z}/2\sigma}$ is the relevant replacement (with $\approx$ having the same meaning as in the previous paragraph), and the scattered wave cannot be so readily teased apart from the plane wave. This has to do with the logarithmic divergence of Coulomb trajectories from free trajectories.} 
Rather than aim for maximal generality, we will focus on a single example of historical interest, a superposition of multiple Coulomb potentials. We restrict attention to Euclidean space: $X=\overline{\bbR^3}$ with $\#\in \bbN^+$. The Green function for the free Helmholtz equation (with the outgoing radiation condition) is then
\begin{equation}
    R_0(\sigma +i0)(\bfx,\bfx') = \frac{e^{i\sigma |\bfx-\bfx'| } }{4\pi |\bfx-\bfx'|}.
\end{equation}
Expressions involving $R_0$ below are therefore explicit integrals.

Consider $\#\in \bbN^+$ marked points
$\bfx_1,\dots,\bfx_\# \in \bbR^3$. Now consider the
``Born--Oppenheimer'' operator 
\begin{equation}
P = - \frac{\partial^2}{\partial x^2}-\frac{\partial^2}{\partial y^2}-\frac{\partial^2}{\partial z^2} \underbrace{- \sum_{n=1}^\# \frac{\mathsf{Z}_n}{|\bfx-\bfx_n|}}_V - \sigma^2,  
\label{eq:P_Coulomb_section}
\end{equation}
where $\mathsf{Z}_n\in \bbR\backslash \{0\}$ and $\bfx=(x,y,z)$. A solution $u$ to $Pu=0$ describes the wavefunction of a (nonrelativistic) electron of energy $\sigma^2$ moving under the influence of $\#$ point charges $\mathsf{Z}_n$ clamped in place, or approximated as such. 

\begin{remark*}
    According to our sign conventions, a positive $\mathsf{Z}_\bullet$ means an attractive force, and a negative $\mathsf{Z}_\bullet$ means a repulsive force. However, nothing we say will depend on the sign of $\mathsf{Z}_\bullet$. 
\end{remark*}

\begin{example}[Exact Coulomb plane waves \cite{Temple}]
	Consider the case with only a single Coulomb potential, i.e.\ $\#=1$. Then, without loss of generality, we can take $p=p_1$ to be the origin, and we abbreviate $\mathsf{Z}_1=\mathsf{Z}$. Consider 
	\begin{equation}
	u(\bfx) = e^{i\sigma x} v(r-x), \quad 2sv''(s) + 2(1-is\sigma ) v'(s)+ \mathsf{Z} v(s) =0,
	\end{equation} 
	where $r=\lvert \bfx \rvert$. Then $u$ satisfies $Pu=0$.
	The ODE that $v$ satisfies is a form of the confluent hypergeometric equation \cite[\href{http://dlmf.nist.gov/13.2.E1}{Eq.\ 13.2.1}]{NIST}, with $a,b$ parameters $a=i\mathsf{Z}/2\sigma$ and $b=1$. 
	\emph{Which} confluent hypergeometric function $v$ should be (there is a two-dimensional space of possibilities) is determined by enforcing smoothness of $u$ near $r=x$, away from the origin (since $u$ solves an elliptic equation with smooth coefficients).
	This forces $v$ to be proportional to \emph{Kummer's confluent hypergeometric function} $M(a,b,i s \sigma)$ \cite[\href{http://dlmf.nist.gov/13.2.E2}{Eq.\ 13.2.2}]{NIST}, also known as ${}_1F_1$.
        (Indeed, the usual choice of second linearly independent\footnote{$M(a,b,z)$ is independent from Tricomi's confluent hypergeometric function $U(a,b,z)$ unless $a$ is a nonpositive integer, in which case $U(a,b,z)$ manages to be analytic at $z=0$. For us, $a\in i \bbR$, so we do not encounter the exceptional $a$. We are interested in the $a\to 0$ limit, however.} solution to the confluent hypergeometric equation, $U(a,b,is \sigma )$, has a logarithmic singularity as $s=0$ \cite[\href{http://dlmf.nist.gov/13.2.E19}{Eq.\ 13.2.19}]{NIST}.)
    The constant of proportionality is just a normalization constant. We thus take
	\begin{align}
	\begin{split} 
		v(s) &=  C M\Big( \frac{i\mathsf{Z}}{2\sigma},1,is \sigma \Big), \qquad C= (-i\sigma)^{\frac{i\mathsf{Z}}{2\sigma}} \Gamma\Big(1 - \frac{i\mathsf{Z}}{2\sigma} \Big),\\
		u(\bfx) &= Ce^{i\sigma x}M\Big( \frac{i\mathsf{Z}}{2\sigma},1,i(r-x) \sigma \Big) \in C^\infty(\bbR^3\backslash \{0\}).\end{split}
		\label{eq:Coulomb_exact}
	\end{align} 
    \begin{remark*} Because $M(0,-,z)=1$, 
    when $\mathsf{Z}=0$ we recover the ordinary plane wave, $u= e^{i\sigma x}$. 
    \end{remark*}
	It follows from the large-argument expansions of $M$ \cite[\href{http://dlmf.nist.gov/13.7.E2}{Eq.\ 13.7.2}]{NIST} that 
	\begin{equation}
	u(\bfx) = e^{i\sigma x} \underbrace{\exp \Big[- \frac{i\mathsf{Z}}{2\sigma}\log(r-x) \Big]}_{=(r-x)^{-i\mathsf{Z}/2\sigma}} + O\Big(\frac{1}{r} \Big)  
	\label{eq:u}
	\end{equation}
	as $r\to\infty$ within any backwards cone. See \Cref{fig:Mplot}. Evidently, the approximation $u(\bfx)\approx e^{i\sigma x}$ misses a logarithmically oscillating term $(r-x)^{-i\mathsf{Z}/2\sigma}$.

    \begin{figure}
        \centering
        \includegraphics[width=0.7\linewidth]{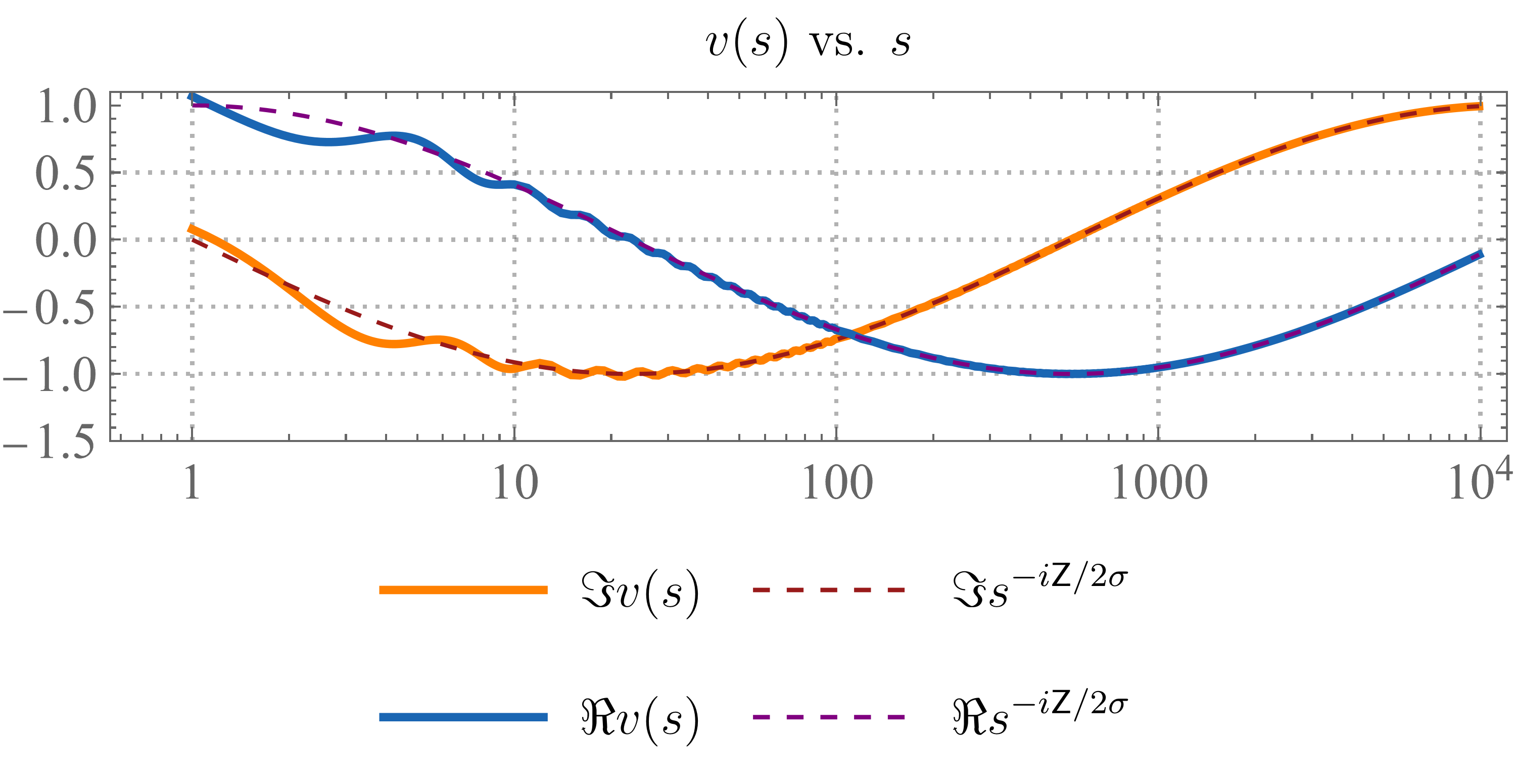}
        \caption{The real and imaginary parts of the function $v(s)$, plotted on a log-linear plot vs. $s$. Also shown are the real and imaginary parts of $s^{-i\mathsf{Z}/2\sigma}$, which $v(s)$ approximates at large $s$.}
        \label{fig:Mplot}
    \end{figure}
\end{example}

As $r\to\infty$, the potential $V(\bfx)$ can be expanded in powers of $1/r$:
\begin{equation}
-V(\bfx)=\sum_{n=1}^\# \frac{\mathsf{Z}_n}{| \bfx-\bfx_n|} = \frac{\mathsf{Z}}{r} + \frac{\bfx \cdot \bfa}{r^3} + O\Big(\frac{1}{r^3}\Big),
\label{eq:multipole}
\end{equation}
where 
\begin{itemize}
	\item $\mathsf{Z}=\sum_{n=1}^\# \mathsf{Z}_n \in \bbR$
	denotes the total charge (a.k.a.\ monopole moment), which can be positive, negative, or zero, and 
	\item $\bfa \in \bbR^3$ is the dipole moment. Concretely, $\bfa = \sum_{n=1}^\# \mathsf{Z}_n \bfx_n $.
\end{itemize}
This is the multipole expansion.
Below, we will use 
\begin{equation*}
W = -V - \frac{\mathsf{Z}}{r} -\frac{\bfx\cdot \bfa}{r^3}
\end{equation*}
to denote the $O(1/r^3)$ error term in \cref{eq:multipole}.

If the total charge $\mathsf{Z}$ is nonzero, then the dipole moment $\bfa$ depends on the choice of origin: 
fix $\bfx_0\in \bbR^3$, and consider moving each point charge $\mathsf{Z}_j$ from $\bfx_j$ to $\bfx_j+\bfx_0$. Then, $\bfa$ changes by $\mathsf{Z} \bfx_0$. We can use this freedom to reduce to the case where the dipole moment is zero: $\bfa=0$. 

The plan for the rest of this section is as follows:
\begin{itemize}
    \item As a warm-up, we will consider in \S\ref{sub:short} the case where the 
monopole and dipole moments both vanish: 
\begin{equation} 
    \mathsf{Z},\bfa=0.
\end{equation}
In this case, the potential is short-range: \begin{equation*} 
V=O\Big(\frac{1}{r^3}\Big).
\end{equation*} 
    \item  We will consider in \S\ref{sub:long} the case where $\mathsf{Z}\neq 0$ and $\bfa=0$. Then, the potential is Coulomb (with quadrupole and higher-order corrections): 
    \begin{equation*} 
    	V=-\frac{\mathsf{Z}}{r}+O\Big(\frac{1}{r^3}\Big).
    \end{equation*}  
    \item  The remaining case, where $\mathsf{Z}=0$ but $\bfa\neq 0$, requires special attention and will be dispatched in \S\ref{sub:dipole}. This is the case of a dipole potential $V\sim 1/r^2$ (with higher-order corrections).
\end{itemize}
Note that this trichotomy is only about the $r\to\infty$ behavior. All of our potentials are at worst Coulombic as $r\to 0$.

\subsection{The short-range case}
\label{sub:short}
Suppose that $\mathsf{Z},\bfa=0$. 
Then, the potential consists of at worst a quadrupole potential, decaying like $1/r^3$ as $r\to\infty$, plus faster decaying terms:
\begin{equation}
-V(\bfx) =  \frac{\bfx^\intercal Q \bfx}{r^5} + O\Big(\frac{1}{r^4} \Big) ,\quad Q\in \mathbb{R}^{3\times 3}
\label{eq:quadropole}
\end{equation}
as $r\to\infty$, where $Q$ is a symmetric 3-by-3 matrix, the ``quadrupole moment'' of the charge distribution. The potential $V$ is therefore short range.

Let $u_0 = e^{i \sigma x}$. Then, $Pu_0 \neq 0$ (unless all the charges are zero), but $Pu_0\approx 0$, in the sense that the quantity 
\begin{equation} 
    f_0\overset{\mathrm{def}}{=} Pu_0 = e^{i\sigma x}  V(\bfx) = e^{i\sigma x} O\Big(\frac{1}{r^3} \Big)  
\end{equation} 
is decaying fast enough to lie in the domain of the resolvent. Specifically:
\begin{itemize}
    \item as $r\to\infty$, we have $O(1/r^3)$ decay, which means lying in $\langle r \rangle^{-3/2+\varepsilon} L^2\{r\gg 1\}$, and 
    \item as $\bfx\to \bfx_n$, we have $O(1/r)$ growth, which is not severe enough to prevent lying in $L^2$ locally.
\end{itemize}
Thus, $f_0 \in \langle r \rangle^{-3/2+\varepsilon} L^2(\bbR^3)= H_{\mathrm{sc}}^{0,3/2-}(X)$. This is contained in the domain of the resolvent, $\calY_+$.
We can therefore define an \emph{exact} solution to $Pu=0$ by writing 
\begin{equation} 
u = u_0  - R_{V}(\sigma + i0) f_0 \in e^{i\sigma x} + \calX_+ .
        \label{eq:misc_a0}
\end{equation}
This is the perturbed plane wave.

The Born series for the resolvent then yields a series representation for $u$: 
\begin{proposition}
	\label{prop:short_range_main}
    There exists a $\Sigma>0$ such that, if $\sigma>\Sigma$, then 
    \begin{equation}
        u = u_0 - R_0(\sigma +i0)\sum_{j=0}^\infty  (-M_V R_0(\sigma +i0))^jf_0 
    \end{equation}
    on every compact subset away from the marked points. If $J\gg 1$ is sufficiently large, each term in the tail with $j \geq J$ is $O(1/\sigma^{1-\varepsilon})$ relative to the previous term.
\end{proposition}
\begin{proof}
    This follows from plugging \Cref{thm:main} into \cref{eq:misc_a0} and using Sobolev embedding. We do not even need uniform bounds on $f_0$.
\end{proof}

Under the assumption $\bfa=0$, the norm $\lVert f_0 \rVert_{ \langle r \rangle^{-1+\varepsilon}L^2} $ is bounded uniformly in $\sigma$, for any $\varepsilon>0$.  Consequently, the previous proposition gives bounds on $u$ which are uniform in the $\sigma\to\infty$ limit. 
 
\begin{remark*}
    As the arguments above make clear, we only made use of (i) the short-range nature of the potential as $r\to\infty$ and (ii) the Coulomb singularity of the potentials as $r\to 0$. Consequently, \Cref{prop:short_range_main} applies, \textit{mutatis mutandis}, to other potentials with the same large- and small-$r$ behavior. For example, it applies to any superposition of Yukawa potentials $V(\bfx)= e^{-\gamma r}/r$, $\gamma>0$.
\end{remark*}

\subsection{The long-range case}
\label{sub:long}
If either of the monopole moment $\mathsf{Z}$ or the dipole moment $\bfa$ is nonzero, then the function $f_0 = e^{i\sigma x} V(\bfx)$ defined in the last subsection is
\begin{equation}
    f_0 = \begin{cases}
        \Omega(1/r) & (\mathsf{Z}\neq 0), \\ 
        \Omega(1/r^2) & (\mathsf{Z}=0,\; \bfa\neq 0),
    \end{cases}
    \label{eq:f00}
\end{equation}
as $r\to\infty$ 
(excepting a measure-zero set of directions in the latter case). Consequently,  
\begin{equation} 
f_0\notin 
\begin{cases}
L^2(\bbR^3) & (\mathsf{Z}\neq 0), \\ 
\langle r \rangle^{-1/2} L^2(\bbR^3)   &(\mathsf{Z}=0,\,\bfa\neq 0).
\end{cases}
\end{equation} 
For $f_0$ to lie in the domain of the resolvent, $\calY_+$, we would want $f_0\in \langle r \rangle^{-1/2-\varepsilon} L^2$ (for some $\varepsilon>0$).  When $\mathsf{Z}=0$, our $f_0$ has borderline decay -- it just barely fails to lie in $\calY_+$. In the $\mathsf{Z}\neq 0$ case, it is not close. Thus we cannot use \cref{eq:misc_a0} as a useful first ansatz to obtain the perturbed plane wave $u$ by applying the Born series, since we cannot even apply the operators to this function.

The problem is that $u_0=e^{i\sigma x}$ is just not a very good approximation to the true perturbed plane wave. 
It is not hard to find a better approximation, as we now explain.  
Let $\mathsf{Z}\neq 0$. (We will discuss in the next subsection what to do when $\mathsf{Z}=0$ and $\bfa\neq 0$.)

A natural guess, that should yield a better approximation when  $r\gg 1$, is to take the exact Coulomb plane wave with charge $\mathsf{Z}$. 
Intuitively, one expects that, far away, a collection of point charges is indistinguishable from a single point charge whose charge is the total charge $\mathsf{Z}$. This is what the multipole expansion \cref{eq:multipole} makes precise, and extends.
Thus instead of defining $u_0=e^{i\sigma x}$, we let
\begin{equation}
u_0(\bfx) = C \chi \Big( \frac{1}{r} \Big)  e^{i\sigma x}M\Big( \frac{i\mathsf{Z}}{2\sigma},1,i(r-x) \sigma \Big),
\label{eq:Coulomb_ansatz}
\end{equation}
where $\chi \in C_{\mathrm{c}}^\infty(\bbR)$ is identically one near the origin. (Thus $\chi(1/r)$ localizes near infinity.) Then $Pu_0=f_0$ for
\begin{equation}
f_0 = -  W(\bfx) u_0 + \underbrace{[P, \chi(r^{-1})] \overbrace{(u_0 / \chi(r^{-1}))}^{\mathclap{Ce^{i\sigma x} M(i\mathsf{Z}/2\sigma,1,i(r-x)\sigma)}} }_{\in C_{\mathrm{c}}^\infty(\bbR^3)}, 
\label{eq:f0}
\end{equation}
where $W$ is the $O(1/r^3)$ error term in \cref{eq:multipole}. The absence of a $O(1/r^2)$ term in \cref{eq:f0} is due to $\bfa=0$, the case to which we reduced.

In \Cref{lem:Kummer_bound} we prove that the $M(a,b,-)$ factor in $u_0$ lies in $L^\infty(\bbR^3)$; see also the $z\to\infty$ asymptotics of $M(a,b,z)$ in \cite[\href{http://dlmf.nist.gov/13.7.E2}{Eq. 13.7.2}]{NIST}. Consequently, $f_0$ is $O(1/r^{3})$ as $r\to\infty$, i.e., it has two whole orders more decay as $r\to\infty$ than \cref{eq:f00} yields in the general long-range case.
It follows that

\begin{equation} 
f_0\in \langle r \rangle^{-3/2+\varepsilon}L^2\subset \calY_+.
\end{equation}
(Note that we have not needed to say anything about the sc-wavefront set of $f_0$.)
We can then define our perturbed plane wave $u$ by \cref{eq:misc_a0}. 
\begin{proposition}
    The conclusion of \Cref{prop:short_range_main} applies, except with $u_0$ defined by \cref{eq:Coulomb_ansatz} and $f_0 = Pu_0$. 
    \label{thm:Coulomb_main}
\end{proposition}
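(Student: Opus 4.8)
The plan is to imitate the proof of \Cref{prop:short_range_main} line for line; the one genuinely new point is to verify that the forcing term $f_0 = Pu_0$ attached to the Coulomb ansatz \cref{eq:Coulomb_ansatz} still lies in the domain $\calY_+$ of the limiting resolvent. Granting that, \cref{eq:misc_a0} defines a genuine solution $u = u_0 - R_V(\sigma+i0) f_0$ of $Pu = 0$, and expanding $R_V(\sigma+i0)$ through its Born series via \Cref{thm:main} yields the asserted series representation. Convergence of that series in $\calL(H_{\mathrm{sc-b}}^{m-2,\mathsf{s}+1,\ell-2}, H_{\mathrm{sc-b}}^{m,\mathsf{s},\ell})$ for every $m$, together with Sobolev embedding on the compact set $K$, upgrades this to uniform convergence on $K$; and the ``$O(1/\sigma^{1-\varepsilon})$ per step'' statement for the tail is just the geometric decay furnished by \Cref{prop:q_sep} applied to this particular $f_0$ — one goes out far enough in the series (that is, takes $J\gg 1$) precisely so that the accumulated powers $\sigma^{-(1-\varepsilon)j}$ dominate the (possibly mildly $\sigma$-dependent) size of $f_0$; as in the short-range case, no uniform-in-$\sigma$ control of $f_0$ is actually needed.

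So the entire content is the membership $f_0 \in \calY_+$. First I would use \cref{eq:f0} to write $f_0 = -W(\bfx) u_0 + [P,\chi(r^{-1})] u_0 / \chi(r^{-1})$, where $W$ is the $O(1/r^3)$ remainder in the multipole expansion \cref{eq:multipole}; here the earlier reduction to $\bfa = 0$ is essential, as it kills the $O(1/r^2)$ dipole term that would otherwise be present in $W$. Both summands are supported near $\partial X$, since the cutoff $\chi(r^{-1})$ localizes $u_0$ near infinity, so there is no issue at the marked points; the second summand is moreover smooth and compactly supported, hence in $\calY_+$ for free. For the first summand $W u_0$ one needs only an $L^\infty$ bound on $u_0$ — equivalently on the Kummer factor $M(i\mathsf{Z}/2\sigma, 1, i(r-x)\sigma)$ appearing in \cref{eq:Coulomb_ansatz} — and this is exactly \Cref{lem:Kummer_bound}. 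Granting that bound, $W u_0 = O(1/r^3)$ as $r \to \infty$, so $f_0 \in \langle r \rangle^{-3/2+\varepsilon} L^2(\bbR^3) = H_{\mathrm{sc}}^{0,3/2-}(X) \subset \calY_+$ for every $\varepsilon > 0$, with nothing needing to be said about the sc-wavefront set of $f_0$. The argument then closes exactly as in \Cref{prop:short_range_main}.

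The one non-routine ingredient is thus the uniform Kummer bound isolated in \Cref{lem:Kummer_bound}: one wants $|M(a,1,z)|$ controlled not merely for a fixed parameter but uniformly (or at least locally uniformly in $\sigma$) as the parameter $a = i\mathsf{Z}/2\sigma$ tends to $0$ and the argument $z = i(r-x)\sigma$ sweeps out the whole imaginary axis. The hard part there is to patch the large-$|z|$ asymptotics of $M$ together with control near $z = 0$ while tracking the $a$-dependence, and to see that the estimate degenerates correctly to the trivial bound $M(0,1,z) \equiv 1$ in the $\sigma\to\infty$ limit. Everything else in the proposition is formally identical to the short-range situation treated in \S\ref{sub:short}, the only substantive change being the replacement of the naive ansatz $e^{i\sigma x}$ by the exact Coulomb wave with total charge $\mathsf{Z}$, which buys the two extra orders of decay of $f_0$ at infinity needed to enter $\calY_+$.
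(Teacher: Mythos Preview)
Your proposal is correct and follows the paper's approach essentially verbatim: the paper establishes $f_0 \in \langle r\rangle^{-3/2+\varepsilon} L^2 \subset \calY_+$ in the text immediately preceding the proposition (via \cref{eq:f0}, the reduction $\bfa=0$, and the Kummer bound \Cref{lem:Kummer_bound}), so that its actual proof is the single line ``By \Cref{thm:main}.'' You have simply folded that preceding discussion into the body of the proof, which is a matter of organization rather than substance.
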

\begin{proof}
    By
    \Cref{thm:main}.  
\end{proof}

Note that the bound in \Cref{lem:Kummer_bound} is uniform as $\sigma\to\infty$. An easier argument using the formula 
\begin{equation}
    \partial_z M(a,1,z) = \frac{1}{\Gamma(1-a)\Gamma(a)} \int_0^1 e^{zt} t^a (1-t)^{-a} \dd t
\end{equation}
(which holds when $-1<\Re a < 1$) gives a uniform bound on the $C_{\mathrm{c}}^\infty$ term in \cref{eq:f0} where a derivative falls on the $M$-function. There is also a $C_{\mathrm{c}}^\infty$ term where a derivative falls on $e^{i\sigma x}$, producing an $O(\sigma)$. So, the previous proposition yields an $O(\sigma)$ bound on the full perturbed plane wave, away from the marked points and spatial infinity.

    

\subsection{The dipole case}
\label{sub:dipole}
We now turn to the case of vanishing total charge $\mathsf{Z}$, but nonvanishing dipole moment $\bfa$.
Because $\mathsf{Z}=0$, the ansatz $u_0$ above becomes simply $u_0=e^{i\sigma x}$.

We saw above that for such potentials, this ansatz, while a better approximation to the perturbed plane wave than in the charged case, still fails to solve $Pu_0\approx 0$ to a sufficiently good degree of approximation for $f_0=Pu_0$ to lie in the known domain of the resolvent, $\calY_+$. The problem is that $f_0$ still has slightly too little decay; namely, $f_0=O(1/r^2)$, which is borderline for lying in $\calY_+$ in three dimensions.

The situation is actually slightly better than just portrayed, microlocally speaking, because to lie in the domain $\calY_+$ requires only having an above-threshold amount of decay \emph{on the incoming radial set} $\calR= \operatorname{WF}_{\mathrm{sc}}(e^{-i \sigma \langle r \rangle})$: 
\begin{equation} 
\operatorname{WF}_{\mathrm{sc}}^{*,1/2+\varepsilon}(f_0)\cap \calR=\varnothing
\end{equation}
for some $\varepsilon>0$. The point is that $f_0$ satisfies this, except over the backwards direction.
Indeed, since $P$ is a sc-differential operator, and sc-differential (or pseudodifferential) operators do not spread sc-wavefront sets, 
\begin{equation}
    \operatorname{WF}_{\mathrm{sc}}(f_0)\subset \operatorname{WF}_{\mathrm{sc}}(u_0)=\operatorname{WF}_{\mathrm{sc}}(e^{i\sigma x}),
\end{equation}
and $\operatorname{WF}_{\mathrm{sc}}(e^{i\sigma x})$ intersects $\calR$ only over the backwards direction, where the incoming spherical wave $e^{-i\sigma r}$ agrees with the plane wave $e^{i\sigma x}$.

To fix the problem in the backwards direction, let 
\begin{equation}
u_1(x,y,z) = \underbrace{e^{i\sigma x}}_{=u_0}- \chi \Big(\frac{1}{r}\Big) \chi \Big( \frac{\sqrt{y^2+z^2}}{x}\Big) 1_{x\leq 0} \frac{e^{i\sigma x}}{2i \sigma} \underbrace{\int_{-\infty}^x \frac{s a_1+y a_2+z a_3}{(s^2+y^2+z^2)^{3/2}}  \dd s}_{-v(\bfx)} , 
\label{eq:modified_ansatz}
\end{equation}
where $\chi \in C_{\mathrm{c}}^\infty(\bbR)$ is as above.
(The integral is absolutely convergent on the support of the prefactor. This does not include the origin.) This ``corrects'' $u_0$ by adding to it a term 
\begin{equation}
	e^{i\sigma x} \phi(\bfx),\quad \phi(\bfx) = \chi \Big(\frac{1}{r}\Big) \chi \Big( \frac{\sqrt{y^2+z^2}}{x}\Big) 1_{x\leq 0} \frac{1}{2i \sigma} v(\bfx) 
\end{equation}
supported near the backwards direction.

Our improved ansatz $u_1$ will not satisfy $Pu_1=0$ exactly, but it will satisfy $Pu_1=f_1$ for some $f_1$ whose sc-wavefront set lies off the incoming radial set except in exactly the backwards direction. As we will check below, $f_1$ is $O(1/r^3)$ near the backwards direction, so it has one more order of decay than $f_0$, which was $O(1/r^2)$.
Thus, $f_1$ is decaying enough in the backwards direction to lie in the domain of the resolvent, $\calY_+$. We can therefore define our perturbed plane wave $u$ by
\begin{equation}
    u = u_1 - R_{V}(\sigma +i0) f_1.
\end{equation}
This satisfies $Pu=0$, so we have produced a solution to the PDE.

\begin{remark*}[Motivating the ansatz]
	The ansatz $u_1=e^{i\sigma x} + e^{i\sigma x} (2i\sigma)^{-1}v$ comes from the heuristic that when applying $\triangle-\sigma^2$ to $e^{i\sigma x} v$, for symbolic $v \in S^0$, the term where two derivatives fall on the exponential $e^{i\sigma x}$ cancels out with the spectral term $-\sigma^2 e^{i\sigma x} v(\bfx)$, and the next most important term	
	comes from one derivative in $-\partial_x^2 \in \triangle$ hitting $e^{i\sigma x}$ and one hitting $v$.\footnote{This heuristic comes from computing the ``b-decay order'' \cite{Me93} of each term; an unweighted b-operator has exactly as many factors of $\langle r \rangle$ as derivatives. For example, $\langle r \rangle\partial_x$ is an unweighted b-operator.  Regularity under repeated application of b-operators amounts to Kohn--Nirenberg symbol estimates. The b-decay order of a general operator is the deficit of factors of $\langle r \rangle$. The ``main term'' in the heuristic described above is the one with the least b-decay. In Melrose's terminology, it is the b-normal operator. 
    
    This heuristic works because b-decay order keeps track of how the operators in question augment the decay of symbols, like $v$.} 
	So, if we want $Pu_1\approx 0$, i.e.\ 
	\begin{equation*} 
		P ( (2i\sigma)^{-1} e^{i\sigma x}v(\bfx)) \approx - f_0,
	\end{equation*} 
	where $f_0= Pu_0$, 
    we should choose $v(\bfx)$ such that it solves 
	\begin{equation}
		\partial_x v(\bfx) \approx e^{-i\sigma x} f_0 , 
	\end{equation}
	where the `$\approx$' means we are keeping only those terms in the large-$r$ asymptotic expansion of $f_0$ preventing it from lying in the domain of the resolvent.  Integrating,
	\begin{equation} 
	 v(\bfx) = C(y,z) - \int_{-\infty}^x \frac{s a_1+ya_2+za_3}{(s^2+y^2+z^2)^{3/2}} \dd s . 
	\end{equation} 
	Since we want $v(\bfx)$ to be small near the backwards direction, we take $C(y,z)=0$. Inserting cutoffs to localize us near the backwards direction (since this was the only problematic location), we arrive at the formula \cref{eq:modified_ansatz}.
\end{remark*}

Explicitly, 
\begin{multline}
	v(x,y,z) = -\int_{-\infty}^x \frac{s a_1+y a_2+z a_3}{(s^2+y^2+z^2)^{3/2}}  \dd s \\= 
	\begin{cases}
			 \frac{a_1}{\sqrt{x^2+y^2+z^2}} -\frac{ya_2+za_3}{y^2+z^2}\Big[ 1+\frac{x}{\sqrt{x^2+y^2+z^2}}\Big] & (y^2+z^2\neq 0), \\
		a_1/|x| &(\text{otherwise}).
	\end{cases}
\end{multline}

We have $P u_1 = f_1$ for 
\begin{multline}
f_1 = \underbrace{{- \Big( \color{darkblue}\frac{\bfx\cdot \bfa}{r^3}} + W(\bfx)\Big)e^{i\sigma x}}_{f_0}  + \underbrace{\Big[P,\chi \Big(\frac{1}{r}\Big) \chi \Big( \frac{\sqrt{y^2+z^2}}{x}\Big) 1_{x\leq 0} \Big]}_{Q} \Big(\frac{e^{i \sigma x}}{2i\sigma}v(\bfx)\Big) \\ + \chi \Big(\frac{1}{r}\Big) \chi \Big( \frac{\sqrt{y^2+z^2}}{x}\Big) 1_{x\leq 0} \Big[\underbrace{ {\color{darkblue}\frac{\bfx\cdot \bfa}{r^3}} e^{i\sigma x} }_{\mathrm{I}}  +  \underbrace{\frac{e^{i\sigma x} }{2i\sigma}( P +\sigma^2 ) v}_{\mathrm{II}}    \Big].
\label{eq:misc_yyy}
\end{multline}
The term I results from one derivative hitting $v$ and one hitting $e^{i\sigma x}$. The term II contains the terms in which two derivatives hit $v$, together with the contribution from the subleading part of the potential.  The two most important terms in \cref{eq:misc_yyy}, the ones involving the dipole potential, have been highlighted. Note that they have \emph{opposite signs}, so they cancel (by design).

\begin{proposition}
    The term $f_1$ lies in the domain $\calY_+$ of the resolvent.
\end{proposition}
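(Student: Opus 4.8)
The plan is to verify $f_1\in\calY_+$ by going through the decomposition \eqref{eq:misc_yyy} of $f_1=Pu_1$ term by term, the crux being the cancellation of the two dipole contributions that was designed into the ansatz \eqref{eq:modified_ansatz}. Recall $\calY_+=\bigcup H^{m-2,\mathsf s+1,\ell-2}_{\mathrm{sc-b}}(X)$, the union over $m\in\RR$, $\ell\in(2-d/2,d/2)=(1/2,3/2)$, and $+$-Sommerfeld orders $\mathsf s$. Since such an $\mathsf s$ is pinned (to be $>-1/2$ and locally constant) only near the incoming radial set $\calR=\operatorname{WF}_{\mathrm{sc}}(e^{-i\sigma\langle r\rangle})$ and may be taken as negative as one pleases downstream along the Hamilton flow, membership in $\calY_+$ follows from the sufficient condition that $f_1$ be smooth on $X^\circ$ away from the marked points, have at worst a Coulomb singularity $O(\rho_n^{-1})$ at each marked point $\bfx_n$ of the same conormal type as $V$ (here $\rho_n=|\bfx-\bfx_n|$), satisfy $f_1=O(\langle r\rangle^{-2})$ as $r\to\infty$, and satisfy $\operatorname{WF}_{\mathrm{sc}}^{0,1/2}(f_1)\cap\calR=\varnothing$. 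Indeed, such a function lies in $\rho_n^{-1}H^m_{\mathrm b}$ locally near $\bfx_n$ (so the marked-point order is met with $\ell-2=-1\in(-3/2,-1/2)$), lies in $H^{0,1/2-\varepsilon}_{\mathrm{sc}}$ near $\partial X$, and one may take $\mathsf s$ equal to a sufficiently negative constant on a conic neighbourhood of $\operatorname{WF}_{\mathrm{sc}}(f_1)\setminus\calR$ (where $O(\langle r\rangle^{-2})$ already supplies more decay than $\mathsf s+1$ demands) while keeping $\mathsf s>-1/2$ near $\calR$, the two regions being disjoint.

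First I would dispatch the benign terms of \eqref{eq:misc_yyy}. The term $-We^{i\sigma x}$ is $O(\langle r\rangle^{-3})$ and smooth off $\{\bfx_n\}$; the term $\chi(1/r)\chi(\sqrt{y^2+z^2}/x)1_{x\le0}\cdot\mathrm{II}$ is also $O(\langle r\rangle^{-3})$, since on the support of $\chi(1/r)$ one has $V=O(r^{-2})$ (here $\mathsf Z=0$) and $v$ is homogeneous of degree $-1$, so $\triangle v=O(r^{-3})$, $Vv=O(r^{-3})$, and the $\sigma^{-1}$ only helps. Both lie in $\langle r\rangle^{-3/2+\varepsilon}L^2\subset\calY_+$. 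Choosing $\chi\in C^\infty_{\mathrm c}(\RR)$ supported in $\{|t|\le(2R_0)^{-1}\}$ with $R_0=\max_n|\bfx_n|$, the full cutoff $\psi\defeq\chi(1/r)\chi(\sqrt{y^2+z^2}/x)1_{x\le0}$ vanishes for $r\le2R_0$ (hence near every marked point), equals $1$ on a conic neighbourhood of the backward direction at large $r$, and is genuinely smooth (on $\{r\ge2R_0\}$ it vanishes near $\{x=0\}$, so the jump of $1_{x\le0}$ is invisible). Consequently the commutator term $Q\bigl((2i\sigma)^{-1}e^{i\sigma x}v\bigr)$, $Q=[P,\psi]$, splits into a $C^\infty_{\mathrm c}(X^\circ)$-supported piece coming from $[P,\chi(1/r)]$, which is harmless, and a large-$r$ piece $[\triangle,\chi(\sqrt{y^2+z^2}/x)1_{x\le0}]\bigl((2i\sigma)^{-1}e^{i\sigma x}v\bigr)$, which is $O(\langle r\rangle^{-2})$ (a derivative on $e^{i\sigma x}$ against one on the degree-$0$ cutoff produces $\sigma r^{-1}$, cancelling the $\sigma^{-1}$) and is spatially supported in the conic shell where $\chi(\sqrt{y^2+z^2}/x)$ transitions — which excludes the backward axis.

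The two remaining terms are $f_0=Ve^{i\sigma x}$ and $\psi\cdot\mathrm I$ with $\mathrm I=(\bfx\cdot\bfa/r^3)e^{i\sigma x}$. For $r$ large, the multipole expansion \eqref{eq:multipole} (with $\mathsf Z=0$ and $\bfa$ arbitrary) gives $-V=\bfx\cdot\bfa/r^3+W$, so their sum equals $(\psi-1)(\bfx\cdot\bfa/r^3)e^{i\sigma x}-We^{i\sigma x}$ there; because $\psi\equiv1$ on a conic neighbourhood of the backward direction, $(\psi-1)(\bfx\cdot\bfa/r^3)e^{i\sigma x}$ \emph{vanishes} near that direction and is $O(\langle r\rangle^{-2})$ elsewhere, supported away from the backward spatial direction. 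Near the marked points all the cutoff- and commutator-terms vanish (by the choice of $\chi$), so there $f_1=Ve^{i\sigma x}$, which is smooth off $\{\bfx_n\}$ and lies in $\rho_n^{-1}H^m_{\mathrm b}$ locally because $V\in r^{-1}S^0$. It remains to handle $\operatorname{WF}_{\mathrm{sc}}$: every surviving $O(\langle r\rangle^{-2})$ contribution to $f_1$ oscillates like $e^{i\sigma x}$, and sc-(pseudo)differential operators do not spread $\operatorname{WF}_{\mathrm{sc}}$, so the sc-wavefront set of each lies inside $\operatorname{WF}_{\mathrm{sc}}(e^{i\sigma x})$, which meets $\calR$ only over the backward direction; since these contributions are spatially supported away from the backward direction, they are microsupported off $\calR$, while near the backward direction $f_1=O(\langle r\rangle^{-3})$. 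Hence $\operatorname{WF}_{\mathrm{sc}}^{0,1/2}(f_1)\cap\calR=\varnothing$, and with the properties established above, $f_1\in\calY_+$.

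The genuinely delicate part is the microlocal bookkeeping of the last two paragraphs: one must check that \emph{after} the designed cancellation of the dipole terms, \emph{every} remaining $O(\langle r\rangle^{-2})$ piece of $f_1$ is microsupported off the incoming radial set. This reduces to two elementary geometric points that must be verified with care — that $\psi\equiv1$ on a full conic neighbourhood of the backward direction (so that $\mathrm I$ and $-\bfx\cdot\bfa/r^3\cdot e^{i\sigma x}$ cancel there exactly, not merely to top order), and that the transition locus of $\psi$ (the conic shell where $\chi(\sqrt{y^2+z^2}/x)$ is non-constant, together with the compact shell where $\chi(1/r)$ is) stays clear both of the backward axis and of every marked point. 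Everything else is routine large-$r$ decay counting and Hardy-type accounting in dimension three.
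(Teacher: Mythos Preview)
Your proof is correct and follows essentially the same strategy as the paper: exploit that $\operatorname{WF}_{\mathrm{sc}}(f_1)\subset\operatorname{WF}_{\mathrm{sc}}(e^{i\sigma x})$ meets the incoming radial set only over the backward direction, and then verify that on a backward cone (where the cutoff $\psi$ is identically $1$) the dipole terms cancel and the remainder is $O(r^{-3})$. One pleasant simplification in your version is that you dispatch term $\mathrm{II}$ by observing that $v$ is homogeneous of degree $-1$, whence $\triangle v$ is automatically $O(r^{-3})$ on the backward cone; the paper instead computes $\partial_y^2 v$ and $\partial_z^2 v$ explicitly via the integral formula. Your treatment of the commutator term and the $(\psi-1)$-weighted residual dipole is also more explicit than the paper's, which simply hides those pieces in the global wavefront-set containment without itemizing them.
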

\begin{proof}
The sc-wavefront set of $f_1$ is contained in that of $e^{i\sigma x}$, which we already saw is disjoint from the incoming radial set $\calR$ except over the backwards direction. Hence it suffices to restrict attention to a small neighborhood of the backwards direction, say on 
\begin{equation}
    \Upsilon=\bigg\{\chi \Big(\frac{1}{r}\Big) \chi \Big( \frac{\sqrt{y^2+z^2}}{x}\Big)1_{x<0}=1\bigg\},
\end{equation}
where $Q$ vanishes identically. 
Then, 
\begin{equation}
    f_1 = - \Big(\frac{\bfx\cdot\bfa}{r^3}+W\Big)e^{i\sigma x} + \mathrm{I}+\mathrm{II} \text{ on } \Upsilon. 
\end{equation}
The term $W(\bfx)e^{i\sigma x}$ is $O(1/r^3)$ and therefore unproblematic. (Recall that $O(1/r^2)$ was the threshold we wanted to beat.) The other term in $f_0$, $-\bfx\cdot \bfa/r^3$, cancels out with $\mathrm{I}$ on $\Upsilon$, as discussed above.

Regarding (II), we note that \begin{equation} 
	( P+\sigma^2  )v =   \frac{\partial}{\partial x}\Big( \frac{\bfx\cdot \bfa}{r^3} \Big)  - \Big(  \frac{\partial^2}{\partial y^2}+ \frac{\partial^2}{\partial z^2} + \frac{\bfa\cdot \bfx}{r^3} + W \Big) v .
	\end{equation}
	The first term is $O(1/r^3)$, as is the term $( r^{-3} (\bfa\cdot \bfx)+W)v$, since $v=O(1/r)$. The terms involving $\partial_y,\partial_z$ are similar to each other, so we only discuss $\partial_y$. Explicitly, 
	\begin{equation}
	\frac{\partial v}{\partial y} = \int_{-\infty}^x \Big[-\frac{a_2}{(s^2+y^2+z^2)^{3/2}} +\frac{3y(s a_1+y a_2+z a_3)}{(s^2+y^2+z^2)^{5/2}}  \Big]\dd s , 
	\end{equation}
	which is $O(1/r^2)$. Taking another derivative yields something which is $O(1/r^3)$:
    \begin{equation}
	\frac{\partial^2 v}{\partial y^2} = \int_{-\infty}^x \Big[  \frac{3(s a_1+3y a_2+z a_3)}{(s^2+y^2+z^2)^{5/2}} - \frac{15y^2(s a_1+y a_2+z a_3)}{(s^2+y^2+z^2)^{7/2}}  \Big]\dd s .
	\end{equation}
\end{proof}

\begin{proposition}
    The conclusion of \Cref{prop:short_range_main} applies, except with $u_1$ in place of $u_0$ \cref{eq:Coulomb_ansatz} and $f_1$ in place of $f_0$. 
    \label{thm:Coulomb_main_refined}
\end{proposition}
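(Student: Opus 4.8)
This is the dipole analogue of \Cref{prop:short_range_main} and \Cref{thm:Coulomb_main}, and the plan is to deduce it from the convergence theorem \Cref{thm:main} in precisely the same way, the only change being that the ansatz is $u_1$ and the forcing is $f_1$. Recall that the perturbed plane wave is defined here by the \emph{exact} identity $u = u_1 - R(\sigma + i0) f_1$, with $P u_1 = f_1$ by construction. The one substantive input — that $f_1$ belongs to the domain $\calY_+$ of the limiting resolvent — has just been verified in the preceding proposition, and it is here, unlike in the short-range case, that the microlocal structure of $f_1$ genuinely mattered (because $f_1$ decays only like $O(1/r^2)$ in the non-backward directions, a constant-order space satisfying the outgoing \emph{and} incoming Sommerfeld conditions cannot contain it). Once $f_1 \in \calY_+$ is in hand, $R(\sigma + i0) f_1$ is well-defined and everything reduces to inserting this into the Born series.

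First I would note that $f_1$ sits in $\calY_+$ with \emph{arbitrarily high} regularity order $m$: it is smooth away from the marked points and $\partial X$; near each marked point it equals $O(1/r)$ times a smooth factor, hence conormal of weight exceeding $-d/2$ and so in $r^{\ell-2}H^N_{\mathrm{b}}$ locally for every $N$ and every $\ell \in (2-d/2, d/2)$; and near $\partial X$ its sc-wavefront set lies in that of $e^{i\sigma x}$ with the decay recorded above, so it is of arbitrary sc-order for a suitable variable order $\mathsf{s}$ satisfying the outgoing Sommerfeld conditions. Consequently $f_1 \in H^{m-2,\mathsf{s}+1,\ell-2}_{\mathrm{sc-b}}(X)$ for every $m$ and appropriate fixed $\mathsf{s},\ell$. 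Then \Cref{thm:main} supplies, for $\sigma$ beyond some threshold $\Sigma$ depending on these orders, the operator-norm convergence $R(\sigma + i0) = \sum_{j \geq 0} B_j(\sigma + i0)$ in $\calL\bigl(H^{m-2,\mathsf{s}+1,\ell-2}_{\mathrm{sc-b}}(X), H^{m,\mathsf{s},\ell}_{\mathrm{sc-b}}(X)\bigr)$; applying both sides to $f_1$ and substituting into $u = u_1 - R(\sigma + i0) f_1$ yields $u = u_1 - R_0(\sigma + i0)\sum_{j \geq 0}(M_V R_0(\sigma + i0))^j f_1$, the series converging in $H^{m,\mathsf{s},\ell}_{\mathrm{sc-b}}(X)$.

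To upgrade to uniform convergence on a compact $K \subset \bbR^3 \setminus \{0\}$, I would fix $m > d/2$: since $K$ avoids the marked points and $\partial X$, the space $H^{m,\mathsf{s},\ell}_{\mathrm{sc-b}}(X)$ restricts near $K$ to the ordinary local Sobolev space $H^m$, which embeds continuously into $C^0(K)$. Hence norm convergence of $\sum_j B_j(\sigma+i0)f_1$ implies uniform convergence on $K$, and therefore so does the series for $u$. Finally, the claim that going far enough out in the series renders each subsequent term $O(\sigma^{-(1-\varepsilon)})$ relative to its predecessor is immediate from the geometric bound $\lVert B_j(\sigma + i0)\rVert \leq C^j \sigma^{-(1-\varepsilon)(j+1)}$ of \Cref{prop:q_sep}: one extra factor of $M_V R_0(\sigma + i0)$ costs a uniform $O(\sigma^{-(1-\varepsilon)})$ in the relevant operator norm, and this is unchanged by evaluating on the fixed $f_1$ or restricting to $K$.

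The main obstacle is no more than bookkeeping: verifying that $f_1$ lands in one of the spaces comprising $\calY_+$ with enough regularity (done above), that the Born sequence is already well-defined as a family of operators $\calY_+ \to \calX_+$ (established in the definition of the Born series), and that \Cref{thm:main} applies verbatim to the present pair $(u_1, f_1)$. No new estimate is required; the genuine content of the dipole case was the construction of the corrected ansatz $u_1$ ensuring $f_1 \in \calY_+$, and with that in place the argument is a word-for-word repeat of the proof of \Cref{prop:short_range_main}.
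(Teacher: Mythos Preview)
Your proposal is correct and follows the same approach as the paper's proof, which consists of the single line ``By \Cref{thm:main}.'' You have simply unpacked the invocation of \Cref{thm:main} in considerably more detail---verifying the regularity and decay of $f_1$ needed to place it in the relevant $H^{m-2,\mathsf{s}+1,\ell-2}_{\mathrm{sc-b}}$ space, explaining how Sobolev embedding converts norm convergence to uniform convergence on $K$, and tracing the $O(\sigma^{-(1-\varepsilon)})$ ratio of successive terms back to \Cref{prop:q_sep}---all of which the paper leaves implicit.
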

\begin{proof}
    
    By
    \Cref{thm:main}.  
\end{proof}

It is straightforward to show that $f_1$ lies in the domain of the resolvent, $\calY_+$, uniformly as $\sigma\to\infty$, so we also get uniform estimates on the perturbed plane waves.



\section{Application: convergence of the Dyson series}
\label{sec:Dyson}

When solving the Schr\"odinger equation or wave equation on $\Omega=\bbR_t\times X$, one uses a propagator if the initial-value problem is of interest, or a Green function $G_\bullet \in \calS'(\Omega^2) $ if the forced equation is of interest. 
For simplicity, we begin by considering only the retarded (forward) problem for the forced wave equation,  
\begin{equation}
    (\partial_t^2 + \triangle )u +Vu =f. 
\end{equation}
Other problems, such as the advanced problem or the Cauchy problem, are similar. The Schr\"odinger equation is somewhat different, owing to infinite speed of propagation.  \emph{We assume, in this section, that $P=\triangle+V$ has no bound states.} (If we were to treat the Schr\"odinger equation rather than the wave equation, these states would not cause problems.)

The \emph{Dyson series} for the retarded Green function $G=G_V$ is the result of iterating the following exact relation between $G$ and the free retarded
Green function $G_0$:
\begin{equation}
	G(t,x,t',x') = G_0(t,x,t',x')-   \int_{\bbR\times X} G_0(t,x,t_0,x_0) V(x_0) G(t_0,x_0,t',x')   \dd t_0 \dd x_0 ,
    \label{eq:Dyson_0}
\end{equation} 
i.e.\ 
\begin{equation}
    G = G_0 (1-M_V G)
    \label{eq:Dyson_short}
\end{equation}
in shorthand, as results from applying $G_0$ to both sides of $(\partial_t^2 + \triangle)u = f - Vu$.
Formally, plugging \cref{eq:Dyson_0} back into itself and iterating (or, equivalently, solving for $G=(I+G_0 M_V)^{-1}G_0$ in \cref{eq:Dyson_short} and expanding) yields
\begin{equation}
G =  \sum_{j=0}^\infty (-G_0 M_V)^j G_0
\end{equation}
(identifying $G_0$, which is a Schwartz kernel, with the corresponding operator).
This is the Dyson series. 
What we would like to prove is that it converges in an appropriate sense. 
One difficulty is that the Green function receives contributions from low energy, which are out of reach of traditional microlocal methods. Consequently we extract the high-frequency component of the Dyson series, which is formally 
\begin{equation}
    \check{1}_{|\sigma|>\Sigma}(t) * G = \sum_{j=0}^\infty \check{1}_{|\sigma|>\Sigma}(t)*(-G_0 M_V)^j G_0.
    \label{eq:high_freq}
\end{equation}
Here, $\Sigma\gg 1$ is our threshold for what counts as ``high energy.'' Note that we do \textit{not} remove the low-energy component of every Green function in the series. Most are the full free Green functions, with no energy cutoff; the energy cutoff is applied only once per $j$. The convolution 
\begin{equation} 
    \check{1}_{|\sigma|>\Sigma}(t) * G \in \calS'(\bbR_t ; \calD'(X))
\end{equation} 
will be shown to be distributionally well-defined in the proof of \Cref{dysonconv} below.
Note that the energy cutoff 
\begin{align}
\begin{split} 
    \check{1}_{|\sigma|>\Sigma}(t) =  \int_{|\sigma|\geq \Sigma}e^{i \sigma t} \frac{\dd \sigma}{2\pi}  &= \frac{1}{\pi}\lim_{\varepsilon \to 0^+} \Big[ \int_{\sigma=\Sigma}^\infty e^{-\varepsilon \sigma} \cos(\sigma t) \dd \sigma \Big] \\ 
    &= \frac{i}{2\pi  } \Big[\frac{e^{i \Sigma t}}{  t + i 0}  - \frac{e^{-i \Sigma t}}{t-i0}  \Big]
    \end{split}
\end{align}
is the distributional inverse Fourier transform of the indicator function $1_{|\sigma|>\Sigma} \in \calS'(\bbR_\sigma)$. 

What we will prove is:
\begin{proposition}\label{dysonconv}
For every $k\in \bbN$ and $\chi\in C_{\mathrm{c}}^\infty(X^\circ)$ supported away from the marked points, there exists a $\Sigma>0$ such that
\begin{equation}
     M_\chi \check{1}_{|\sigma|>\Sigma}(t) * G M_{\chi} = \sum_{j=0}^\infty M_\chi  \check{1}_{|\sigma|>\Sigma}(t)*(-G_0 M_V)^j G_0 M_{\chi},
     \label{eq:Dyson_main}
\end{equation}
where the tail of the series is convergent as an operator $L^2(X)\to
C^k(X)$, uniformly in $t$.
\end{proposition}
One aspect of the proposition appears surprising: even though we have no control on $R_0(\sigma -i0)$ for $\sigma$ small, the series in \cref{eq:Dyson_main} converges \emph{with an energy cutoff applied only once per $j$}, on the left.
\begin{proof} 
In terms of the resolvent, $G$ can be written as:
\begin{equation}\label{eq:RtoG}
Gf(t,x) = \int_{-\infty}^\infty e^{i \sigma t} R_{V}(\sigma -i0) \hat{f}(\sigma,x) \frac{\dd \sigma}{2\pi}
\end{equation}
for $f\in C_{\mathrm{c}}^\infty(\bbR_t\times X)$ supported away from
the marked points. Here, $\smash{\hat{f}}$ is the temporal Fourier
transform of $f$.  This enables us to \emph{define} the high energy contribution $G^\Sigma f(t,x) = \check{1}_{|\sigma|>\Sigma}(t)*G$ as
\begin{equation*}
G^\Sigma f(t,x) \overset{\mathrm{def}}{=}\int_{|\sigma|>\Sigma} e^{i
  \sigma t} R_{V}(\sigma -i0) \hat{f}(\sigma,x) \frac{\dd \sigma}{2\pi }.
\end{equation*}
We remark that this definition tacitly relies upon resolvent estimates afforded by the limiting absorption principle:
since $R_{V}(\sigma-i0)$ is locally (in $\sigma$)  uniformly bounded
between Sobolev spaces as in \Cref{thm:main}, the resolvent is in fact regular enough to pair
with $1_{|\sigma|>\Sigma}$.

Plugging in the Born series now yields 
\begin{equation}\begin{aligned}
	G^\Sigma f(t) &= \sum_{j=0}^\infty \int_{|\sigma|>\Sigma}
        e^{i\sigma t} R_0(\sigma -i0) (-M_V R_0(\sigma-i0))^j
        \hat{f}(\sigma,-) \frac{\dd \sigma}{2\pi }\\
        &=
	\sum_{j=0}^\infty \big(\check{1}_{|\sigma|>\Sigma}* \calF^{-1}
 R_0(\sigma -i0) (-M_V R_0(\sigma-i0))^j
        \calF f(\sigma,-)\big)(t)
        \end{aligned}
	\label{eq:misc_555}
\end{equation}
(this converging absolutely, for each $f\in C_{\mathrm{c}}^\infty(\bbR_t\times K)$
for given $K \Subset X$ if $\Sigma=\Sigma(K)$ is large
enough\footnote{We are allowing $\Sigma$ to depend on $K$, so we are
not saying anything about the behavior of $G(t,x,t,x')$ as $x'\to\infty$.}).  Note that again the definition of the
convolution can be taken to be via the Fourier multiplication.

Since
  \begin{equation}
\calF_{t \to \sigma} (G_0 f(t,\bullet)) = R_0(\sigma-i0)(\calF_{t\to
  \sigma} f(t, \bullet))
\end{equation}
 (cf.\ \cref{eq:RtoG}), and since the spatial operator $M_V$ commutes
 with time Fourier transform (from which we now drop the subscripts),
 iteration yields
  \begin{equation}\label{eq:misc_096}
\calF (G_0(M_V G_0)^j f)(\sigma, \bullet) =
R_0(\sigma-i0)(M_VR_0(\sigma-i0))^j(\calF f)(\sigma, \bullet).
\end{equation}
Applying inverse Fourier transform and inserting this into
\cref{eq:misc_555} yields the desired expression.
\end{proof} 

The low-energy part of $G$, $\check{1}_{|\sigma|<\Sigma}(t)* G$, is 
\begin{equation}
    \check{1}_{|\sigma|<\Sigma}(t)* G = \int_{|\sigma|<\Sigma} e^{i\sigma t} R_{V}(\sigma-i0) \frac{\dd \sigma}{2\pi} . 
\end{equation}
This is a smoothing operator: it smooths completely in time and by two orders in space. 

\subsection[Example of smoothing behavior]{Example of smoothing behavior: $\protect X=\overline{\bbR^3}$}

The free retarded Green function on Minkowski spacetime (solving the \emph{forward} inhomogeneous problem) is simply 
\begin{equation}
    G_0(t,\bfx,t',\bfx') = \frac{1_{t>t'} }{2\pi}\delta( (t-t')^2 - |\bfx-\bfx'|^2 ). 
\end{equation}
That is, for $f\in C_{\mathrm{c}}^\infty(\bbR^{1,3})$, 
\begin{equation}
    G_0(t,\bfx) f = \frac{1}{4\pi} \int \frac{f(t_{\mathrm{ret}} , \bfx')}{ |\bfx-\bfx'|}   \dd^3 \bfx',\qquad  t_{\mathrm{ret}}=t-|\bfx-\bfx'|.
\end{equation} 
(As above, we abuse notation by confusing the Schwartz kernel with the
corresponding operator.)  The Schwartz kernel of this operator is
manifestly in no better a Sobolev space than $H^{-1/2-0}_{\loc}$ in
the spacetime region $t>0$.  In dealing with the perturbed
propagator $G$ we of course lack such explicit formulas, but can
nonetheless adopt the spectral-theoretic viewpoint; writing the propagator as
\begin{equation} 
G =1_{t>t'} \frac{\sin ((t-t')\sqrt{P})}{\sqrt{P}} \delta(\bfx-\bfx')
\end{equation} 
likewise gives us no higher regularity than $H^{-1/2-0}_{\loc}$,
and now without any a priori information about the location of singularities.  It maps $L^2_{\mathrm{c}}(\bbR^{1,3})$ to a space no better than $H^1_\loc$ in spacetime, as can be seen by unitarity of the evolution on the appropriate energy space, together with Duhamel's principle. 

By contrast, let us now examine the mapping properties of the $j$th term in the high-frequency Dyson series.  Fix $A>0$ and $F_0\Subset \RR^3$ compact.  Let $\tF =[0,A] \times F_0 \Subset \RR^{1,3}.$  Then for $f \in L^2(\tF)$,
\begin{equation} 
D_j(t)f := \chi  \check{1}_{|\sigma|>\Sigma}(t)*(-G_0 M_V)^j G_0 M_{\chi}f.
\end{equation}
By \eqref{eq:misc_096} and \Cref{thm:main}, for every $M$ there is a $J$ such that for $j \geq J$ this term is the inverse Fourier transform in time of a $O_{L^2_\loc}(\langle\sigma\rangle^{-M})$ function, i.e.\ (by the closed graph theorem),
\begin{equation} \label{Dj}
    D_j(t) \in \mathcal{L}(L^2(\tF),H^M(\RR_t; L^2_\loc(\bbR^3))),\quad j \geq J.
\end{equation} 
 The terms are thus \emph{smoother and smoother} as we go further out in the series.  Indeed, the same holds for the whole tail of the Dyson series as well, by the convergence results for the tail in \Cref{thm:main}.
 Note the perturbation-theoretic interpretation: these terms in the perturbation series account for successively more and more instances of free propagation interrupted by interaction with the potential singularity.  As is well known, this can give rise to regularizing (``diffractive'') effects even when the metric has a conic singularity (see, e.g., \cite{BaWu13}).  Here, the situation is even simpler, as we are able to treat the potential perturbatively.

 The discussion above takes care of the tail of the Dyson series. The initial terms in the Dyson series can be analyzed directly
 owing to the escape of singularities allowed to ricochet only a bounded number of times. 
  In what follows, we say that a subset of the characteristic set of the wave operator is outgoing \emph{relative to} $K\Subset \bbR^3$ if the bicharacteristic flowout from that set only intersects $\bbR_t\times K$ going backwards in time.
 \begin{lemma}
 Choose $\tF$ as above.
Fix $f\in L^2(\tF)$,  $j\in \bbN$.
\begin{enumerate}[label=(\alph*)]
    \item Given any $T>0$, the portion of the support of $(G_0 M_V)^j G_0 f$ in $(-\infty,T]\times \bbR^3$ is compact. 
    \item For any compact $K\Subset \bbR^3$, there exists a time $T=T_{\tilde{F},j,K}>0$  such that 
    \begin{equation} 
        (G_0 M_V)^j G_0 f\in C^\infty((T,\infty);L^2(K) ).
    \end{equation} 
    \item There exists a $T=T_{\tilde{F},j,K}>0$ such that, after time $T$, the portion of the wavefront set of $(G_0 M_V)^j G_0 f$ in the characteristic set of the free wave operator
    is entirely outgoing relative to $K$.
\end{enumerate}
\end{lemma}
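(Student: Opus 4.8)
The plan is to derive all three assertions from finite propagation speed for the wave operator $\Box=\partial_t^2+\triangle$ together with propagation of singularities on Minkowski space; the only genuinely new point is the bookkeeping for the successively diffracted waves emitted at the marked points. I will use throughout the standard facts: on $\RR^{1,3}$ the Schwartz kernel of $G_0$ is translation invariant and supported on the forward light cone (strong Huygens, $d=3$), so $\supp(G_0 g)\subseteq\bigcup_{q\in\supp g}C^+(q)$, with $C^+(q)$ the forward light cone through $q$; the retarded form of propagation of singularities gives $\WF(G_0 g)\setminus\operatorname{Char}\Box\subseteq\WF(g)$ while $\WF(G_0 g)\cap\operatorname{Char}\Box$ lies in the forward null-bicharacteristic flowout of $\WF(g)\cap\operatorname{Char}\Box$; and $M_V$ preserves supports, with $\WF(M_V h)$ contained in $\WF(h)$ together with the conormal directions $N^*L_n$ over $\supp h$ and their sums with $\WF(h)$, where $L_n=\{\bfx_n\}\times\RR_t$ is the world-line of the $n$-th marked point. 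Since $N^*L_n$ consists of covectors $(\xi,0)$ with $\xi\ne0$ and $\operatorname{Char}\Box=\{\tau^2=|\xi|^2\}$, one has $N^*L_n\cap\operatorname{Char}\Box=\varnothing$; thus multiplication by $V$ contributes only \emph{non-characteristic} conormal singularities along the $L_n$, which therefore never spawn new propagating singularities under a subsequent application of $G_0$ and near $L_n$ only produce the mild conormal behaviour ($r^\ell$-type with $\ell<d/2$, hence in $L^2_{\loc}$, with smooth dependence on $t$) quantified by the free-resolvent mapping property proved in \S2. This, ultimately, is why the potential can be treated perturbatively here.

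For part (a) I would induct on $j$. For $j=0$: $\supp(G_0 f)\subseteq\overline{J^+(\tF)}$, and a point of $\overline{J^+(\tF)}$ at time $t\ge0$ satisfies $|\bfx|\le\operatorname{diam}(F_0)+t$, so the intersection with $\{t\le T\}$ lies in a fixed compact set. For the step, $\supp(M_V h_{j-1})\subseteq\supp(h_{j-1})$ with $h_{j-1}:=(G_0M_V)^{j-1}G_0 f$; if $h_{j-1}|_{\{t\le T\}}$ is compactly supported then finite propagation speed puts $\supp(G_0 M_V h_{j-1})\cap\{t\le T\}$ inside the $T$-dependent causal future of that compact set, which is again compact.

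For parts (b) and (c) the heart of the argument is the claim, proved by induction on $j'=0,1,\dots,j$, that there is a \emph{bounded} ``emission set'' $\calE_{j'}\subset\RR^{1,3}$ with $\WF((G_0M_V)^{j'}G_0 f)\cap\operatorname{Char}\Box$ contained in the union of forward light rays issuing from the points of $\calE_{j'}$. One takes $\calE_0=\overline{\tF}$ (since $f\in L^2$, its wavefront set fills all directions over $\tF$). In the inductive step the propagating singularities of $h_{j'}$ not meeting a marked point simply continue along their rays, whereas those reaching some $\bfx_n$ do so, by the induction hypothesis and strong Huygens, only for $t$ in the \emph{bounded} set $I_n:=\{\,t:(\bfx_n,t)\in\bigcup_{q\in\calE_{j'}}C^+(q)\,\}$ --- a light cone from a bounded region sweeps over the fixed point $\bfx_n$ only during a bounded time interval --- and there multiplication by $V$ (whose wavefront at $\bfx_n$ comprises all conormal directions) followed by $G_0$ emits a full diffracted forward cone from $(\bfx_n,t)$. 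Hence $\calE_{j'+1}=\calE_{j'}\cup\bigcup_n(\{\bfx_n\}\times\overline{I_n})$ works and is again bounded. Since $(G_0M_V)^jG_0$ contains only $j$ factors of $M_V$, there are only $j$ generations of diffraction and $\calE_j$ is bounded. Part (b) then follows because the forward light rays from the bounded set $\calE_j$ sweep over a fixed compact $K$ only for $t$ in a bounded interval, so for $t>T$ (with $T$ depending on $\tF,j,K$) the function $(G_0M_V)^jG_0 f$ has no propagating singularities over $K$; it solves $\Box u=M_V(G_0M_V)^{j-1}G_0 f$ over $K\times(T,\infty)$ with no incoming singularities, so it is smooth there away from the marked points and equals the $r^\ell$-conormal contribution (smooth in $t$) near them, giving $(G_0M_V)^jG_0 f\in C^\infty((T,\infty);L^2(K))$. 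For part (c) I would take $T$ larger than the last time a forward light ray issuing from $\calE_j$ exits $\RR_t\times K$ (finite, by boundedness of $\calE_j$ and compactness of $K$); then every point of $\WF((G_0M_V)^jG_0 f)\cap\operatorname{Char}\Box$ over $\{t>T\}$ sits on such a ray already past $K$, so its bicharacteristic flowout meets $\RR_t\times K$ only going backwards in time, i.e.\ it is outgoing relative to $K$.

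The step I expect to be the main obstacle is the inductive claim above: verifying that the emission set remains bounded at each generation --- which depends on there being finitely many marked points and, crucially, only $j$ applications of $M_V$ --- and making precise that the non-characteristic conormal singularities created along the marked-point world-lines genuinely fail to propagate and, by the free-resolvent estimate, are mild enough ($L^2_{\loc}$, smooth in $t$) not to spoil the $L^2(K)$ statement of part (b). The remaining ingredients are routine finite-speed-of-propagation and propagation-of-singularities arguments.
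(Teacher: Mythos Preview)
Your argument is essentially correct, and the geometric ``emission set'' bookkeeping is a clean way to organize the induction. It is, however, a genuinely different route from the paper's proof. The paper does not track $\WF\cap\operatorname{Char}\Box$ via a growing but bounded set $\calE_{j'}$; instead it fixes a late time $T'$ at which the inductive hypotheses (b), (c) hold for $h_{j-1}=(G_0M_V)^{j-1}G_0f$, splits $1=\psi_{\mathrm{I}}(t)+\psi_{\mathrm{II}}(t)$ with $\psi_{\mathrm{I}}$ supported in $(-\infty,T'+1)$, and further splits the late piece spatially by a cutoff $\chi$ equal to $1$ near the marked points. This yields three terms: (I) $G_0M_V\psi_{\mathrm{I}}h_{j-1}$, compactly supported forcing, dispatched by strong Huygens; (II) $G_0M_V\psi_{\mathrm{II}}\chi h_{j-1}$, handled by the \emph{Hardy inequality} $M_V:L^2\to H^{-1}$ (dualized) together with Duhamel smoothing $G_0:C^\infty(\bbR_t;H^{-1})\to C^\infty(\bbR_t;L^2)$; (III) $G_0M_V\psi_{\mathrm{II}}(1-\chi)h_{j-1}$, where $V$ is replaced by a smooth $\tilde V$ and ordinary propagation of singularities applies. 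Your approach trades this decomposition for microlocal bookkeeping: you never introduce $\psi_{\mathrm{I}},\psi_{\mathrm{II}},\chi$, and you use the key observation $N^*L_n\cap\operatorname{Char}\Box=\varnothing$ to argue directly that the only new characteristic wavefront created at a marked point arises from the interaction of an \emph{incoming} characteristic singularity with $\WF(V)$, hence only during the bounded interval $I_n$.

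What your approach buys is a cleaner picture of \emph{why} only finitely many diffractions occur; what the paper's buys is a completely explicit function-space argument near the marked points. The one place your sketch is thinner than the paper's proof is precisely the step you flag as the main obstacle: justifying $C^\infty((T,\infty);L^2(K))$ when $K$ contains a marked point. There the H\"ormander product criterion for $V\cdot h_{j'}$ can actually fail (both factors carry $N^*L_n$ wavefront in all spatial directions), so the wavefront-set product formula is not directly available and must be supplemented by the conormal/Sobolev statement you allude to. The paper's term II makes this step concrete via Hardy plus Duhamel rather than wavefront calculus; if you carry your approach through, you will want an analogous explicit estimate at that point.
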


\begin{proof}
We prove the desired results by induction on $j$. First consider $j=0$. We are thus studying $G_0 f$, the solution to the forward problem for the free wave equation with forcing $f$. Finite speed of propagation, combined with the fact that $G_0$ is the \emph{retarded} propagator, gives (a).
        Because $f$ is compactly supported, for any $K\Subset \bbR^3$ the solution  vanishes identically near $K$ after some late time. Thus, (b) holds, trivially. 
        The microlocal condition, (c), holds by virtue of the weak Huygens principle (i.e., escape of singularities).

         Now consider $(G_0 M_V)^j G_0 f$. If we already know that $(G_0 M_V)^{j-1} G_0 f$ satisfies (a), then it follows immediately from finite speed of propagation of support that $(G_0 M_V)^j G_0 f$ does as well.

        In order to verify (b): assuming 
        that $(G_0 M_V)^{j-1} G_0 f$ satisfies (b), (c), pick $T'>0$ large enough such that
        \begin{itemize}
            \item $
            (G_0 M_V)^{j-1} G_0 f \in C^\infty((T',\infty);L^2(K_0)) $, for some compact $K_0\Subset \bbR^3$ containing the various marked points and satisfying $K^\circ_0 \Supset K$, 
            \item the portion of the wavefront set of $(G_0 M_V)^{j-1} G_0 f$ in the characteristic set is entirely outgoing relative to $K$ for $t> T'$.
        \end{itemize}
        Now pick a partition of unity $1=\psi_{\mathrm{I}}(t)+\psi_{\mathrm{II}}(t)$ with $\psi_{\mathrm{I}}$ supported in $(-\infty,T'+1)$ and $\psi_{\mathrm{II}}$ supported in $(T', \infty)$. Let $\chi\in C_{\mathrm{c}}^\infty(\bbR^3)$ be identically $1$ near $K$ and the marked points, while being supported within $K_0^\circ$. Writing 
        \begin{multline}
            (G_0 M_V)^{j} G_0 f = \underbrace{(G_0 M_V) \psi_{\mathrm{I}} (G_0 M_V)^{j-1} G_0 f}_{\mathrm{I}}+(\underbrace{G_0 M_V) \psi_{\mathrm{II}} \chi (G_0 M_V)^{j-1} G_0 f}_{\mathrm{II}} \\ 
            +(\underbrace{G_0 M_V) \psi_{\mathrm{II}} (1-\chi) (G_0 M_V)^{j-1} G_0 f}_{\mathrm{III}} ,
            \label{eq:terms}
        \end{multline}
        we show that each term on the right-hand side has the desired late-time smoothness. 
        \begin{enumerate}[label=(\Roman*)]
            \item By part (a) for $j-1$ (i.e., part of our inductive hypotheses), $M_V\psi_{\mathrm{I}} (G_0 M_V)^{j-1} G_0 f$ has compact support in spacetime. So, by Huygens, term I vanishes identically on $K$ at late enough times.
            \item By the Hardy inequality, $M_V:H^1(\bbR^3)\to L^2(\bbR^3)$. Dualizing, $M_V:L^2(\bbR^3)\to H^{-1}(\bbR^3)$. Thus, using the inductive hypothesis,
            \begin{equation} 
            M_V \psi_{\mathrm{II}} \chi (G_0 M_V)^{j-1} G_0 f \in C^\infty(\bbR_t; H^{-1}(\bbR^3)) .
        \end{equation}
        The free propagator smooths by one order, meaning 
        \begin{equation}
            G_0 : C^\infty(\bbR_t; H^{-1}(\bbR^3)) \to C^\infty(\bbR_t; L^2(\bbR^3)),
        \end{equation}
        as can be concluded e.g.\ by Duhamel.
        Applying this to $ M_V \psi_{\mathrm{II}} \chi (G_0 M_V)^{j-1} G_0 f$, we conclude that term II is in $C^\infty(\bbR_t; L^2(\bbR^3))$. 
        \item The final term is controlled using the inductive hypothesis that $(G_0 M_V)^{j-1} G_0 f$ satisfies (c). Notice that there exists a \emph{smooth} function $\tilde{V}$ that agrees with $V$ on the support of $1-\chi$. Hence term III is 
        \begin{equation}
            G_0 (M_{\tilde{V}} \psi_{\mathrm{II}}(1-\chi) (G_0 M_V)^{j-1} G_0 f). 
            \label{eq:misc_112}
        \end{equation}
        By construction, the portion of the wavefront set of $ \psi_{\mathrm{II}}(1-\chi) (G_0 M_V)^{j-1} G_0 f$ in the characteristic set of the free wave operator is entirely outgoing.  Note that $M_{\tilde{V}} \psi_{\mathrm{II}}(1-\chi) (G_0 M_V)^{j-1} G_0 f$ vanishes near $K$ (by the support condition on $1-\chi$).  Microlocal elliptic regularity implies that
        \begin{equation}
        \WF (G_0 (M_{\tilde{V}} \psi_{\mathrm{II}}(1-\chi) (G_0
        M_V)^{j-1} G_0 f)) \cap T^* (\RR_t \times K)
        \end{equation} lies inside the characteristic set, and
        propagation of singularities then shows that this set is in
        fact empty, since $\WF (M_{\tilde{V}}
        \psi_{\mathrm{II}}(1-\chi) (G_0 M_V)^{j-1} G_0 f)$ is entirely
        $K$-outgoing.  
        \end{enumerate}
        This establishes (b).
        
        Finally, we verify (c), term-by-term in \cref{eq:terms}. Term I certainly satisfies (c). Since term II is in $C^\infty(\bbR_t;L^2(\bbR^3))$, its wavefront set must lie in the characteristic set of $D_t$ (by microlocal elliptic regularity). The characteristic set of $D_t$ is disjoint from the characteristic set of the wave operator. Finally, it follows from \cref{eq:misc_112} that the wavefront set of term III in the characteristic set of the wave operator propagates outwards, so stays $K$-outgoing at late times. 
\end{proof}

\begin{corollaryx}\label{cor:Huygens}
Given compact sets $\tF\subset \RR^{1,3}$ and $K\subset \RR^3$ and $M\in\mathbb{N}$, there exists a $T$ such that,
 \begin{equation} 
    G \in \mathcal{L}(L^2(\tF),H^M_{\mathrm{loc}}((T, \infty); L^2(K))).
\end{equation}
\end{corollaryx}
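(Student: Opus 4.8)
The plan is to assemble three pieces: the harmless low-frequency part of $G$, the preceding lemma (which controls the first few terms of the Dyson series), and \eqref{Dj} together with \Cref{thm:Bornconv} (which control the tail). Throughout, $X=\overline{\bbR^3}$, as in the rest of this section.

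First I would reduce to a statement with cutoffs. Fix $M$ and pick $\chi\in C_{\mathrm{c}}^\infty(\bbR^3)$ equal to $1$ on $K$, on all of the marked points, and on the spatial projection of $\tF$. Then $M_\chi f=f$ for $f\in L^2(\tF)$, and on $(T,\infty)\times K$ one has $\chi\equiv1$; so, after writing $G=\check{1}_{|\sigma|<\Sigma}(t)*G+\check{1}_{|\sigma|>\Sigma}(t)*G$ and invoking the high-frequency Dyson identity \eqref{eq:Dyson_main} (valid once $\Sigma=\Sigma(\tF)$ is large), it suffices to bound
\[
\check{1}_{|\sigma|<\Sigma}(t)*G\ +\ \sum_{j=0}^\infty D_j,\qquad D_j=\chi\,\check{1}_{|\sigma|>\Sigma}(t)*(-G_0M_V)^jG_0M_\chi,
\]
as a map $L^2(\tF)\to H^M_{\loc}((T,\infty);L^2(K))$ for a suitable $T$. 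I would split this into the low-frequency term, the tail $\sum_{j\ge J}D_j$, and the initial terms $D_0,\dots,D_{J-1}$.

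For the low-frequency term: as already observed in this section, $\check{1}_{|\sigma|<\Sigma}(t)*G$ is smoothing — it maps $L^2(\tF)$ into $C^\infty(\bbR_t;H^2_{\loc}(\bbR^3))\subset H^M_{\loc}((T,\infty);L^2(K))$ for every $T$, with the obvious bound (this uses only good behavior of the low-energy resolvent on compact sets, which holds for $d\ge3$ when $P$ has no bound states). For the tail: by \eqref{Dj} and \Cref{thm:Bornconv} there is $J=J(M,\tF,K,\Sigma)$ such that $\sum_{j\ge J}D_j$ converges in $\calL\bigl(L^2(\tF),H^M(\bbR_t;L^2_{\loc}(\bbR^3))\bigr)$; restricting the spatial variable to $K$ then gives a bounded operator into $H^M(\bbR_t;L^2(K))$, hence into $H^M_{\loc}((T,\infty);L^2(K))$ for any $T$.

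The initial terms are where the preceding lemma enters, and this is the step I expect to be the main obstacle: the high-frequency Dyson terms $D_j$ carry the temporal filter $\check{1}_{|\sigma|>\Sigma}(t)*$, which is non-local in time, so one must check it does not destroy the late-time regularity furnished by the lemma. Using $\check{1}_{|\sigma|>\Sigma}(t)=\delta(t)-\check{1}_{|\sigma|<\Sigma}(t)$, I would write $D_j=\pm\bigl(\chi(G_0M_V)^jG_0-\chi\,\check{1}_{|\sigma|<\Sigma}(t)*(G_0M_V)^jG_0\bigr)$. Part (b) of the lemma gives $(G_0M_V)^jG_0:L^2(\tF)\to C^\infty((T_j,\infty);L^2(K))$ for $T_j=T_{\tF,j,K}$, and uniformity in $f$ follows from the closed graph theorem, since $(G_0M_V)^jG_0$ is continuous into $\calD'$ and $T_j$ depends only on $\tF,j,K$ (finite speed of propagation). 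The remaining piece $\chi\,\check{1}_{|\sigma|<\Sigma}(t)*(G_0M_V)^jG_0$ is band-limited in time, hence smooth in $t$, and lies in $L^2(K)$ for exactly the reason the low-frequency part of $G$ does — so one can simply group all of these pieces with the low-frequency term above. Taking $T=\max_{j<J}T_j$ and summing the three contributions, the closed graph theorem (or the explicit bounds) yields $G\in\calL\bigl(L^2(\tF),H^M_{\loc}((T,\infty);L^2(K))\bigr)$.
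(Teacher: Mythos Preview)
Your proof is correct and follows the paper's approach: decompose $Gf$ into the low-frequency piece, finitely many initial Dyson terms (handled by the preceding lemma), and the tail (handled by \eqref{Dj} and \Cref{thm:Bornconv}), then invoke the closed graph theorem. You even spell out the passage from the lemma's unfiltered $(G_0M_V)^jG_0$ to the filtered $D_j$ via $\check{1}_{|\sigma|>\Sigma}=\delta-\check{1}_{|\sigma|<\Sigma}$, a step the paper's one-line proof leaves implicit; the only quibble is that the proposition containing \eqref{eq:Dyson_main} is stated for $\chi$ supported away from the marked points, so if $K$ or the spatial projection of $\tF$ meet a marked point you should appeal to \Cref{thm:Bornconv} directly for the Dyson identity in the relevant $L^2$-based topology rather than to that proposition.
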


\begin{proof} For any $f$, $Gf$ is a sum of the smooth part at low frequency, the smooth part from finitely many terms in the series, and the tail of the series, which has the asserted regularity by \eqref{Dj} et seq.  The estimate as stated (rather than for a single $f$ at a time) then follows by the closed graph theorem.\end{proof}

This ``very weak Huygens principle,'' 
in which solutions ultimately achieve any desired degree of smoothness after a sufficiently long time,
was used in \cite{BaWu13} to show the existence of logarithmic resonance-free regions in the complex plane. (See \cite[Chapter 4]{DyZw19} for the definition of resonances, which are most easily viewed as poles of the analytic continuation of the resolvent, mapping from $L^2_{\mathrm{c}}(\RR^3)$ to $L^2_{\text{loc}}(\RR^3)$.)
The proof followed from a variant of the Vainberg parametrix construction \cite{Va88}.
As the parametrix step of \cite{BaWu13} was obtained in the full generality of black box scattering, the same proof applies here, provided we confine ourselves to fully screened (i.e., compactly supported) potentials with Coulomb singularities.

\begin{corollaryx}
Assume that $V$ is a real-valued compactly supported potential with
Coulomb singularities (i.e., satisfying \eqref{eq:Coulomb}) such
  that $\triangle +V$ has no bound states.  Then there exist $\nu,\ R>0$ such that
$\triangle +V$
has no resonances in the region
\begin{equation}\lvert \Re z\rvert>R,\quad \Im z > -\nu \log \lvert \Re z \rvert,\end{equation}
and in this region enjoys the cutoff resolvent estimate
\begin{equation}
\|\chi(\triangle+V-z^2)^{-1}\chi\|_{L^2\to L^2}\leq C \lvert z\rvert^{-1}e^{T\lvert \Im z \rvert}
\end{equation}
for all $\chi\in C_c^\infty (\RR^3)$ and for some $C,T$ depending on $\chi$.
\end{corollaryx}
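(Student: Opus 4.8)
The plan is to run the Vainberg method exactly as in \cite{BaWu13,Va88}, taking the ``very weak Huygens principle'' of \Cref{cor:Huygens} and the high-energy cutoff resolvent bound of \Cref{thm:main_0} as the only inputs. First I would note that $P=\triangle+V$ sits inside the black box scattering framework of \cite[Chapter 4]{DyZw19}: in dimension $d=3$ a Coulomb singularity $r^{-1}$ is locally square-integrable and, by Hardy's inequality \eqref{eq:Hardy}, infinitesimally form-bounded relative to $\triangle$, so $P$ is self-adjoint and bounded below on $L^2(\bbR^3)$ with form domain $H^1(\bbR^3)$; since $V$ is compactly supported, $P$ coincides with the free operator $\triangle$ outside a ball, so the black box reference operator on the exterior is $\triangle$, and the (finitely many) bound states of $P$, if any, lie in a bounded region away from the set in the statement. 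Consequently $\chi(\triangle+V-z)^{-1}\chi$ admits a meromorphic continuation from the physical region past the real axis, its poles are by definition the resonances, and the abstract Vainberg parametrix construction carried out in \cite{BaWu13} in the generality of black box scattering applies without change.

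The engine of that construction is the passage from time-domain smoothing to frequency-domain bounds. \Cref{cor:Huygens} says precisely that, for each $M$ and each pair of compact sets $\tF\subset\RR^{1,3}$, $K\subset\RR^3$, there is a time $T$ after which the cutoff retarded propagator gains $M$ derivatives, $G\in\calL(L^2(\tF),H^M_{\loc}((T,\infty);L^2(K)))$; in its proof, the finitely many initial, non-smoothing terms of the Dyson series are handled by finite speed of propagation and escape of singularities (the Lemma preceding \Cref{cor:Huygens}), while the tail smooths by \Cref{thm:Bornconv} through \eqref{Dj}. Taking the Fourier transform in $t$ and estimating the resulting contour integrals by Paley--Wiener together with a Phragm\'en--Lindel\"of argument in the strip $\{\Im z>-\nu\log|\Re z|\}$ then produces the meromorphic continuation of $\chi R_V(z)\chi$ to that strip with a polynomial-in-$z$ bound; since $M$ may be taken arbitrarily large at the cost of increasing $T$, the strip is genuinely logarithmic rather than a bounded neighborhood, and the bound forces the absence of poles once $R$ is large. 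The prefactor $\lvert z\rvert^{-1}$ is inherited from the real-axis estimate $\lVert\chi R_V(\sigma\pm i0)\chi\rVert_{L^2\to L^2}=O(1/\sigma)$ of \Cref{thm:main_0} via the same maximum-principle argument, and the factor $e^{T\lvert\Im z\rvert}$ is the Paley--Wiener price for the truncation of the propagator to times $\le T$, with $T=T(\operatorname{supp}\chi)$ the time from \Cref{cor:Huygens}. Gluing this to the trivial analyticity of $\chi R_V(z)\chi$ on the bounded remainder of $\{\lvert z\rvert>R,\ \Im z\ge 0\}$ and shrinking $\nu,R$ gives the asserted region and estimate.

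The main obstacle is not in the Vainberg machinery --- which is black-box, hence blind to the nature of the singularity of $V$ once self-adjointness and exterior agreement with $\triangle$ are secured --- but in checking that the already-established ingredients meet its hypotheses: that the very weak Huygens principle can be arranged with \emph{arbitrarily many} gained derivatives $M$, needed to reach a logarithmic (not merely bounded) neighborhood of the spectrum, and that the high-energy cutoff resolvent estimate is uniform down to the real axis and robust under the energy cutoff applied only once per term of the Dyson series. Both are in hand: the first from \Cref{cor:Huygens}, whose proof permits $M$ as large as desired by enlarging $T$ (its one delicate step, the smoothing of the Dyson tail, being supplied by \Cref{thm:Bornconv}); the second from \Cref{thm:main_0} together with the observation in \S\ref{sec:Dyson} that the low-energy cutoff enters only once per $j$. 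With these, the remaining steps are verbatim those of \cite{BaWu13}.
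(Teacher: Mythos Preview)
Your proposal is correct and follows essentially the same route as the paper: both verify that $P=\triangle+V$ fits the black box framework and then feed the very weak Huygens principle of \Cref{cor:Huygens} into the Vainberg/\cite{BaWu13} machinery (specifically Proposition~8 of \cite{BaWu13}). The paper is terser, simply citing \cite[Prop.~8]{BaWu13} once the regularizing property $\chi U(t)\chi\in C^s((T,\infty);\mathcal{L}(L^2,L^2))$ is in hand, whereas you unpack more of the Paley--Wiener/Phragm\'en--Lindel\"of mechanics; also, the paper records the operator domain as $H^2(\bbR^3)$ (via Kato--Rellich, since a Coulomb singularity is infinitesimally operator-bounded in $d=3$) rather than the weaker form-domain statement you give, and explicitly checks the compactness hypothesis $1_{B(0,R_0)}(P+i)^{-1}$ via Rellich.
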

\begin{proof}
The result follows from Proposition~8 of \cite{BaWu13}, together with
\Cref{cor:Huygens}, as follows.  Here (in the notation of
\cite{BaWu13}) the ``black box'' operator is given by the Hamiltonian
$P=\triangle+V$ acting on $L^2(\RR^3)$ with domain
$\mathcal{D}=H^2(\RR^3)$.  The black box Hilbert space
$\mathcal{H}_{R_0}$ is simply given by $L^2(B(0, R_0))$, with $R_0$
chosen large enough so that $\supp V$ is inside $B(0, R_0)$.  The
compactness of\footnote{We remark that there is a typographical error in \cite{BaWu13}: the factor of $1_{B(0, R_0)}$ is missing from the
conditions in Section 6.1; cf.~\cite[(4.1.12)]{DyZw19}.} $1_{B(0,
R_0)}(P+i)^{-1}$ follows by self-adjointness of $P$ with domain
$H^2$, together with the Rellich lemma.  The black box
hypotheses also require a Weyl-type upper
bound of the form \cite[Equation (6)]{BaWu13} for the operator
$P^\sharp$ obtained by compactifying onto a large reference
torus.  This follows from the Hardy inequality, which guarantees that for
any $\epsilon>0$,
$P^\sharp=\triangle+V \geq (1-\epsilon) \triangle-C_\epsilon$ on the
torus; the min-max characterization of the eigenvalues then gives a
suitable upper bound on the counting function.

Let $U(t)=\sin (t\sqrt{P})/\sqrt{P}$ denote the propagator mapping a function $g$ to the solution $u$ of the wave equation with Cauchy data $u(0)=0$, $u_t(0)=g$.  Thus, for $g \in L^2(\RR^3)$, $u \in L^\infty_\loc(\RR; H^1(\RR^3))$.  Fix $\psi(t)$ supported in $(0,\infty)$ and equal to $1$ on $(1,\infty)$.  If $g$ is supported in a compact set $F_0$,
\begin{equation}
(\pa_t^2+P)\psi(t) U(t)g=[\pa_t^2, \psi] U(t) g \in L^2([0,1]\times F_1)
\end{equation}
with $F_1$ containing a neighborhood of radius $1$ around $F_0$.   Thus by \Cref{cor:Huygens}, for any $K$ compact and $M \in \NN$, there exists $T$ such that 
\begin{equation}
U(t)g \in H^M_\loc((T,\infty); L^2(K)).
\end{equation}
More generally, by the Sobolev embedding and the closed graph theorem we find that for any $\chi \in C_c^\infty(\RR^3)$ and $s \in \NN$, there exists $T>0$ such that
\begin{equation}\label{regularizing}
\chi U(t) \chi\in C^s((T,\infty); \mathcal{L}( L^2,  L^2)).
\end{equation}
This is the regularizing effect necessary for the application of Proposition~8 of \cite{BaWu13}.\footnote{In that paper, the regularizing effect applied both in the spatial and temporal variables, and this was reflected in the theorem statements. Here, we can convert temporal regularity to get additional spatial regularity, but this is actually unnecessary, because \cref{regularizing} is all that is used  in \cite[Prop.\ 8]{BaWu13} and Vainberg's method more generally. Cf.\ Remark~2 following Theorem~4.43 of \cite{DyZw19}.}
\end{proof}

\appendix

\section{Various Sobolev spaces}
\label{sec:Sobolevs}
In this section we clarify the definitions of, and relationships among, the menagerie of Sobolev spaces employed above.  Each hierarchy of Sobolev spaces is associated to one or more pseudodifferential calculi. 
Our most refined calculus is the $\vee$- (``vee'') calculus, described at length below. The other calculi all arise from ignoring various aspects of the $\vee$-calculus. For example, one can restrict attention to $r\gg 1$, where the sc-calculus is the relevant one, or to $r\ll 1$, where the b-calculus is the relevant one (for individual $\sigma$). The $\vee$-spaces keep track of both, but the large- and small-$r$ problems are, in some sense, decoupled. Add to the mix the possibility of tracking $\sigma\to\infty$ behavior, and one gets more calculi still. If the reader prefers, they may use the $\vee$-Sobolev spaces throughout this paper, instead of the various ``forgetful'' spaces. The $\vee$-spaces originate in the work of Hintz \cite{Hintz2}, but Hintz did not give them a name.

We only describe spaces with a positive integer amount of differential order $m$.\footnote{Defining the spaces with fractional or negative values requires either interpolation and duality arguments, or the deployment of the relevant pseudodifferential calculi. We do not describe the pseudodifferential calculi here, apart from remarking that they microlocalize the vector fields used to define the Sobolev spaces. 
}
Membership in one such space means regularity under $m$-fold application of certain vector fields and multiplication by some polynomial weights.
The spaces are thus specified by the vector fields and weights. 
The norms are left implicit, but can be written as square roots of sums of squares of all the quantities whose finiteness determines the space.  To ease our description of the spaces we stick to the case $X=\overline{\RR^n}$, but the definitions can be extended to general scattering manifolds as described in \cite{Me94} straightforwardly, using local coordinate charts.
 
 In most of the spaces below there is a semiclassical parameter $h$ (which will equal $\sigma^{-1}$ in our application), and the norms are correspondingly $h$-dependent.  
 If $\calX=\{\calX(h)\}_{h>0}$ is a family of normed function spaces $\calX(h)\subset \calD'$ and $u=\{u(-;h)\}_{h>0}$ is a family of distributions, then we say that $u$ lies in $\calX$ \emph{uniformly} if $u(-;h)\in \calX(h)$ for each $h>0$ and $\lVert u(-;h)\rVert_{\calX(h)}$ is uniformly bounded as $h\to 0^+$.

\vspace{1em}
\noindent \textbf{The semiclassical calculus} In the interior of $X$ away from marked points (or ignoring marking), a one-parameter family $u=\{u(-;h)\}_{h>0}$ of distributions $u(-;h)$ lies uniformly in $H_\hbar^m(X)$ if it lies uniformly in $L^2$ and up to $m$-fold products of vector fields $h D_j$ (in local coordinates) leave it uniformly in $L^2$.  See e.g.\ \cite[App.~E]{DyZw19}.

\vspace{1em}
\noindent \textbf{The semiclassical scattering calculus} Away from marked points (or ignoring marking), a one-parameter family $u=\{u(-;h)\}$ of distributions $u(-;h)$ is uniformly in $H_{\sc,\hbar}^{m,s}(X)$ if $r^s u$ lies uniformly in $L^2(X)$ as do up to $m$-fold products of vector fields $h D_j$ applied to $r^s u$.  (This of course looks the same as the space above with a weight added, but we are now working globally.)  The differential order $m$ here is of little interest in our work, as our operator $P$ is elliptic at fiber infinity.

More subtly, we may replace the weight $s$ with a \emph{function} $\mathsf{s}$ on the phase space, which here is the \emph{compactified scattering cotangent} bundle, identified in the case of $X=\overline{\RR^n}$ with the space $\overline{\RR^n_x} \times \overline{\RR^n_\xi}$. The weight $r^s$ now becomes a (scattering) pseudodifferential operator $\Op (r^\mathsf{s})$, having variable order.   See e.g.\ \cite{VaZw00}.

When only one value of $h$ is of interest, the `$\hbar$' will be dropped from the notation.

\vspace{1em}
\noindent 
\textbf{The semiclassical b-calculus} Near a marked point, which without loss of generality we take to be at $r=0$, a one-parameter family $u=\{u(-;h)\}_{h>0}$ of distributions $u(-;h)$  is in $H_{\mathrm{b},\hbar}^{m,\ell}([X; \text{markings}])$ uniformly if $r^{-\ell} u$ lies in $L^2$ as do up to $m$-fold products of vector fields  $h r D_r$, $h D_{\theta_j}$ (in polar coordinates) applied to $r^{-\ell} u$. 

When only one value of $h$ is of interest, the `$\hbar$' will be dropped from the notation.  Then, the norm is the ordinary b- Sobolev norm.

\vspace{1em}
\noindent \textbf{The $\mathrm{sc}-\mathrm{b}$ calculus} A distribution $u$ is in $\smash{H_{\scb}^{m,\mathsf{s}, \ell}}([X; \text{markings}])$ if near marked points it lies in $H_{\mathrm{b}}^{m,\ell}$ and near $\pa X$ it lies in $H_{\sc}^{m,\mathsf{s}}$.  

Thus, near marked points the norm is specified by powers of $rD_r$, $D_\theta$ applied to $r^{-\ell} u$ and near $\pa X$ by powers of $D_j$ applied to $\Op (r^\mathsf{s})u$.  This is the only one of the calculi under discussion in which we have no semiclassical parameter, and we use it to write the simplest possible qualitative results. See e.g.\ \cite[Appendix A]{HiVa18}.\footnote{The space $H_{\mathrm{sc-b}}(X)$
  should not be confused with what Vasy calls $H_{\mathrm{sc,b}}(X)$
  in \cite{VasyOverloaded}.}

\vspace{1em}
\noindent \textbf{The ``vee'' calculus} This calculus and its associated Sobolev space derive from the semiclassical cone calculus of Hintz \cite{Hintz2}: $u$ is in $H_\vee^{m,\mathsf{s}, \ell,\alpha}([X;\text{markings}])$ if 
\begin{itemize}
    \item 
$u \in H_{\mathrm{sc},\hbar}^{m,\mathsf{s}}$ near $\pa X$, uniformly, while 
    \item near the marked points, 
    \begin{equation*} 
        \Big( \frac{r}{r+h}\Big)^{-\ell} (r+h)^{-\alpha} u
    \end{equation*}
    lies in $L^2$ uniformly, as do up to $m$-fold products of vector fields $\frac{h}{h+r} r D_r$, $\frac{h}{h+r} D_\theta$ (in polar coordinates), applied to it.
\end{itemize}  
Note that for fixed $h>0$ the norms on this space are equivalent to those on $H_{\scb}^{m,\mathsf{s},\ell}$, i.e., membership in the two spaces is the same; it is only as $h \downarrow 0$ that the norms on $H_\vee^{\bullet}$ are varying and thus encoding information about asymptotics in the parameter $h$.

If one is willing to give up a polynomial in $h$, it is possible to convert between sc-b and $\vee$-norms:
\begin{equation}\label{eq:sobolevinclusion1}
\| u \|_{H_\vee^{m,s, \ell,\alpha}}<C\Longrightarrow 
\| u \|_{H_\scb^{m,s, \ell}}<C' h^{-\lvert m\rvert -\lvert \alpha\rvert -\lvert \ell \rvert-|s|}
\end{equation}
since
\begin{equation} 
\|(r D_r)^\alpha  D_\theta^\beta u\| \lesssim \sum_{\alpha'\leq \alpha,\beta'\leq \beta} h^{-\alpha'-\lvert\beta'\rvert}
\Big\| \Big(\frac{h}{h+r} r D_r)^{\alpha'} \Big(\frac{h}{h+r} D_\theta\Big)^{\beta'} u\Big\|,
\end{equation} 
which establishes the estimate near the marked points, and likewise
\begin{equation} 
\| D^\alpha u\| = h^{-\lvert \alpha \rvert} \|(h D)^\alpha u\|
\end{equation} 
handles near the boundary at infinity.

Conversely, we also have more immediately the opposite inclusion:
\begin{equation}\label{eq:sobolevinclusion2}
\| u \|_{H_\scb^{m,\mathsf{s}, \ell}}<C
\Longrightarrow
\| u \|_{H_\vee^{m,\mathsf{s}, \ell,\ell}}<C.
\end{equation}

\section[A bound on Kummer's confluent hypergeometric function]{An $L^\infty$ bound on Kummer's confluent hypergeometric function}

\begin{lemma}
	\label{lem:Kummer_bound}
    For $A>0$, Kummer's hypergeometric function $M$ satisfies 
    \begin{equation} 
        M(a,1,i\lambda) \in L^\infty(  i[-A,A]_a\times \bbR_\lambda).
    \end{equation}
\end{lemma}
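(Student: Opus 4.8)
The plan is to separate a compact range of $\lambda$ from a large-$\lambda$ tail. Writing $a = it$ with $t \in [-A,A]$, the lemma asks for a bound on $M(it,1,i\lambda)$ uniform over $(t,\lambda) \in [-A,A]\times\bbR$. On the compact set $\{|t|\le A,\ |\lambda|\le R\}$ (any fixed $R$), the function $(t,\lambda)\mapsto M(it,1,i\lambda)$ is continuous --- indeed entire in each argument --- hence bounded there; so the problem reduces to bounding $M(it,1,i\lambda)$ for $|\lambda|$ large, uniformly in $t$. One cannot attack this directly through the Euler-type integral representation of $M(a,1,z)$, because the factor $t^{a-1}$ there fails to be integrable near $0$ when $\Re a = 0$, which is exactly our case; so I would instead use the large-argument asymptotics.

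Concretely, I would invoke \cite[Eq.~13.7.2]{NIST}. With $b=1$ and keeping only the leading term of each of the two asymptotic series, this says that as $z\to\infty$ along either imaginary half-axis,
\begin{equation*}
	M(a,1,z) = \frac{e^{\pm\pi i a}\, z^{-a}}{\Gamma(1-a)}\bigl(1+O(|z|^{-1})\bigr) + \frac{e^{z}\, z^{a-1}}{\Gamma(a)}\bigl(1+O(|z|^{-1})\bigr),
\end{equation*}
with the implied constants uniform for $a$ in the compact set $i[-A,A]$ (the expansion coefficients are rational in $a$, and the DLMF remainder estimates are uniform on compacts). I would apply this with $z=i\lambda$, choosing the branch appropriate to the upper half-plane when $\lambda>0$ and to the lower half-plane when $\lambda<0$; in both cases $|z|=|\lambda|$.

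Estimating the two terms is then elementary. On the imaginary axis $|e^{z}|=1$; and for a complex power $|\zeta^{w}| = |\zeta|^{\Re w}e^{-(\Im w)\arg\zeta}$, so with $\zeta=i\lambda$ (hence $\arg\zeta=\pm\pi/2$) and $w=a-1=it-1$ one gets $|z^{a-1}| = |\lambda|^{-1}e^{\mp t\pi/2}$, and likewise $|e^{\pm\pi i a}z^{-a}|$ is bounded --- the point being that $|t|\le A$ keeps all these exponential factors under control. Since $1/\Gamma$ is entire, $1/\Gamma(1-it)$ and $1/\Gamma(it)$ are continuous, hence bounded, on $\{|t|\le A\}$ (the pole of $\Gamma$ at $a=0$ does no harm: it merely makes the second term vanish at $t=0$). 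Hence, for $|\lambda|$ large, the first term is $O(1)$ and the second is $O(|\lambda|^{-1})$, both uniformly in $t\in[-A,A]$, so $M(it,1,i\lambda)=O(1)$ there; combined with the compact-range bound this proves the lemma.

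The one place demanding care is the bookkeeping around \cite[Eq.~13.7.2]{NIST}: verifying that the expansion (with its error term) is uniform for $a$ in the compact set $i[-A,A]$, and selecting the correct sectorial branch as $z=i\lambda$ runs over both signs of the real axis --- handled either by treating $\lambda>0$ and $\lambda<0$ separately, or by invoking the conjugation symmetry $M(\overline a,b,\overline z)=\overline{M(a,b,z)}$ to reduce to one sign. Everything else is routine modulus estimation.
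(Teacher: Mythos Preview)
Your proof is correct and takes a genuinely different route from the paper. The paper does \emph{not} abandon the integral representation \eqref{eq:Kummer_integral}; instead it regularizes the $t^{a-1}$ singularity by subtracting off the constant term $1$ from $e^{zt}(1-t)^{-a}$, evaluates the resulting $\int_0^1 t^{a-1}\dd t = 1/a$ explicitly, and analytically continues the remainder to $\Re a=0$. It then splits the integral at $t=1/|\lambda|$ and uses a single integration by parts on the oscillatory piece to beat the na\"ive $\log|\lambda|$ bound. Your approach instead black-boxes all of this into the DLMF large-argument expansion and reduces to modulus estimation of the two leading terms.

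What each buys: your argument is shorter and more transparent once the asymptotic formula is granted, but it outsources the hard work to the uniformity (in $a\in i[-A,A]$) of the DLMF remainder bounds. You correctly flag this as the one place demanding care; it does hold, since the explicit error bounds in \cite[\S13.7(ii)]{NIST} involve only Pochhammer symbols in $a$ and are thus continuous on compacts, and the potentially troublesome $1/\Gamma(a)$ prefactor is entire (hence bounded) rather than singular. The paper's approach is more self-contained---no appeal to tabulated asymptotics---and in fact recovers the same $O(1)$ and $O(|\lambda|^{-1})$ structure by hand. A minor correction to your narrative: it is not that the integral representation \emph{cannot} be used when $\Re a=0$, only that it requires the subtraction-and-continuation step the paper carries out.
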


\begin{proof}
    We can make use of the integral formula 
    \begin{equation} 
    M(a,1,z) = \frac{1}{\Gamma(1-a)\Gamma(a)} \int_0^1 e^{zt} t^{a-1} (1-t)^{-a} \dd t
    \label{eq:Kummer_integral}
    \end{equation}
    \cite[\href{http://dlmf.nist.gov/13.4.E1}{Eq.\ 13.4.1}]{NIST}, which holds for $a$ in the strip $\Re a \in (0,1)$. Unfortunately, we need the case $a\in i\bbR$, for which $\Re a=0$. Our first task is to derive from \cref{eq:Kummer_integral} a similar integral formula that holds for a wider range of $a$.

    Expand:
    \begin{equation}
        e^{zt}(1-t)^{-a} = 1 + E(t,a,z) , 
    \end{equation}
    where $E(t,a,z) = -1+ e^{zt}(1-t)^{-a}$; note that, for each $a\in \bbC$ and $z\in \bbC$, this is $O(t)$ as $t\to 0^+$. 
    Consequently, for $\Re a \in (0,1)$, 
    \begin{align}
        \begin{split} 
        \int_0^1 e^{zt} t^{a-1} (1-t)^{-a}\dd t &= \int_0^1 t^{a-1} \dd t + \int_0^1 t^{a-1}  E(t,a,z)\dd t  \\
        &= \frac{1}{a} + \int_0^1 t^{a-1}  E(t,a,z)\dd t ,
        \end{split} 
    \end{align}
    where the final integral is absolutely convergent for every $z\in \bbC$ and $a\in \bbC$ with $-1< \Re a < 1$, and evidently depends analytically on $a$. 
    Using the meromorphy of $M(a,1,z)$, it follows that
    \begin{equation}
        M(a,1,z) = \frac{1}{\Gamma(1-a)\underbrace{\Gamma(a) a}_{\Gamma(1+a)}} \Big[1+   a\int_0^1 t^{a-1}  E(t,a,z)\dd t\Big] 
        \label{eq:misc_099}
    \end{equation}
    whenever $-1 < \Re a < 1$.  

    The remaining task is to check that the integral above can be bounded uniformly in $a\in i[-A,A]$ and $z=i\lambda$, for $\lambda\in \bbR$. For bounded $\lambda$, this is straightforward, so suppose  $|\lambda|\geq 1$. We now use
    \begin{align}
        \begin{split} 
        -E(t,a,i\lambda) &= (1-e^{it\lambda})+(1-(1-t)^{-a})- (1-e^{i t \lambda })(1-(1-t)^{-a}) \\
        &=  (1-e^{it\lambda}) + O( t  ),
        \end{split} 
    \end{align}
    where the $O(t)$ bound is uniform.  The $O( t )$ term contributes a uniformly bounded term to $M(a,1,i\lambda)$. The other term contributes minus
    \begin{equation}
        a\int_0^1 t^{a-1} (1-e^{it \lambda}) \dd t = a\underbrace{\int_0^{1/|\lambda|} t^{a-1} (1-e^{it \lambda}) \dd t}_{\mathrm{I}} + \underbrace{\int_{1/|\lambda|}^1  at^{a-1} \dd t}_{\mathrm{II}} - a\underbrace{\int_{1/|\lambda|}^1 t^{a-1} e^{it\lambda}\dd t}_{\mathrm{III}}.
    \end{equation}
    Term I is estimated using $1-e^{it\lambda} = O(t|\lambda|)$, which yields
    \begin{equation}
        |\mathrm{I}| \lesssim |\lambda| \int_0^{1/|\lambda|} \dd t =1
    \end{equation}
    (using $\Re a=0$).
    Term II is just calculated:
    \begin{equation}
        \int_{1/|\lambda|}^1  at^{a-1} \dd t = 1-|\lambda|^{-a} = O(1)
    \end{equation}
    (again using $\Re a=0$).
    Term III is estimated by integrating by parts (to make use of the oscillations of the integrand  -- the only reason we are doing this is that it beats the na\"ive estimate $\mathrm{III}\lesssim \int_{1/|\lambda|}^1\dd t/t = \log |\lambda|$ by a log): for $\lambda\neq 0$, 
    \begin{equation}
        -\mathrm{III} =  \frac{i}{\lambda}\int_{1/|\lambda|}^1 t^{a-1} \frac{\mathrm{d}}{\mathrm{d}t}   e^{it\lambda} \dd t  = \underbrace{\frac{i}{\lambda} \Big[t^{a-1} e^{it\lambda} \Big]_{t=1/|\lambda|}^1}_{=O(1)} +\frac{a-1}{\lambda i}\int_{1/|\lambda|}^1  t^{a-2} e^{it\lambda} \dd t, 
    \end{equation}
    and
    \begin{equation}
        \Big|  \int_{1/|\lambda|}^1  t^{a-2} e^{it\lambda} \dd t\Big| \leq \int_{1/|\lambda|}^1 \frac{\dd t}{t^2} = |\lambda| -1.
    \end{equation}
\end{proof}

\begin{figure}[t]
    \centering
\includegraphics[scale=.55]{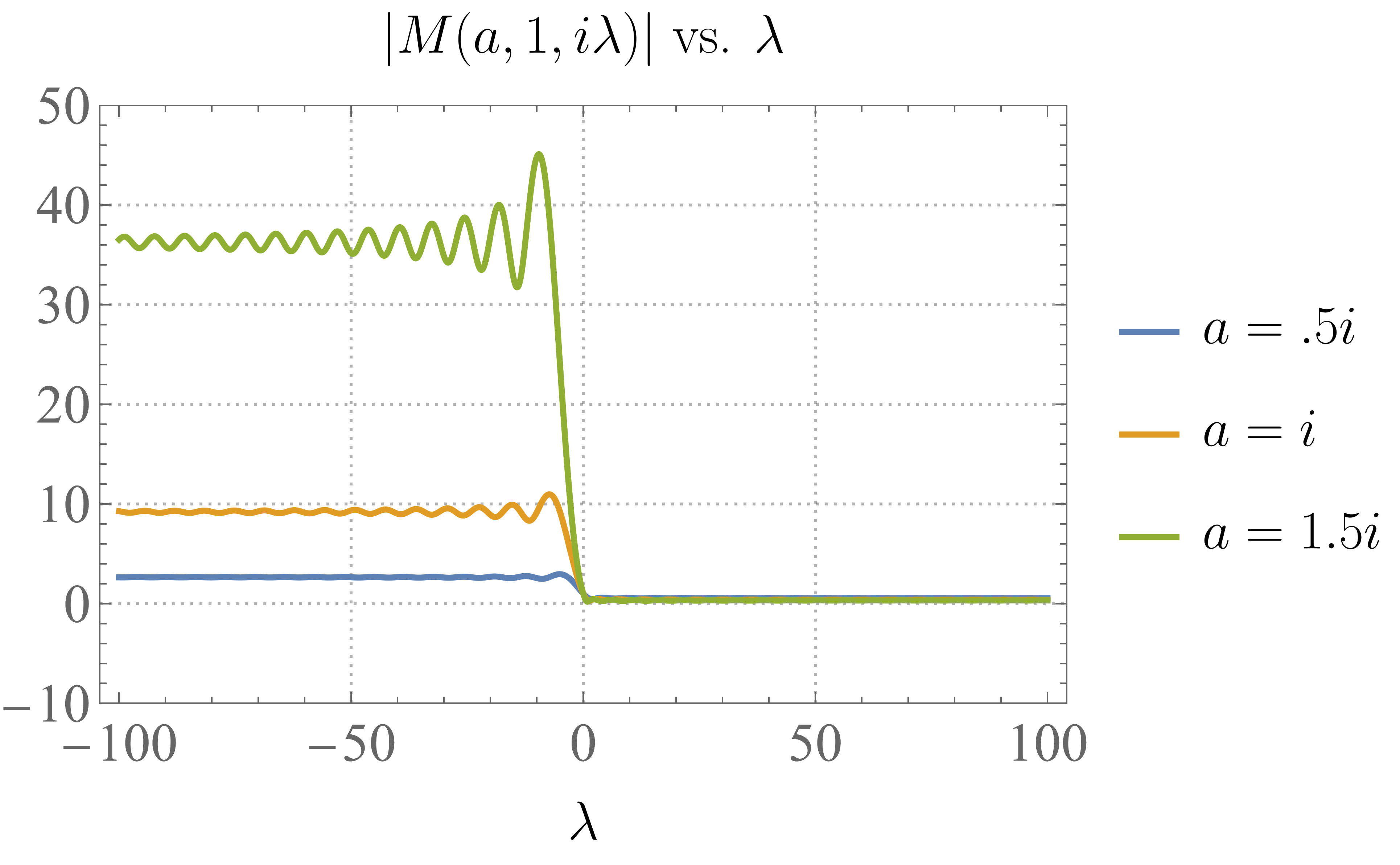}

    \caption{A plot of the absolute value of $M(a,1,i\lambda)$ versus $\lambda\in \bbR$, for $a\in i \bbR$. \Cref{lem:Kummer_bound} says that, for each individual $a$, we have an $L^\infty(\bbR_\lambda)$ bound on $M(a,1,i\lambda)$, with uniformity in $a$ as long as $\Im a$ stays bounded. In \S\ref{sec:multi-Coulomb}, we care about the $a\to \pm i0$ limit. What we want to emphasize here is that there exist $L^\infty(\bbR_\lambda)$-bounds uniform in this limit.}
    \label{fig:placeholder}
\end{figure}

When $a=0$, the $M$-function satisfies $M(a,1,i\lambda)=1$ identically \cite[\href{http://dlmf.nist.gov/13.2.E2}{Eq. 13.2.2}]{NIST}. It is thus unsurprising that the $L^\infty$-bounds that hold for $a\neq 0$ apply uniformly as $a\to 0$.

\printbibliography

\end{document}